\newtheorem{theorem}{Theorem}
\newtheorem{example}{Example}
\newtheorem{lemma}{Lemma}
\newtheorem{definition}{Definition}
\newtheorem{proposition}{Proposition}
\def\psfancypar#1#2{\begingroup\def\par{\endgraf\endgroup\lineskiplimit=0pt}
               \setbox2=\hbox{\large\sc #2}
               \newdimen\tmpht \tmpht \ht2 \advance\tmpht by \baselineskip
               \font\hhuge=Times-Bold at \tmpht
               \setbox1=\hbox{{\hhuge #1}}
               \count7=\tmpht \count8=\ht1
               \divide\count8 by 1000 \divide\count7 by \count8 
               \tmpht=.001\tmpht\multiply\tmpht by \count7 
               \font\hhuge=Times-Bold at \tmpht
               \setbox1=\hbox{{\hhuge #1}}
               \noindent
                \hangindent1.05\wd1
               \hangafter=-2 {\hskip-\hangindent
               \lower1\ht1\hbox{\raise1.0\ht2\copy1}%
                \kern-0\wd1}\copy2\lineskiplimit=-1000pt}
\newcommand{\beq}{\begin{equation}}
\newcommand{\eeq}{\end{equation}}
\newcommand{\bqa}{\begin{eqnarray}}
\newcommand{\eqa}{\end{eqnarray}}
\newcommand{\bqn}{\begin{eqnarray*}}
\newcommand{\eqn}{\end{eqnarray*}}
\newcommand{\nn}{\nonumber}
\newcommand{\be}{\begin{enumerate}}
\newcommand{\ee}{\end{enumerate}}
\newcommand{\bi}{\begin{itemize}}
\newcommand{\ei}{\end{itemize}}
\newcommand{\bd}{\begin{description}}
\newcommand{\ed}{\end{description}}
\newcommand{\ba}{\begin{array}}
\newcommand{\ea}{\end{array}}
\newcommand{\bde}{\begin{definition}}
\newcommand{\ede}{\end{definition}}
\newcommand{\bex}{\begin{example}}
\newcommand{\eex}{\end{example}}
\def\boxit#1{\vbox{\hrule\hbox{\vrule\kern3pt
        \vbox{\kern3pt#1\kern3pt}\kern3pt\vrule}\hrule}}
\def\reals{ { {\rm  I \kern-0.15em R }  } }
\def\complex{ {\,{{\rm C} \kern-0.50em \raise0.20ex {  |}}\, }}
\def\0bf{{\bf 0}}
\def\1bf{{\bf 1}}
\def\2bf{{\bf 2}}
\def\3bf{{\bf 3}}
\def\4bf{{\bf 4}}
\def\5bf{{\bf 5}}
\def\6bf{{\bf 6}}
\def\7bf{{\bf 7}}
\def\8bf{{\bf 8}}
\def\9bf{{\bf 9}}
\def\ubf{{\bf u}}
\def\vbf{{\bf v}}
\def\xbf{{\bf x}}
\def\ybf{{\bf y}}
\def\xbf{{\bf x}}
\def\ybf{{\bf y}}
\def\Rbf{{\bf R}}
\def\Ubf{{\bf U}}
\def\Vbf{{\bf V}}
\def\Ybf{{\bf Y}}
\def\Kmat{\mathcal{K}}
\def\Mmat{\mathcal{M}}
\def\Qmat{\mathcal{Q}}
\def\Rmat{\mathcal{R}}
\def\Xmat{\mathcal{X}}
\def\Ymat{\mathcal{Y}}
\def\Rxx{\Rbf_{\ssstyle X\kern-.1em X}}
\let\ssstyle=\scriptscriptstyle
\def\Kout{\setbox1=\hbox{\Huge\bf K}\hbox to
1.05\wd1{\hspace{.05\wd1}
\def\Sout{\setbox1=\hbox{\Huge\bf S}\hbox to 1.05\wd1{\hspace{.05\wd1}

\def\HOME{/home/bichen}
\def\scalefig#1{\epsfxsize #1\textwidth}


\title{Capacity Bounds for Broadcast Channels with Confidential Messages}
\author{\authorblockN{Jin Xu, Yi Cao, and Biao Chen\thanks{
The authors are with Syracuse University, Department of Electrical
Engineering and Computer Science, Syracuse, NY 13244. Email: jxu11@syr.edu, ycao01@syr.edu, bichen@syr.edu.}}}

\begin{document}
\maketitle

\begin{abstract}
In this paper, we study capacity bounds for discrete memoryless
broadcast channels with confidential messages. Two private
messages as well as a common message are transmitted; the common
message is to be decoded by both receivers, while each private
message is only for its intended receiver. In addition, each private message
is to be kept secret from the unintended receiver where secrecy
is measured by equivocation. We propose both inner and outer bounds
to the rate equivocation region for broadcast channels with
confidential messages. The proposed inner bound generalizes
Csisz\'{a}r and K\"{o}rner's rate equivocation region for broadcast
channels with a single confidential message, Liu {\em et al}'s
achievable rate region for broadcast channels with perfect
secrecy, Marton's and Gel'fand and Pinsker's achievable rate region for general
broadcast channels. Our
proposed outer bounds, together with the inner bound, helps
establish the rate equivocation region of several classes of
discrete memoryless broadcast channels with confidential messages, including less noisy, deterministic, and semi-deterministic channels. Furthermore,
specializing to the general broadcast channel by removing the
confidentiality constraint, our proposed outer bounds reduce to
new capacity outer bounds for the discrete memory broadcast channel.

\end{abstract}

\section{Introduction}
With the increasingly widespread wireless devices and services,
the demand for reliable and secure communications is becoming more
urgent due to the broadcast nature of wireless communication.
Existing systems typically rely on key-based encryption schemes:
the intended transceiver pair share a private key which is unknown
to any unintended users. Assuming ideal transmission of encrypted
messages, Shannon in his 1949 landmark paper
\cite{Shannon:1949} proved, using information theoretic argument,
a surprising result: security is guaranteed
only if the key size is at least as long as the source message.
While this establishes provable security of the so-called one-time
pad system, the excessive requirement on the key size essentially
forebodes a negative result: any key-based encryption scheme
is almost always not provably secure as the key size requirement forbids
dynamic key exchange.
This result motivates many secure communication scheme where
provable security is sacrificed in favor of computational security;
however, this notion of security relies on
unproven intractability hypotheses. For instance, the security of
RSA \cite{RSA:78ACM} is based on the unproven difficulty of
factoring large integers.

Wyner in his seminal work in 1975 \cite{Wyner:75} demonstrated
that, for noisy channels, provable secure communication (in the
same sense as that of Shannon) can be achieved by exploring
information theoretic limits at the physical layer.
Wyner introduced the so-called wiretap channel which is
in essence a degraded broadcast channel and characterized its capacity-secrecy
tradeoff. It was shown that, through the use of
stochastic encoding, perfect secrecy is
possible in the absence of a secret key. Later, Csisz\'{a}r and
K\"{o}rner generalized Wyner's result \cite{Csiszar&Korner:78IT}
by considering a non-degraded discrete memoryless broadcast channel
(DMBC) with a single confidential message for one of the users and
a common message for both users. Following the approach of
\cite{Wyner:75} and \cite{Csiszar&Korner:78IT},
information-theoretic limits of secret communications for several
different wireless networks have been investigated, including
multi-user systems with confidential
messages \cite{Oohama:01ITW,Csiszar&Narayan:04IT,Tekin&Yener:06ISIT,Liang&Poor:06IT,Liu-etal:06Allerton,Lai&ElGamal:06IT,Tekin&Yener:07ITA,Tekin&Yener:08IT},
secret communication over fading channels
\cite{Liang&Poor&Shamai:06IT,Gopala&Lai&Elgamal:07ISIT} and MIMO
wiretap channels
\cite{Li&Trappe&Yates:07CISS,Khisti-etal:07ISIT,Shafiee&Liu&Ulukus:07IT}.

In this work, we generalize Csisz\'{a}r and K\"{o}rner's model by
considering discrete memoryless broadcast channels
where both receivers have their own private messages as well as a
common message to decode. We refer to this model as simply DMBC
with two confidential messages (DMBC-2CM).
The DMBC-2CM model was first studied by Liu, Maric, Spasojevic, and Yates
\cite{Liu-etal:06Allerton,Liu-etal:08IT} where, in the absence of a common message,
the authors imposed the perfect secrecy constraint and obtained
inner and outer bounds for the perfect secrecy
capacity region.

In this paper, we study capacity bounds to the rate equivocation region
for the general DMBC-2CM. Our model generalizes that of \cite{Liu-etal:08IT}
by including a common message. More importantly, we do not impose the perfect
secrecy constraint and study instead the general trade-off among rates for
reliable communication and secrecy for confidential messages. Study of this
general model allows us to unify many existing results. Both inner and outer
bounds are proposed for the general DMBC-2CM.
The proposed achievable rate region
generalizes Csisz\'{a}r and K\"{o}rner's capacity rate
region in \cite{Csiszar&Korner:78IT} where only a single
confidential message is to be communicated, Liu {\em et al}'s achievable rate
region under perfect secrecy constraint \cite{Liu-etal:08IT}, and Marton and Gel'fand-Pinsker's
achievable rate region for general broadcast channels \cite{Marton:79IT,Gelfand&Pinsker:80PIT}.
The proposed outer bounds to the rate equivocation region of a DMBC-2CM also
encompass existing outer bounds for various special cases of the DMBC-2CM. In
particular, it reduces to Csisz\'{a}r and K\"{o}rner's rate equivocation region
for DMBC with only one confidential message and Liu {\em et la}'s outer bound
to the capacity region with perfect secrecy. The proposed inner and outer bounds
coincide with each other for the less noisy, deterministic, and semi-deterministic DMBC-2CM, which
settle the rate equivocation region for these channels.
Furthermore, in the absence of secrecy constraints, our proposed
outer bounds specialize to new outer bounds to the capacity region
of the general DMBC. Comparison with existing outer bounds in
\cite{Marton:79IT, Nair&ElGamal:07IT,Liang&Kramer:07IT,Liang-etal:08ITW}
will be discussed.

The rest of the paper is organized as follows. In Section II, we give the
channel model and review relevant existing results. In Section III, we
present an achievable rate equivocation region for our channel model and show
that it coincides with various existing results under respective
conditions. In section IV, we present outer bounds to the rate
equivocation region of DMBC-2CM. We prove that the outer bound
is tight for the less noisy, deterministic, and semi-deterministic DMBC-2CM.
We also discuss the induced outer bound to the general DMBC and its
subset relations with existing capacity outer bounds.
Finally, we conclude in Section V.

\section{Problem Formulation and Previous Results}

\subsection{Problem Statement \label{sec:model}}

A discrete memoryless broadcast channel with confidential messages
$\Kmat$ is a quadruple $(\Xmat, p, \Ymat_1, \Ymat_2)$, where
$\Xmat$ is the finite input alphabet set, $\Ymat_1$ and $\Ymat_2$
are two finite output alphabet sets, and $p$ is the channel transition
probability $p(y_1,y_2|x)$. We assume that the channels are
memoryless, i.e.,
\bqa
p(\ybf_1,\ybf_2|\xbf)=\prod_{i=1}^{n}p(y_{1i},y_{2i}|x_{i})
\eqa
where,
\bqa
\xbf&=&(x_{1},\cdot\cdot\cdot,x_{n})\in \Xmat^{n},\\
\ybf_1&=&(y_{11},\cdot\cdot\cdot,y_{1n})\in\Ymat_{1}^{n}\\
\ybf_2&=&(y_{21},\cdot\cdot\cdot,y_{2n})\in\Ymat_{2}^{n}
\eqa
Let $\Mmat_0=\{1,2,\cdot\cdot\cdot,M_0\}$ be the common message set, $\Mmat_1=\{1,2,\cdot\cdot\cdot,M_1\}$ and
$\Mmat_2=\{1,2,\cdot\cdot\cdot,M_2\}$ be user 1 and user 2's
private message sets, and $W_0,W_1,W_2$ are the respective message
variables on the sets $\Mmat_0,\Mmat_1,\Mmat_2$.
We assume stochastic encoding as
randomization may increase secrecy \cite{Csiszar&Korner:78IT}. A stochastic
encoder $f$ with block length $n$ for $\Kmat$ is specified by
$f(\xbf|w_1,w_2,w_0)$, where $\xbf\in\Xmat^{n}$, $w_1\in \Mmat_1$, $w_2\in \Mmat_2$,
$w_0\in \Mmat_0$ and
\bqa
\sum_{\xbf}f(\xbf|w_1, w_2,w_0)=1.
\eqa
Here $f(\xbf|w_1,w_2,w_0)$
is the probability that the message triple $(w_1,w_2,w_0)$ is
encoded as the channel input $\xbf$. Our model involves two
decoders, i.e., a pair of mappings
\bqn
\varphi_1:&&\Ymat_1^{n}\rightarrow \Mmat_1\times\Mmat_0,\\
\varphi_2:&&\Ymat_2^{n}\rightarrow \Mmat_2\times\Mmat_0.
\eqn
The average probabilities of decoding error of this channel are defined as
\bqa
P_{e,1}^{(n)}\stackrel{\triangle}{=}  \frac{1}{M_1M_2M_0}\sum_{w_1,w_2,w_0}P(\{\varphi_1(\ybf_1)\neq (w_1,w_0)\} |
(w_1,w_2,w_0) \mbox{ sent}), \\
P_{e,2}^{(n)}\stackrel{\triangle}{=}  \frac{1}{M_1M_2M_0}\sum_{w_1,w_2,w_0}P(\{\varphi_2(\ybf_2)\neq (w_2,w_0)\} |
(w_1,w_2,w_0) \mbox{ sent}).
\eqa
A rate quintuple $(R_1, R_2, R_0, R_{e1}, R_{e2})$ is said to be achievable if
there exist message sets $\Mmat_1$, $\Mmat_2$, $\Mmat_0$ and
encoder-decoders $(f, \varphi_1, \varphi_2)$ such that
$P_{e,1}^{n}\rightarrow 0$ and $P_{e,2}^{n}\rightarrow 0$, where
for $a=0,1,2$
\bqa \lim_{n\rightarrow
\infty}\frac{1}{n}\log ||\Mmat_a||&=&R_a\\
\lim_{n\rightarrow \infty}\frac{1}{n}H(W_1|\Ybf_2)&\geq&
R_{e1}\\
\lim_{n\rightarrow \infty}\frac{1}{n}H(W_2|\Ybf_1)&\geq& R_{e2}
\eqa
The rate equivocation region of the DMBC-2CM is the closure of
union of all achievable rate quintuples
$(R_0,R_1,R_2,R_{e1},R_{e2})$. Our objective in this paper is
to obtain meaningful bounds to the rate equivocation region for DMBC-2CM.

The DMBC-2CM model is illustrated in Fig.~\ref{fig:diagram}.
We note that in the absence of $W_2$, the model reduces to
Csisz\'{a}r and K\"{o}rner's model with only one confidential message \cite{Csiszar&Korner:78IT}. On the other hand, in the absence
of confidentiality constraints (i.e., $H(W_1|\Ybf_2)$ and $H(W_2|\Ybf_1)$),
our model reduces to the classical DMBC with two private messages
and one common message.

\begin{figure}[htb]
\centerline{
\begin{psfrags}
\psfrag{encoder}[c]{Encoder}
\psfrag{channel 1}[c]{Channel 1}
\psfrag{channel 2}[c]{Channel 2}
\psfrag{decoder 1}[c]{Decoder 1}
\psfrag{decoder 2}[c]{Decoder 2}
\psfrag{f}[c]{$f(\xbf|W_0W_1W_2)$}
\psfrag{w0}[c]{$W_0$}
\psfrag{w1}[c]{$W_1$}
\psfrag{w2}[c]{$W_2$}
\psfrag{pm}[c]{$W_0$}
\psfrag{p1}[c]{{$p{\left(y_1|x\right)}$}}
\psfrag{p2}[c]{{$p{\left(y_2|x\right)}$}}
\psfrag{phi1}[c]{$\varphi_1$} \psfrag{phi2}[c]{$\varphi_2$}
 \psfrag{w10}[c]{$(\hat{W_1}, \hat{W_0})$}
\psfrag{H1}[c]{$\hspace{-2mm}H(W_2|\ybf_1)$}
 \psfrag{w20}[c]{$\hspace{-1mm}(\hat{W_2}, \hat{W_0})$}
\psfrag{H2}[c]{$\hspace{-2mm}H(W_1|\ybf_2)$}
\scalefig{.9}\epsfbox{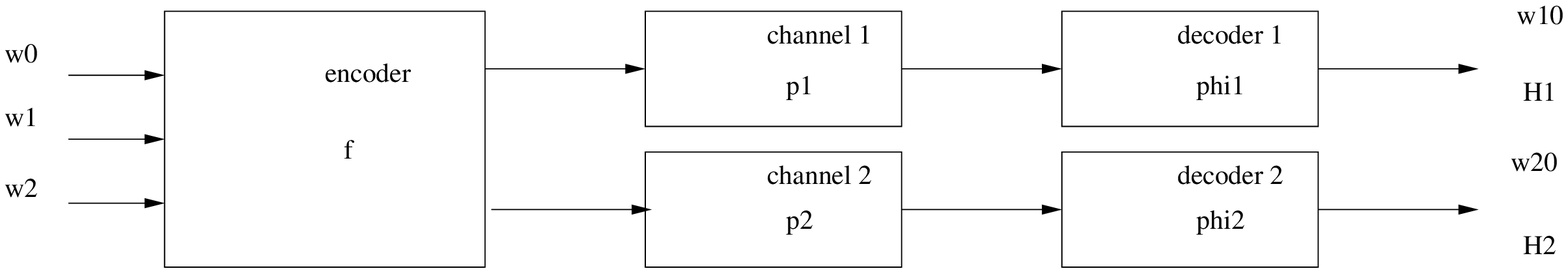}
\end{psfrags}}
\caption{\label{fig:diagram} Broadcast channel with two confidential
messages $W_1, W_2$ and one common message $W_0$}
\end{figure}

Before proceeding, we introduce the following definitions. Let
$Z=(U,V_1,V_2,X,Y_1,Y_2)$ be a set of random variables such that
$X\in\Xmat$, $Y_1\in \Ymat_1$, $Y_2\in \Ymat_2$, and the corresponding $p(y_1,y_2|x)$ is
the channel transition probability of the DMBC-2CM. Define
\bi
\item $\Qmat_1$ to be the set of $Z$ whose joint distribution factors as
\[
p(u,v_1,v_2,x,y_1,y_2)=p(u,v_1,v_2)p(x|u,v_1,v_2)p(y_1,y_2|x).
\]
Thus any $Z\in \Qmat_1$ satisfies the Markov chain condition $UV_1V_2\rightarrow X\rightarrow Y_1Y_2$.
\item $\Qmat_2$ to be the set of $Z$ whose joint distribution factors as
\[
p(u,v_1,v_2,x,y_1,y_2)=p(u)p(v_1,v_2|u)p(x|v_1,v_2)p(y_1,y_2|x);
\]
Thus any $Z\in\Qmat_2$ satisfies the Markov chain condition $U\rightarrow V_1V_2\rightarrow X\rightarrow Y_1Y_2$.
\item $\Qmat_3$ to be the set of $Z$ whose joint distribution factors as
\[
p(u,v_1,v_2,x,y_1,y_2)=p(v_1)p(v_2)p(u|v_1,v_2)p(x|u,v_1,v_2)p(y_1,y_2|x).
\]
$\Qmat_3$ results in the same Markov chain as $\Qmat_1$ except that $V_1$ and $V_2$ 
are independent of each other.
\ei
Clearly, $\Qmat_2\subseteq \Qmat_1$ and $\Qmat_3\subseteq \Qmat_1$.


\subsection{Related Work}

In the section, we review several existing results related to the present work.

Csisz\'{a}r and K\"{o}rner characterized the rate equivocation region
\cite{Csiszar&Korner:78IT} for broadcast channel with a common message
for both users and a single confidential message intended for one
of the two users. Without loss of generality (WLOG), we assume $W_2$ is absent
from our model. The result is summarized below.
\begin{proposition}\label{prop:Csiszar&Korner} \cite[Theorem 1]{Csiszar&Korner:78IT}
The rate equivocation region $\Rmat_{CK}$ for a DMBC with one common
message for both receivers and a single confidential message for the first
receiver is a closed convex set
consisting of those triples $(R_1, R_e, R_0)$ for which there
exist random variables $U\rightarrow
V\rightarrow X\rightarrow Y_1Y_2$ such that
\bqa 0&\leq& R_e\leq R_1
\label{eq:ck1}\\
R_e&\leq& I(V;Y_1|U)-I(V;Y_2|U)\\
R_1+R_0&\leq& I(V;Y_1|U)+\min [I(U;Y_1), I(U;Y_2)]\\
R_0&\leq& \min[I(U;Y_1), I(U;Y_2)] \label{eq:ckn}
\eqa
\end{proposition}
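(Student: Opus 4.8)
The plan is to establish achievability of every $(R_1,R_e,R_0)$ satisfying the four inequalities and then a matching converse. For achievability I would fix auxiliary variables with $p(u)p(v|u)p(x|v)$ (realizing $U\rightarrow V\rightarrow X\rightarrow Y_1Y_2$), split the confidential message as $W_1=(W_1',W_1'')$ of rates $R_1',R_1''$, and combine superposition coding with stochastic binning. Generate $2^{n(R_0+R_1')}$ cloud centers $u^n(w_0,w_1')$ i.i.d.\ according to $\prod p(u)$, and for each of them $2^{n\tilde R}$ satellite codewords $v^n$ i.i.d.\ according to $\prod p(v|u)$ with $\tilde R=I(V;Y_1|U)-\epsilon$; partition the satellite indices into $2^{nR_1''}$ equal bins. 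To send $(w_0,w_1',w_1'')$ the encoder picks a satellite index $j$ uniformly at random from bin $w_1''$ and transmits $x^n$ drawn from $\prod p(x|v)$ using $v^n(w_0,w_1',j)$. Joint-typicality decoding recovers at Receiver~1 both indices when $R_0+R_1'<I(U;Y_1)$ and $\tilde R<I(V;Y_1|U)$, and at Receiver~2 the cloud center, hence $w_0$, when $R_0+R_1'<I(U;Y_2)$; letting $R_0+R_1'$ approach $\min[I(U;Y_1),I(U;Y_2)]$ and $R_1''$ approach $\tilde R$ yields the common-message bound and the sum bound, and $R_e\le R_1''\le R_1$ yields $0\le R_e\le R_1$.

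The delicate step is the equivocation. Since Receiver~2 decodes $(W_0,W_1')$, $\frac{1}{n}H(W_1|\Ybf_2)=\frac{1}{n}H(W_1''|\Ybf_2)+o(1)$; writing $H(W_1''|\Ybf_2)=[\,n\tilde R-I(J;\Ybf_2)\,]-H(J|W_1'',\Ybf_2)$ with $J$ the satellite index and $W_1''$ a deterministic function of $J$, the random codebook forces $\frac{1}{n}I(J;\Ybf_2)\le\min(\tilde R,I(V;Y_2|U))+\epsilon$ (the wiretapper can learn at most $I(V;Y_2|U)$ bits about $J$), while choosing the bin rate so that $\tilde R-R_1''\le I(V;Y_2|U)$ makes $H(J|W_1'',\Ybf_2)=o(n)$ by Fano's inequality (given $W_1''$, Receiver~2 can locate $J$ within its bin). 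This gives $\frac{1}{n}H(W_1|\Ybf_2)\ge\min\bigl(R_1'',\,I(V;Y_1|U)-I(V;Y_2|U)\bigr)-o(1)$, i.e.\ the equivocation bound $R_e\le I(V;Y_1|U)-I(V;Y_2|U)$.

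For the converse, take the messages uniform and independent; Fano's inequality gives $H(W_0W_1|\Ybf_1)\le n\epsilon_n$ and $H(W_0|\Ybf_2)\le n\epsilon_n$. I would use the single-letter identification $U_i=(W_0,Y_1^{i-1},Y_{2,i+1}^{n})$ and $V_i=(W_1,U_i)$, for which the chain $U_i\rightarrow V_i\rightarrow X_i\rightarrow Y_{1i}Y_{2i}$ holds by the memoryless property. The common-message bound $R_0\le\min[I(U;Y_1),I(U;Y_2)]$ and $R_e\le R_1$ are routine consequences of Fano's inequality; for the equivocation one first shows $H(W_1|\Ybf_2)\le I(W_1;\Ybf_1|W_0)-I(W_1;\Ybf_2|W_0)+n\epsilon_n$ and then invokes the Csisz\'{a}r sum identity to rewrite the difference \emph{exactly} as $\sum_i[\,I(V_i;Y_{1i}|U_i)-I(V_i;Y_{2i}|U_i)\,]$; the same telescoping manipulation, together with $W_0\perp W_1$ and the fact that both receivers decode $W_0$, gives $n(R_1+R_0)\le\sum_i I(V_i;Y_{1i}|U_i)+\min[\,\sum_i I(U_i;Y_{1i}),\,\sum_i I(U_i;Y_{2i})\,]+n\epsilon_n$. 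Introducing a time-sharing variable $Q$ uniform on $\{1,\dots,n\}$ and setting $U=(U_Q,Q)$, $V=(V_Q,Q)$ converts every sum into the stated single-letter expression; convexity follows from time-sharing, and closedness together with cardinality bounds on $U,V$ from the usual support lemma.

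I expect the main obstacle to be the equivocation analysis on both sides: on the achievability side, turning the heuristic that ``the wiretapper learns at most $I(V;Y_2|U)$ bits about the satellite index'' into the rigorous estimate $\frac{1}{n}I(J;\Ybf_2)\le\min(\tilde R,I(V;Y_2|U))+\epsilon$; and on the converse side, the careful bookkeeping inside the Csisz\'{a}r sum identity needed to produce precisely $I(V;Y_1|U)-I(V;Y_2|U)$ and to bring the $\min$ into the sum-rate bound.
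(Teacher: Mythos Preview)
Your proposal is correct and follows the standard Csisz\'{a}r--K\"{o}rner argument. Note, however, that the paper does not give its own proof of this proposition: it is stated as a citation of \cite[Theorem~1]{Csiszar&Korner:78IT} and recovered later as the special case $R_2=R_{e2}=0$ of the paper's Theorems~\ref{thm:achievable} and~\ref{thm:outerbound}. The proofs of those general theorems (Appendices~\ref{sec:proofoftheorem1} and~\ref{sec:proofoftheorem2}) use exactly the ingredients you describe---rate splitting of the private message into a ``common'' part carried by $U$ and a ``private'' part carried by $V$, superposition, stochastic binning for equivocation, and on the converse side the identification $U_i=(W_0,Y_1^{i-1},\tilde Y_2^{i+1})$, $V_i=(W_1,U_i)$ together with the Csisz\'{a}r sum identity and a time-sharing variable---so your argument is essentially the single-message specialization of what the paper does.
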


We note that the Markov chain condition in Proposition \ref{prop:Csiszar&Korner}
can be relaxed, as stated below.
\begin{lemma} \label{lem:ck}
Define $\Rmat_{CK}'$ to be the convex closure of rate triples  $(R_1, R_e, R_0)$
that satisfy (\ref{eq:ck1})-(\ref{eq:ckn}) where the random variables follow
the Markov chain: $UV\rightarrow X \rightarrow Y_1Y_2$, then
\beq
\Rmat_{CK}=\Rmat_{CK}'
\eeq
\end{lemma}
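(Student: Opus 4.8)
The statement follows from two inclusions, and the plan is to establish each in turn. The inclusion $\Rmat_{CK}\subseteq\Rmat_{CK}'$ is immediate: any collection of random variables obeying the long Markov chain $U\rightarrow V\rightarrow X\rightarrow Y_1Y_2$ of Proposition~\ref{prop:Csiszar&Korner} a fortiori obeys $UV\rightarrow X\rightarrow Y_1Y_2$, so every triple $(R_1,R_e,R_0)$ admissible under Proposition~\ref{prop:Csiszar&Korner} is also admissible in the definition of $\Rmat_{CK}'$, and taking convex closures on both sides preserves this containment.

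For the reverse inclusion $\Rmat_{CK}'\subseteq\Rmat_{CK}$, I would use a relabeling of the auxiliary variables. Take any $(U,V,X,Y_1,Y_2)$ with $UV\rightarrow X\rightarrow Y_1Y_2$ that attains a triple $(R_1,R_e,R_0)$ through (\ref{eq:ck1})--(\ref{eq:ckn}), and define new auxiliaries $U'=U$ and $V'=(U,V)$. First I would check that $U'\rightarrow V'\rightarrow X\rightarrow Y_1Y_2$ is a genuine Markov chain: the first link holds trivially because $U'$ is a deterministic function of $V'$, so $p(x|u',v')=p(x|v')$; the second link, $p(y_1,y_2|v',x)=p(y_1,y_2|x)$, is exactly the relaxed hypothesis $UV\rightarrow X\rightarrow Y_1Y_2$ rewritten, since $V'=(U,V)$. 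Hence $(U',V',X,Y_1,Y_2)$ is a valid choice of auxiliary variables for Proposition~\ref{prop:Csiszar&Korner}.

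Second, I would verify that none of the mutual-information quantities in (\ref{eq:ck1})--(\ref{eq:ckn}) changes under this substitution. Since $U'=U$, we have $I(U';Y_1)=I(U;Y_1)$ and $I(U';Y_2)=I(U;Y_2)$; and because conditioning on $U'=U$ already fixes $U$, the chain rule gives $I(V';Y_j|U')=I(U,V;Y_j|U)=I(V;Y_j|U)$ for $j=1,2$. Therefore the right-hand sides of all four constraints are literally unchanged, so the same triple $(R_1,R_e,R_0)$ is attained with auxiliaries obeying the stricter Markov chain. Every such triple in the region whose convex closure defines $\Rmat_{CK}'$ thus lies in $\Rmat_{CK}$, and taking convex closures yields $\Rmat_{CK}'\subseteq\Rmat_{CK}$; combined with the first inclusion this gives $\Rmat_{CK}=\Rmat_{CK}'$.

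I do not expect a real obstacle here — the lemma is essentially a bookkeeping fact. The only point requiring care is confirming that enlarging the first auxiliary to $V'=(U,V)$ does not destroy the long Markov chain $V'\rightarrow X\rightarrow Y_1Y_2$, which is precisely what the relaxed hypothesis $UV\rightarrow X\rightarrow Y_1Y_2$ supplies. A secondary, optional point is that if one wants $\Rmat_{CK}'$ generated by finite-alphabet auxiliaries, one can re-impose a cardinality bound on $V'$ by the usual Fenchel--Eggleston--Carath\'{e}odory argument; this is orthogonal to the equality of the two regions.
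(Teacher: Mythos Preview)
Your proposal is correct and mirrors the paper's own proof almost exactly: both establish $\Rmat_{CK}\subseteq\Rmat_{CK}'$ from the trivial implication between Markov chains, and both obtain the reverse inclusion by setting $U'=U$, $V'=(U,V)$ and noting that the constraints (\ref{eq:ck1})--(\ref{eq:ckn}) are unchanged. The paper simply asserts the verification step that you spell out in detail.
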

\begin{proof} $\Rmat_{CK}\subseteq \Rmat_{CK}'$ follows trivially from the fact that
$U\rightarrow V\rightarrow X \rightarrow Y_1Y_2$ implies 
$UV\rightarrow X \rightarrow Y_1Y_2$.
To prove $\Rmat_{CK}'\subseteq \Rmat_{CK}$, assume $(R_1, R_e, R_0)\in \Rmat_{CK}'$
for some $UV\rightarrow X \rightarrow Y_1Y_2$. Define $U'=U$ and $V'=UV$, one can verify
easily that $(R_1, R_e, R_0)$ satisfies (\ref{eq:ck1})-(\ref{eq:ckn}) for
$U'\rightarrow V'\rightarrow X\rightarrow Y_1Y_2$, i.e., $(R_1,R_e,R_0) \in \Rmat_{CK}$.
\end{proof}

Recently, Liu {\em et al} proposed an inner bound and an outer bound
to the capacity region for broadcast channels with perfect-secrecy constraint on the confidential messages \cite{Liu-etal:06Allerton,Liu-etal:08IT}.
The model in \cite{Liu-etal:06Allerton,Liu-etal:08IT} is in essence a DMBC-2CM without the common message. In their model, each user has its own confidential message that is to
be completely protected from the other user.
The proposed achievable region and outer bound are given in Propositions
\ref{prop:Liu_i} and \ref{prop:Liu_o}, respectively.
\begin{proposition}\label{prop:Liu_i} \cite[Theorem 4]{Liu-etal:08IT}
Let $\Rmat_{LMSY-I}$ denote the union of all $(R_1, R_2)$
satisfying
\beq
\begin{array}{ll} 0\leq R_1\leq
I(V_1;Y_1|U)-I(V_1;Y_2|V_2U)-I(V_1;V_2|U)\\
0\leq R_2\leq I(V_2;Y_2|U)-I(V_2;Y_1|V_1U)-I(V_1;V_2|U)
\end{array}
\eeq
over all random variables  $(U,V_1,V_2,X, Y_1,Y_2) \in \Qmat_2$.
Any rate pair $(R_1, R_2)\in \Rmat_{LMSY-I}$ is achievable
for DMBC-2CM without common message and with perfect
secrecy for the confidential messages, i.e., $R_0=0$, $R_1=R_{e1}$, and $R_2=R_{e2}$.
\end{proposition}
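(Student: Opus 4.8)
\medskip
\noindent\emph{Proof sketch.}\quad The plan is to combine Marton coding for the broadcast channel with Wyner-type stochastic binning for secrecy. Fix a distribution $p(u)p(v_1,v_2|u)p(x|v_1,v_2)\in\Qmat_2$ (restricting to those for which the two right-hand sides in the statement are nonnegative, the only case of interest), fix a small $\delta>0$, and set the subcodebook (randomization) rates $T_1=I(V_1;V_2|U)+I(V_1;Y_2|V_2U)+\delta$ and $T_2=I(V_1;V_2|U)+I(V_2;Y_1|V_1U)+\delta$. The splitting of each $T_a$ into a ``Marton part'' $I(V_1;V_2|U)$ and a ``secrecy part'' is the crux of the construction: the Marton part supplies enough codewords for the mutual covering lemma to succeed at the encoder, while the secrecy part floods the unintended receiver. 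I would generate a single typical $u^n\sim\prod_ip(u_i)$ shared with both decoders (no common-message index is needed, as $R_0=0$), and, for each $w_1\in[1:2^{nR_1}]$, a subcodebook $\{v_1^n(w_1,\ell_1):\ell_1\in[1:2^{nT_1}]\}$ with entries i.i.d.\ $\sim\prod_ip(v_{1i}|u_i)$, and likewise $\{v_2^n(w_2,\ell_2)\}$ for each $w_2$. To send $(w_1,w_2)$, pick $(\ell_1,\ell_2)$ uniformly among all pairs with $(u^n,v_1^n(w_1,\ell_1),v_2^n(w_2,\ell_2))\in\mathcal{T}_\epsilon^{(n)}$, draw $x^n\sim\prod_ip(x_i|v_{1i},v_{2i})$, and transmit; decoder $a$ outputs the bin index of the unique $v_a^n$ in its codebook jointly typical with $(u^n,y_a^n)$.

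First I would carry out the routine reliability analysis. The mutual covering lemma makes the encoding step succeed w.h.p.\ since $T_1+T_2>I(V_1;V_2|U)$, and conditional typicality then puts $(u^n,V_1^n,V_2^n,X^n,Y_1^n,Y_2^n)$ in $\mathcal{T}_\epsilon^{(n)}$ w.h.p., so the transmitted $v_a^n$ survives decoder $a$'s test, while each of the $2^{n(R_a+T_a)}$ incorrect codewords (independent of $Y_a^n$ given $u^n$) survives with probability at most $2^{-n(I(V_a;Y_a|U)-\delta(\epsilon))}$; a union bound gives vanishing error when $R_a+T_a<I(V_a;Y_a|U)$, which upon substituting $T_1,T_2$ and letting $\delta\downarrow0$ is exactly the pair of rate bounds in the statement.

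The hard part is the equivocation analysis: showing $\frac{1}{n}I(W_1;\Ybf_2)\to0$ and, symmetrically, $\frac{1}{n}I(W_2;\Ybf_1)\to0$, averaged over the random code. Using $W_1\perp W_2$ one writes $I(W_1;\Ybf_2)\le I(W_1;\Ybf_2,V_2^n,W_2)=I(W_1;V_2^n|W_2)+I(W_1;\Ybf_2|V_2^n,W_2)$; handing receiver~2 the genie information $V_2^n$ is legitimate because it decodes $V_2^n$ reliably. The obstacle is that $V_2^n$ is a Marton codeword jointly typical with $V_1^n$, not the output of a memoryless channel applied to $V_1^n$, so no wiretap-equivocation lemma applies directly. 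I would instead condition on $V_2^n=v_2^n$: the Marton part $I(V_1;V_2|U)$ in $T_1$ guarantees (with concentration) about $2^{n(T_1-I(V_1;V_2|U))}$ codewords jointly typical with $v_2^n$ in \emph{each} bin $\mathcal{C}_1(w_1)$, which both makes the encoding-induced law of $V_2^n$ nearly uniform over its bin regardless of $w_1$ (so $\frac{1}{n}I(W_1;V_2^n|W_2)\to0$) and makes that restricted ensemble behave like an i.i.d.\ $\prod_ip(v_{1i}|v_{2i},u_i)$ string; feeding it through the memoryless $p(y_2|v_1,v_2)$ and using $T_1-I(V_1;V_2|U)=I(V_1;Y_2|V_2U)+\delta>I(V_1;Y_2|V_2U)$, a soft-covering (channel-resolvability) argument drives $\Ybf_2$'s conditional law close to $\prod_ip(y_{2i}|v_{2i})$ within each bin, whence a Csisz\'{a}r--K\"{o}rner-style entropy bookkeeping gives $\frac{1}{n}I(W_1;\Ybf_2|V_2^n,W_2)\to0$. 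The delicate point is making $T_1$ simultaneously large enough for mutual covering \emph{and} for flooding receiver~2 while keeping $R_1+T_1<I(V_1;Y_1|U)$ for receiver~1; that simultaneous budgeting is where I expect the work to concentrate.

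Finally, a standard selection step: $P_{e,1}^{(n)}$, $P_{e,2}^{(n)}$, $\frac{1}{n}I(W_1;\Ybf_2|\mathcal{C})$, and $\frac{1}{n}I(W_2;\Ybf_1|\mathcal{C})$ all have vanishing expectation over the code ensemble, so some deterministic code makes all four small; as $H(W_1)=nR_1$ and $H(W_2)=nR_2$, this yields $\frac{1}{n}H(W_1|\Ybf_2)\to R_1$ and $\frac{1}{n}H(W_2|\Ybf_1)\to R_2$, i.e.\ $(R_1,R_2,0,R_1,R_2)$ is achievable. Taking the union over $p\in\Qmat_2$ and letting $\delta\downarrow0$ recovers $\Rmat_{LMSY-I}$.
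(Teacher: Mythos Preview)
Your codebook construction and reliability analysis match the paper's exactly: the paper (via Theorem~\ref{thm:achievable} specialized to $R_0=0$, $R_{e1}=R_1$, $R_{e2}=R_2$) also places $2^{nL_{11}}$ message bins of $\Vbf_1$-codewords, each of size $2^{n(L_{12}+L_3)}$ with $L_{12}+L_3=I(V_1;Y_2|V_2U)+I(V_1;V_2|U)-\epsilon_1$, which is precisely your $T_1$, and uses the mutual-covering step and the decodability constraint $R_1+T_1<I(V_1;Y_1|U)$ in the same way.

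Where you diverge is the equivocation analysis. The paper does \emph{not} use soft covering. Instead it runs the classical Wyner--Csisz\'ar--K\"orner argument: lower-bound $H(W_1|\Ybf_2)\ge H(W_1|\Ybf_2,\Vbf_2,\Ubf)$, expand this as $H(\Vbf_1|\Ubf)-I(\Vbf_1;\Vbf_2|\Ubf)-I(\Vbf_1;\Ybf_2|\Vbf_2,\Ubf)-H(\Vbf_1|\Ybf_2,\Vbf_2,\Ubf,W_1)$, bound the first three terms by single-letter quantities, and kill the last term via Fano's inequality by observing that, given the bin index $W_1$ together with $(\Vbf_2,\Ubf,\Ybf_2)$, receiver~2 can reliably \emph{decode} $\Vbf_1$ since only $2^{n(L_{12}+L_3)}\le 2^{nI(V_1;V_2Y_2|U)}$ candidates remain. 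This sidesteps entirely the two issues you flag as delicate---showing $I(W_1;\Vbf_2|W_2)\to 0$ and arguing that the Marton-restricted ensemble behaves like an i.i.d.\ $p(v_1|v_2,u)$ source for resolvability---because it never needs either statement. Your route can be made to work and would in principle yield strong secrecy, but for the weak-secrecy claim in the proposition the paper's Fano-based bookkeeping is shorter and avoids the concentration arguments you anticipate.
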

\begin{proposition} \label{prop:Liu_o} \cite[Theorem 3]{Liu-etal:08IT}
An outer bound to the capacity region for the DMBC-2CM with perfect secrecy constraint
is the set of all $(R_1,R_2)$ satisfying 
\bqa
\label{eq.propliu1}0 \leq R_{1} \leq  \min[I(V_1;Y_1|U)-I(V_1;Y_2|U), I(V_1;Y_1|V_2U)-I(V_1;Y_2|V_2U)]\\
\label{eq.propliu2}0 \leq R_{2} \leq  \min[I(V_2;Y_2|U)-I(V_2;Y_1|U), I(V_2;Y_2|V_1U)-I(V_2;Y_1|V_1U)].
\eqa
 for some $(U,V_1,V_2,X, Y_1,Y_2) \in \Qmat_2$.  We denote by $\Rmat_{LMSY-O}$
this outer bound. 
\end{proposition}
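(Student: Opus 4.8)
The plan is a standard single-letter converse argument. Fano's inequality together with the perfect-secrecy constraint will give multi-letter upper bounds on $R_1$ and $R_2$; these are then single-letterized via the Csisz\'{a}r sum identity and a time-sharing variable. The only real design choice is the auxiliary random variables, which must be picked so that \emph{all four} inequalities in (\ref{eq.propliu1})--(\ref{eq.propliu2}) hold simultaneously at one element of $\Qmat_2$.

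First I would extract the consequences of reliability and perfect secrecy. Take $W_1,W_2$ independent and uniform (without loss of generality), so $H(W_a)=\log\|\Mmat_a\|$ and $\frac1n H(W_a)\to R_a$. Fano's inequality applied to $\varphi_1$ and $\varphi_2$ gives $H(W_1|\Ybf_1)\le n\epsilon_n$ and $H(W_2|\Ybf_2)\le n\epsilon_n$ with $\epsilon_n\to 0$, while $R_1=R_{e1}$ forces $\frac1n H(W_1|\Ybf_2)\to R_1$, hence $I(W_1;\Ybf_2)\le n\delta_n$ and symmetrically $I(W_2;\Ybf_1)\le n\delta_n$ with $\delta_n\to 0$. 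I would also record the conditioned estimate $I(W_1;\Ybf_2|W_2)=I(W_1;W_2,\Ybf_2)=I(W_1;\Ybf_2)+I(W_1;W_2|\Ybf_2)\le n\delta_n+H(W_2|\Ybf_2)\le n(\delta_n+\epsilon_n)$, using $W_1\perp W_2$ and Fano for $\varphi_2$, and likewise for $I(W_2;\Ybf_1|W_1)$. Expanding $H(W_1)=I(W_1;\Ybf_1)+H(W_1|\Ybf_1)$ and $H(W_1)=H(W_1|W_2)=I(W_1;\Ybf_1|W_2)+H(W_1|\Ybf_1,W_2)$ and subtracting the corresponding secrecy terms yields the four starting inequalities $nR_1\le I(W_1;\Ybf_1)-I(W_1;\Ybf_2)+n\epsilon'_n$, $nR_1\le I(W_1;\Ybf_1|W_2)-I(W_1;\Ybf_2|W_2)+n\epsilon'_n$, and the two obtained from these by swapping $1\leftrightarrow 2$ and $\Ybf_1\leftrightarrow\Ybf_2$, with $\epsilon'_n\to 0$.

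For the single-letterization I would fix $U_i=(\Ybf_1^{i-1},\Ybf_{2,i+1}^n)$, $V_{1i}=(W_1,U_i)$ and $V_{2i}=(W_2,U_i)$. Two applications of the Csisz\'{a}r sum identity (once with, once without conditioning on $W_1$) show $I(W_1;\Ybf_1)-I(W_1;\Ybf_2)=\sum_{i=1}^n[I(W_1;Y_{1i}|U_i)-I(W_1;Y_{2i}|U_i)]$, which equals $\sum_i[I(V_{1i};Y_{1i}|U_i)-I(V_{1i};Y_{2i}|U_i)]$; carrying out the identical computation everywhere conditioned on $W_2$ gives $I(W_1;\Ybf_1|W_2)-I(W_1;\Ybf_2|W_2)=\sum_i[I(V_{1i};Y_{1i}|V_{2i}U_i)-I(V_{1i};Y_{2i}|V_{2i}U_i)]$, where I use that $U_i$ is a subtuple of $V_{2i}$, so conditioning on $V_{2i}U_i$ coincides with conditioning on $W_2U_i$. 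The key observation is that the \emph{same} $U_i$ also serves the two $R_2$ inequalities, because $Y_1$ is always split off the front and $Y_2$ off the back of the respective sequences. I would then introduce a time-sharing random variable $Q$, uniform over $\{1,\dots,n\}$ and independent of everything else, and set $U=(Q,U_Q)$, $V_1=(Q,V_{1Q})$, $V_2=(Q,V_{2Q})$, $X=X_Q$, $Y_1=Y_{1Q}$, $Y_2=Y_{2Q}$; dividing the displayed identities by $n$ then reads off the single-letter right-hand sides of (\ref{eq.propliu1})--(\ref{eq.propliu2}) at this tuple, so each of the four bounds holds up to vanishing slack $\epsilon'_n$.

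It remains to verify $(U,V_1,V_2,X,Y_1,Y_2)\in\Qmat_2$ and to take the limit. The Markov chain $U\to V_1V_2\to X$ is immediate since $U$ is a deterministic function of $(V_1,V_2)$, and $V_1V_2\to X\to Y_1Y_2$ follows from the memoryless (and stationary) channel, since conditioned on $X_i$ the pair $(Y_{1i},Y_{2i})$ is independent of $(W_1,W_2,\Ybf_1^{i-1},\Ybf_{2,i+1}^n)$ and of $Q$. Finally, to serve a fixed achievable $(R_1,R_2)$ by a single fixed tuple rather than an $n$-dependent one, I would invoke the support lemma to cap the alphabet sizes of $U,V_1,V_2$ uniformly in $n$, pass to a convergent subsequence of the resulting tuples, and use continuity of the mutual-information functionals to land on an element of $\Qmat_2$ at which (\ref{eq.propliu1})--(\ref{eq.propliu2}) hold with no slack. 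The step I expect to require the most care is the second displayed identity together with the bookkeeping that squeezes all four bounds out of the single assignment $(U_i,V_{1i},V_{2i})$; the rest is routine.
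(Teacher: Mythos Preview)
Your proposal is correct and is essentially the same argument the paper uses. Proposition~\ref{prop:Liu_o} is quoted here from \cite{Liu-etal:08IT} without an independent proof, but the paper recovers it as the specialization $R_0=0$, $R_{e1}=R_1$, $R_{e2}=R_2$ of Theorem~\ref{thm:outerbound}, whose converse in Appendix~\ref{sec:proofoftheorem2} makes exactly your auxiliary choices $U_i=(W_0,Y_1^{i-1},\tilde Y_2^{i+1})$, $V_{1i}=W_1U_i$, $V_{2i}=W_2U_i$ together with a uniform time index $J$, and performs the same Csisz\'{a}r-sum cancellations to reach the single-letter equivocation bounds (\ref{eq:r1ebound1})--(\ref{eq:r2ebound2}).
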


In the absence of secrecy constraint, the present model reduces to the
DMBC first introduced by Cover
\cite{Cover:72IT}. The capacity region for a DMBC is only known for
some special cases (see \cite{Cover:98IT} and references therein).
The best achievable region for general DMBC is given by Gel'fand
and Pinsker in \cite{Gelfand&Pinsker:80PIT} which reduces to
Marton's achievable region \cite[Theorem 2]{Marton:79IT} for DMBC
in the absence of common message. Capacity region outer bounds include
K\"{o}rner and Marton's outer bound \cite[Theorem 5]{Marton:79IT},
Liang and Kramer's outer bound \cite{Liang&Kramer:06CISS,Liang&Kramer:07IT},
Nair and El Gamal's outer bound \cite{Nair&ElGamal:06ISIT,Nair&ElGamal:07IT},
and a recently proposed outer bound by Liang, Kramer and Shamai (Shitz) \cite{Liang-etal:08ITW}.

Marton in 1979 considered DMBC in the absence of common message
and proposed the following achievable rate
region \cite{Marton:79IT}.
\begin{proposition}\label{prop:Marton} \cite[Theorem 2]{Marton:79IT}
Let $\Rmat_M$ be the union of non-negative rate pairs $(R_1, R_2)$ satisfying
$R_1, R_2\geq 0$ and \bqa R_1&\leq& I(UV_1;Y_1)\label{eq:marton 1}\\
R_2&\leq& I(UV_2;Y_2)\label{eq:marton 2}\\
R_1+R_2&\leq&
\min\{I(U;Y_1),I(U;Y_2)\}+I(V_1;Y_1|U)+I(V_2;Y_2|U)-I(V_1;V_2|U)\label{eq:marton
3} \eqa for some $(U,V_1, V_2, X,Y_1,Y_2)\in \Qmat_1$.
Then $\Rmat_M$ is an achievable rate region for the DMBC without common message.
\end{proposition}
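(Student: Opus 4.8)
The plan is to establish achievability of $\Rmat_M$ by a random coding argument that superposes Marton's binning scheme for the auxiliaries $V_1,V_2$ on a common cloud center $U$, combined with rate splitting. Fix a distribution $p(u,v_1,v_2)\,p(x\mid u,v_1,v_2)$ with $(U,V_1,V_2,X,Y_1,Y_2)\in\Qmat_1$ and a small $\epsilon>0$. Split the private rates as $R_1=S_1+T_1$ and $R_2=S_2+T_2$ into nonnegative parts and set $S=S_1+S_2$; the $S$-parts of both messages will be carried jointly by $U$ and decoded by both receivers, while the $T_j$-part of message $j$ is carried by $V_j$ and decoded only by receiver $j$.

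For the codebook, generate $2^{nS}$ sequences $u^n(s)$, indexed by $s=(s_1,s_2)$, each drawn i.i.d.\ $\sim\prod_i p(u_i)$; then for each $u^n(s)$ generate $2^{n(T_1+\tilde T_1)}$ sequences $v_1^n(s,t_1,k_1)\sim\prod_i p(v_{1i}\mid u_i(s))$ and $2^{n(T_2+\tilde T_2)}$ sequences $v_2^n(s,t_2,k_2)\sim\prod_i p(v_{2i}\mid u_i(s))$, with $\tilde T_1,\tilde T_2\geq0$ binning rates to be eliminated later. To encode $\bigl((s_1,t_1),(s_2,t_2)\bigr)$, put $s=(s_1,s_2)$ and search the product bin indexed by $(s,t_1,t_2)$ for a pair $(k_1,k_2)$ with $\bigl(u^n(s),v_1^n(s,t_1,k_1),v_2^n(s,t_2,k_2)\bigr)$ jointly $\epsilon$-typical; by the mutual covering lemma such a pair exists with probability tending to $1$ provided $\tilde T_1+\tilde T_2>I(V_1;V_2\mid U)$. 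Then draw $x^n\sim\prod_i p(x_i\mid u_i,v_{1i},v_{2i})$ and transmit. Receiver $1$ looks for the unique $(\hat s,\hat t_1)$ admitting some $\hat k_1$ with $\bigl(u^n(\hat s),v_1^n(\hat s,\hat t_1,\hat k_1),y_1^n\bigr)$ jointly typical; receiver $2$ does the same with $(Y_2,V_2)$.

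A standard joint-typicality / packing-lemma analysis, averaged over the codebook, makes the error probability at receiver $1$ vanish if $T_1+\tilde T_1<I(V_1;Y_1\mid U)$ and $S+T_1+\tilde T_1<I(UV_1;Y_1)$, and at receiver $2$ if $T_2+\tilde T_2<I(V_2;Y_2\mid U)$ and $S+T_2+\tilde T_2<I(UV_2;Y_2)$; the full $S$ (not $S_1$) appears because receiver $1$ must resolve all of $s=(s_1,s_2)$ to pin down the correct $U$-codeword. It then remains to Fourier--Motzkin eliminate $S_1,S_2,S,T_1,T_2,\tilde T_1,\tilde T_2$ from these four inequalities, the covering inequality $\tilde T_1+\tilde T_2>I(V_1;V_2\mid U)$, the identities $R_1=S_1+T_1$, $R_2=S_2+T_2$, $S=S_1+S_2$, and the nonnegativity constraints. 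The individual bounds come out as $R_1=S_1+T_1\leq S+T_1+\tilde T_1\leq I(UV_1;Y_1)$ and symmetrically $R_2\leq I(UV_2;Y_2)$; adding $S+T_1+\tilde T_1\leq I(UV_1;Y_1)$ to $T_2+\tilde T_2\leq I(V_2;Y_2\mid U)$, applying $\tilde T_1+\tilde T_2\geq I(V_1;V_2\mid U)$ and the chain rule $I(UV_1;Y_1)=I(U;Y_1)+I(V_1;Y_1\mid U)$ gives $R_1+R_2=S+T_1+T_2\leq I(U;Y_1)+I(V_1;Y_1\mid U)+I(V_2;Y_2\mid U)-I(V_1;V_2\mid U)$, and the mirror computation through $Y_2$ gives the same bound with $I(U;Y_2)$; combining the two yields the sum bound of the proposition. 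One verifies that any further inequalities generated by the elimination (such as $R_1\leq I(UV_2;Y_2)+I(V_1;Y_1\mid U)$) are already implied, since $-I(V_1;V_2\mid U)\leq0$. Letting $\epsilon\to0$, expurgating to pass from average to maximal error probability, and noting that each rate pair of $\Rmat_M$ uses its own admissible distribution, completes the proof.

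\textbf{The main obstacle.} The only step that is not mere bookkeeping is the encoding guarantee: showing the product bin almost surely contains a jointly typical pair $(v_1^n,v_2^n)$ requires the mutual covering lemma, whose proof rests on a second-moment (variance) estimate rather than a first-moment union bound. The rest---the superposition codebook, the packing-lemma error estimates, and the Fourier--Motzkin elimination---is routine, the elimination just tedious.
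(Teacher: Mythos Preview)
Your proposal is correct and matches the route the paper itself takes: the paper does not prove this cited proposition directly, but obtains it as a special case of Theorem~\ref{thm:achievable} (set $R_0=R_{e1}=R_{e2}=0$), whose proof in Appendix~\ref{sec:proofoftheorem1} uses exactly the ingredients you describe---superposition on a $U$ cloud center, rate splitting so that part of each private message rides on $U$, Marton-style binning to find a jointly typical $(V_1,V_2)$ pair, and Fourier--Motzkin elimination---with an extra binning layer for secrecy that simply disappears when the equivocation constraints are dropped (cf.\ Remark~2). The original proof in \cite{Marton:79IT} is different (a corner-point argument rather than binning plus elimination), but your argument and the paper's are essentially the same.
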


Gel'fand and Pinsker generalized Marton's model by considering
DMBC with common information. The achievable rate region they
proposed \cite{Gelfand&Pinsker:80PIT} is summarized below.
\begin{proposition}\label{prop:Gelfand&Pinsker} \cite[Theorem 1]{Gelfand&Pinsker:80PIT}
Let $\Rmat_{GP}$ be the union of non-negative rate triples $(R_0,R_1,R_2)$ satisfying
\bqa
R_0&\leq&\min[I(U;Y_1), I(U;Y_2)]\label{eq:GP1}\\
R_1+R_0&\leq& I(V_1;Y_1|U)+\min[I(U;Y_1),I(U;Y_2)]\label{eq:GP2}\\
R_2+R_0&\leq&
I(V_2;Y_2|U)+\min[I(U;Y_1),I(U;Y_2)]\label{eq:GP3}\\
R_1+R_2+R_0&\leq&
\min[I(U;Y_1),I(U;Y_2)]+I(V_1;Y_1|U)+I(V_2;Y_2|U)-I(V_1;V_2|U)\label{eq:GP4}
\eqa
for some $(U,V_1, V_2, X,Y_1,Y_2)\in \Qmat_1$. Then $\Rmat_{GP}$ is an 
achievable rate region for the DMBC.
\end{proposition}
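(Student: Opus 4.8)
The plan is to establish $\Rmat_{GP}$ by the standard superposition-plus-Marton-binning random coding argument, augmented by rate splitting. Fix $(U,V_1,V_2,X,Y_1,Y_2)\in\Qmat_1$ and a target triple $(R_0,R_1,R_2)$ obeying (\ref{eq:GP1})--(\ref{eq:GP4}). Split $W_1$ into independent parts $(W_{1a},W_{1b})$ of rates $(R_{1a},R_{1b})$ with $R_{1a}+R_{1b}=R_1$, and likewise $W_2$ into $(W_{2a},W_{2b})$ with $R_{2a}+R_{2b}=R_2$. The combined message $(W_0,W_{1a},W_{2a})$, of total rate $S:=R_0+R_{1a}+R_{2a}$, is carried by the cloud center $U$ (and is to be decoded by both receivers), while $W_{1b}$ and $W_{2b}$ ride on the satellite codewords $V_1$ and $V_2$ respectively. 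This is the Gel'fand--Pinsker/Marton architecture, specialized so that the common message is absorbed into the cloud index.

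\emph{Codebook and encoding.} Generate $2^{nS}$ cloud centers $u^n(s)$ i.i.d.\ $\sim\prod_i p(u_i)$. For each $s$, generate $2^{n(R_{1b}+\tilde R_1)}$ codewords $v_1^n\sim\prod_i p(v_{1i}\,|\,u_i(s))$, grouped into $2^{nR_{1b}}$ bins of $2^{n\tilde R_1}$ codewords each, and $2^{n(R_{2b}+\tilde R_2)}$ codewords $v_2^n\sim\prod_i p(v_{2i}\,|\,u_i(s))$ grouped into $2^{nR_{2b}}$ bins of $2^{n\tilde R_2}$ each. To send $(w_0,w_1,w_2)$, form $s$ from $(w_0,w_{1a},w_{2a})$ and, by the mutual covering lemma, provided $\tilde R_1+\tilde R_2 > I(V_1;V_2|U)$ (with $\tilde R_1,\tilde R_2\ge0$), with probability approaching one there is a jointly typical triple $(u^n(s),v_1^n,v_2^n)$ with $v_1^n$ in bin $w_{1b}$ and $v_2^n$ in bin $w_{2b}$; pick one such pair and generate $x^n\sim\prod_i p(x_i\,|\,u_i,v_{1i},v_{2i})$.

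\emph{Decoding and raw constraints.} Receiver $j$ declares the unique pair $(u^n,v_j^n)$ jointly typical with $y_j^n$, reading $(\hat W_0,\hat W_{ja})$ off the cloud index and $\hat W_{jb}$ off the bin index of $v_j^n$. A routine typicality/packing-lemma analysis, together with the covering step, shows both error probabilities vanish as long as $S\le\min[I(U;Y_1),I(U;Y_2)]$ (each receiver decodes the full cloud index), $R_{1b}+\tilde R_1\le I(V_1;Y_1|U)$, $R_{2b}+\tilde R_2\le I(V_2;Y_2|U)$, $\tilde R_1+\tilde R_2\ge I(V_1;V_2|U)$, and all split and bin rates are nonnegative; the ``joint'' events such as $S+R_{1b}+\tilde R_1\le I(UV_1;Y_1)$ are then automatically in force. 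With the identities $R_1=R_{1a}+R_{1b}$, $R_2=R_{2a}+R_{2b}$, this is a linear system in the auxiliary rates $S,\tilde R_1,\tilde R_2,R_{1a},R_{1b},R_{2a},R_{2b}$.

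\emph{Fourier--Motzkin elimination.} Eliminating $S$ and $\tilde R_1,\tilde R_2$ yields $R_{1b}\le I(V_1;Y_1|U)$, $R_{2b}\le I(V_2;Y_2|U)$, $R_{1b}+R_{2b}\le I(V_1;Y_1|U)+I(V_2;Y_2|U)-I(V_1;V_2|U)$, and $R_0+R_{1a}+R_{2a}\le\min[I(U;Y_1),I(U;Y_2)]$; eliminating $R_{1a},R_{1b},R_{2a},R_{2b}$ against the two sum identities and the nonnegativity constraints then produces exactly (\ref{eq:GP1}) (from $R_{1a}+R_{2a}\ge0$), (\ref{eq:GP2}) and (\ref{eq:GP3}) (from the two single satellite bounds), and (\ref{eq:GP4}) (from the combined satellite bound). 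The one point that is not pure bookkeeping --- and the step I expect to be the main obstacle --- is that this elimination also leaves the side condition $I(V_1;V_2|U)\le I(V_1;Y_1|U)+I(V_2;Y_2|U)$: for a distribution violating it the satellite binning collapses and the scheme is vacuous. One must check that such distributions add nothing to the union, i.e.\ that their (\ref{eq:GP1})--(\ref{eq:GP4}) region is contained in that of the modified distribution obtained by replacing (say) $V_2$ by a constant; given that check the side condition can be dropped, and every $(R_0,R_1,R_2)$ meeting (\ref{eq:GP1})--(\ref{eq:GP4}) for some $\Qmat_1$ distribution is achieved. The remaining ingredients --- joint typicality, the mutual covering lemma, and the packing lemma --- are standard.
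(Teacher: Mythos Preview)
Your proposal is correct and matches the paper's own route to this result: the paper derives Gel'fand--Pinsker's region as the specialization of its Theorem~\ref{thm:achievable} with the secrecy constraints removed, and that proof (Appendix~\ref{sec:proofoftheorem1}) uses exactly your ingredients --- rate splitting of each private message into a common-layer part and a satellite part, superposition with $U$ carrying $(W_0,W_{10},W_{20})$, Marton-style binning of the $V_1,V_2$ codebooks, and Fourier--Motzkin elimination to obtain (\ref{eq:GP1})--(\ref{eq:GP4}). The only cosmetic difference is that the paper encodes via $p(x|v_1,v_2)$ (i.e.\ works in $\Qmat_2$) and appeals to Lemma~\ref{lem:GP} for the equivalence with $\Qmat_1$, whereas you generate $x^n$ from $p(x|u,v_1,v_2)$ directly; and the paper does not explicitly flag the redundancy of the side condition $I(V_1;V_2|U)\le I(V_1;Y_1|U)+I(V_2;Y_2|U)$ that you correctly isolate.
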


We comment here that in the absence of common message, $\Rmat_{GP}$ can be shown to
be equivalent to $\Rmat_{M}$ \cite{Gelfand&Pinsker:80PIT}. Furthermore, an equivalent
definition of $\Rmat_{GP}$ can be obtained by restricting $Z\in \Qmat_2$ instead of $\Qmat_1$, i.e.,
\begin{lemma} \label{lem:GP}
Define $\Rmat_{GP}'$ to be the union of non-negative
rate triples  $(R_0, R_1, R_2)$
satisfying (\ref{eq:GP1})-(\ref{eq:GP4}) with $Z\in \Qmat_2$, then
\beq
\Rmat_{GP}=\Rmat_{GP}'
\eeq
\end{lemma}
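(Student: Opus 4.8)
The plan is to show the two inclusions separately. The inclusion $\Rmat_{GP}' \subseteq \Rmat_{GP}$ is immediate, since $\Qmat_2 \subseteq \Qmat_1$ (noted in the excerpt after the definitions), so every rate triple achievable with the more restrictive distribution structure is already in $\Rmat_{GP}$. The substance is the reverse inclusion $\Rmat_{GP} \subseteq \Rmat_{GP}'$: given $(R_0,R_1,R_2)$ satisfying (\ref{eq:GP1})--(\ref{eq:GP4}) for some $Z=(U,V_1,V_2,X,Y_1,Y_2)\in\Qmat_1$, I must produce a (possibly different) $Z'=(U',V_1',V_2',X',Y_1',Y_2')\in\Qmat_2$ for which the same triple satisfies (\ref{eq:GP1})--(\ref{eq:GP4}). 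The natural choice, mirroring the trick in the proof of Lemma \ref{lem:ck}, is to set $U'=U$, $V_1'=(U,V_1)$, $V_2'=(U,V_2)$, and leave $X'=X$, $Y_i'=Y_i$ unchanged. Under the original $\Qmat_1$ factorization $p(u,v_1,v_2)p(x|u,v_1,v_2)p(y_1,y_2|x)$, one checks that $U'\to V_1'V_2' \to X' \to Y_1'Y_2'$ holds: conditioned on $(V_1',V_2')=((u,v_1),(u,v_2))$ the pair $(U',X')$ has $U'$ determined and $X'$ distributed as $p(x|u,v_1,v_2)$, which does not further depend on $U'$, so $U'\to V_1'V_2'\to X'$; and $X'\to Y_1'Y_2'$ is inherited. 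Hence $Z'\in\Qmat_2$.

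The remaining step is to verify that each of the four inequalities (\ref{eq:GP1})--(\ref{eq:GP4}) is preserved under this substitution, i.e.\ that the right-hand sides evaluated at $Z'$ are no smaller than (in fact equal to) those at $Z$. For (\ref{eq:GP1}) nothing changes since $U'=U$. For (\ref{eq:GP2}) I need $I(V_1';Y_1|U') = I(UV_1;Y_1|U) = I(V_1;Y_1|U)$, which is immediate from the chain rule $I(UV_1;Y_1|U)=I(U;Y_1|U)+I(V_1;Y_1|U)$ and $I(U;Y_1|U)=0$; similarly for (\ref{eq:GP3}). For (\ref{eq:GP4}) the term $I(V_1;Y_1|U)+I(V_2;Y_2|U)$ is unchanged by the same argument, and the only nontrivial point is $I(V_1';V_2'|U') = I(UV_1;UV_2|U) = I(V_1;V_2|U)$, again by the chain rule and the fact that $U$ on both sides of the mutual information, when conditioned on $U$, contributes nothing. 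So all four right-hand sides are literally unchanged, and $(R_0,R_1,R_2)\in\Rmat_{GP}'$.

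I do not anticipate a serious obstacle here; the lemma is a routine ``merging $U$ into $V_1$ and $V_2$'' argument of the same flavor as Lemma \ref{lem:ck}, and the only care needed is (i) confirming the new tuple genuinely lies in $\Qmat_2$ — i.e.\ that $X$ depends on $(V_1',V_2')$ only through $(V_1,V_2)$, which is where the original $\Qmat_1$ structure is used — and (ii) carefully tracking the conditional mutual information identities so that every right-hand side in (\ref{eq:GP1})--(\ref{eq:GP4}) is reproduced exactly. If one also wants the convex-closure/cardinality aspects to match, those are inherited since the construction is pointwise in the auxiliary distribution and the alphabet of $V_i'$ is just a bounded blow-up of that of $V_i$, so standard support-cardinality bounds still apply.
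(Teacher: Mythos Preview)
Your proposal is correct and follows essentially the same approach as the paper, which in fact skips the proof and simply points to Lemma~\ref{lem:ck}: the substitution $U'=U$, $V_1'=UV_1$, $V_2'=UV_2$ is exactly the intended trick, and your verification that the right-hand sides of (\ref{eq:GP1})--(\ref{eq:GP4}) are unchanged is precisely what the paper leaves to the reader. One minor wording quibble: in your closing summary you write that one must check ``$X$ depends on $(V_1',V_2')$ only through $(V_1,V_2)$,'' but this is not what is needed (and is not generally true, since $p(x|u,v_1,v_2)$ may depend on $u$); rather, as you correctly argue in your first paragraph, the Markov condition $U'\to V_1'V_2'\to X$ holds trivially because $U'$ is a deterministic function of $(V_1',V_2')$.
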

The proof is similar to that for Lemma \ref{lem:ck} and is skipped.
Similarly, $\Rmat_M$ can be equivalently defined using $Z\in \Qmat_2$.

An earlier outer bound by K\"{o}rner and Marton \cite[Theorem 5]{Marton:79IT}
for the capacity region of DMBC
is subsumed by several recent outer bounds. One of the recent outer bounds 
was proposed by Liang and Kramer \cite[Theorem 6]{Liang&Kramer:06CISS,Liang&Kramer:07IT},
as summarized in Proposition \ref{prop:BC-LK}.
\begin{proposition} \label{prop:BC-LK}
If $(R_0,R_1,R_2)$ is achievable, then there exists $Z\in \Qmat_1$ and
\bqa
\label{eq:lk1} R_0 &\leq& \min[I(U;Y_1),I(U;Y_2)],\\
\label{eq:lk2}R_0 + R_1 &\leq& I(V_1, U;Y_1),\\
\label{eq:lk3}R_0 + R_2 &\leq& I(V_2, U;Y_2),\\
\label{eq:lk4}R_0 + R_1 + R_2 &\leq&I(X;Y_2|V_1U)+ I(V_1;Y_1|U)+\min[I(U;Y_1),I(U;Y_2)],\\
\label{eq:lk5}R_0 + R_1 + R_2 &\leq&I(X;Y_1|V_2U)+ I(V_2;Y_2|U)+\min[I(U;Y_1),I(U;Y_2)].
\eqa
\end{proposition}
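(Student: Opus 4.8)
The plan is a standard weak-converse argument, with the wrinkle that the stray auxiliary ``cross'' terms are killed by the Csisz\'{a}r sum identity. As is standard, take $W_0,W_1,W_2$ independent and uniform, let $\epsilon_n\to 0$ denote the generic Fano term, and adopt the auxiliaries $U_i=(W_0,Y_1^{i-1},Y_{2,i+1}^n)$, $V_{1i}=(W_1,U_i)$, $V_{2i}=(W_2,U_i)$, together with a time-sharing variable $Q$ uniform on $\{1,\dots,n\}$ and $U=(Q,U_Q)$, $V_1=(Q,V_{1Q})$, $V_2=(Q,V_{2Q})$, $X=X_Q$, $Y_1=Y_{1Q}$, $Y_2=Y_{2Q}$. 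By memorylessness $(U_i,V_{1i},V_{2i})\to X_i\to(Y_{1i},Y_{2i})$ is a Markov chain (given $X_i$ the outputs are independent of everything else), hence $(U,V_1,V_2)\to X\to Y_1Y_2$ and $Z=(U,V_1,V_2,X,Y_1,Y_2)\in\Qmat_1$; importantly $U$ is a function of both $V_1$ and $V_2$, which is what lets this single choice serve all five inequalities.

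Inequalities (\ref{eq:lk1})--(\ref{eq:lk3}) come for free. Fano at decoder $1$ gives $nR_0\le I(W_0;\Ybf_1)+n\epsilon_n=\sum_i I(W_0;Y_{1i}|Y_1^{i-1})+n\epsilon_n\le\sum_i I(U_i;Y_{1i})+n\epsilon_n$; Fano at decoder $2$ (expanding $\Ybf_2$ from the back) gives $nR_0\le\sum_i I(U_i;Y_{2i})+n\epsilon_n$; $Q$-averaging produces (\ref{eq:lk1}). The same accounting on $H(W_0,W_1)$ yields $I(W_0,W_1;\Ybf_1)\le\sum_i I(V_{1i};Y_{1i})$, hence (\ref{eq:lk2}), and (\ref{eq:lk3}) is the mirror statement through decoder $2$.

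The content is (\ref{eq:lk4}) (and, symmetrically, (\ref{eq:lk5})). I would prove separately the two bounds obtained by replacing $\min[I(U;Y_1),I(U;Y_2)]$ with $I(U;Y_1)$ and with $I(U;Y_2)$, then intersect. For the $I(U;Y_1)$ bound: split $n(R_0+R_1+R_2)=H(W_0,W_1)+H(W_2|W_0,W_1)$, use Fano at the two decoders to get $n(R_0+R_1+R_2)\le I(W_0,W_1;\Ybf_1)+I(W_2;\Ybf_2|W_0,W_1)+2n\epsilon_n$, then single-letterize each term while inserting the \emph{other} receiver's sequence as a chain-rule spectator. This gives $I(W_0,W_1;\Ybf_1)\le\sum_i I(V_{1i};Y_{1i})-\sum_i I(Y_{2,i+1}^n;Y_{1i}|W_0W_1Y_1^{i-1})$ (using $Y_1^{i-1}\subseteq V_{1i}$) and, since given $X_i$ the output $Y_{2i}$ is independent of everything else, $I(W_2;\Ybf_2|W_0,W_1)=\sum_i I(W_2;Y_{2i}|W_0W_1Y_{2,i+1}^n)\le\sum_i I(X_i;Y_{2i}|W_0W_1Y_{2,i+1}^n)=\sum_i I(X_i;Y_{2i}|V_{1i})+\sum_i I(Y_1^{i-1};Y_{2i}|W_0W_1Y_{2,i+1}^n)$, the last step being the chain rule together with the same conditional independence. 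The two residual sums are equal by the Csisz\'{a}r sum identity with message variable $(W_0,W_1)$ and therefore cancel, leaving $n(R_0+R_1+R_2)\le\sum_i[I(V_{1i};Y_{1i})+I(X_i;Y_{2i}|V_{1i})]+2n\epsilon_n$; $Q$-averaging and the chain rule $I(V_1;Y_1)=I(U;Y_1)+I(V_1;Y_1|U)$, $I(X;Y_2|V_1)=I(X;Y_2|V_1U)$ (both valid since $U\subseteq V_1$) yield the $I(U;Y_1)$ bound. For the $I(U;Y_2)$ bound, split instead $n(R_0+R_1+R_2)=H(W_0)+H(W_1|W_0)+H(W_2|W_0,W_1)$, bound $H(W_0)$ through decoder $2$ and the other two through their natural receivers, and repeat the insertions; now three residual sums appear (conditioned on $W_0$, on $W_0$, and on $(W_0,W_1)$) and two applications of the Csisz\'{a}r sum identity telescope them to zero. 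Intersecting the two bounds and letting $n\to\infty$ (with the usual support-lemma cardinality bounds on $U,V_1,V_2$, so the bounding region is closed) gives (\ref{eq:lk4}). Inequality (\ref{eq:lk5}) follows from an entirely analogous computation with $V_{2i}$ in place of $V_{1i}$, splitting through $H(W_0,W_2)$ and $H(W_0)+H(W_2|W_0)$ and interchanging the two decoders in the Fano steps.

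The only delicate part I anticipate is the bookkeeping of these residual cross terms: at every Fano step one must condition on exactly the right subset of $\{W_0,W_1,W_2\}$ so that, after the spectator sequence is inserted, each leftover $\sum_i I(Y_{2,i+1}^n;Y_{1i}|\cdot)$ is matched with an equal $\sum_i I(Y_1^{i-1};Y_{2i}|\cdot)$ of opposite sign via the Csisz\'{a}r sum identity. The rest --- Fano, chain-rule expansions, and the memorylessness steps that replace $W_k$-information by $X_i$-information --- is routine.
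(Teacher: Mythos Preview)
Your proof is correct. Note, however, that the paper does not give its own proof of this proposition: it is quoted as a known result from Liang and Kramer. What the paper \emph{does} prove is its stronger outer bound $\Rmat_{O2}$ (Theorem~\ref{thm:outerbound}, Appendix~\ref{sec:proofoftheorem2}), and then observes in Remark~5 that Proposition~\ref{prop:BC-LK} follows from the specialization $\Rmat_{BC-O2}$ by the trivial inequalities $I(V_1;Y_1|V_2U)\le I(X;Y_1|V_2U)$ and $I(V_2;Y_2|V_1U)\le I(X;Y_2|V_1U)$.

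Your argument is essentially the same machinery as Appendix~\ref{sec:proofoftheorem2} restricted to the rate inequalities: the auxiliary choices $U_i=(W_0,Y_1^{i-1},Y_{2,i+1}^n)$, $V_{1i}=(W_1,U_i)$, $V_{2i}=(W_2,U_i)$ coincide with the paper's, and your cancellation of the ``cross'' residuals via the Csisz\'ar sum identity is exactly the paper's $\Sigma_a=\Sigma_a^*$ device. The one difference is that you bound $I(W_2;Y_2^n|W_0W_1)$ directly by $\sum_i I(X_i;Y_{2i}|V_{1i})$ (using memorylessness to kill $I(Y_1^{i-1};Y_{2i}|X_i,\cdot)$), whereas the paper first obtains the sharper $\sum_i I(W_2;Y_{2i}|V_{1i})=\sum_i I(V_{2i};Y_{2i}|V_{1i}U_i)$ and only afterwards relaxes $V_2$ to $X$. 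Your route is slightly shorter for this particular target but forfeits the tighter bound $\Rmat_{BC-O2}$; the paper's route gets both in one pass. The cardinality/closure remark is not needed for the statement as phrased (existence of some $Z\in\Qmat_1$), so you can drop it.
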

We denote this outer bound as $\Rmat_{LK}$, i.e., $\Rmat_{LK}$ is 
the union of non-negative rate triples $(R_0, R_1,R_2)$ satisfying (\ref{eq:lk1})-(\ref{eq:lk5}) over $Z\in \Qmat_1$.
Furthermore, we can also restrict the Markov chain condition to be
$Z\in \Qmat_2$, i.e.,
\begin{lemma}
Define $\Rmat_{LK}'$ to be the convex closure of union of non-negative
rate triples  $(R_0, R_1, R_2)$ satisfying (\ref{eq:lk1})-(\ref{eq:lk5}) with
$Z\in \Qmat_2$, then
\beq
\Rmat_{LK}=\Rmat_{LK}'
\eeq
\end{lemma}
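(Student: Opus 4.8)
The plan is to follow the same embedding trick used in Lemma~\ref{lem:ck} and Lemma~\ref{lem:GP}. One inclusion is immediate: since $\Qmat_2\subseteq\Qmat_1$, every rate triple admitted by some $Z\in\Qmat_2$ is also admitted by some $Z\in\Qmat_1$, so $\Rmat_{LK}'\subseteq\Rmat_{LK}$. For the reverse inclusion $\Rmat_{LK}\subseteq\Rmat_{LK}'$, I would take an arbitrary $Z=(U,V_1,V_2,X,Y_1,Y_2)\in\Qmat_1$ and a triple $(R_0,R_1,R_2)$ satisfying (\ref{eq:lk1})-(\ref{eq:lk5}) for that $Z$, and produce a $Z'\in\Qmat_2$ under which the same triple satisfies (\ref{eq:lk1})-(\ref{eq:lk5}). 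The explicit choice is $U'=U$, $V_1'=(U,V_1)$, $V_2'=(U,V_2)$, with $X,Y_1,Y_2$ unchanged.

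The key steps are then two routine verifications. First, that $Z'\in\Qmat_2$: because $U'$ is a deterministic function of $V_1'$ (and of $V_2'$), we have $p(x|u',v_1',v_2')=p(x|v_1',v_2')$, and moreover $(V_1',V_2')$ determines $(U,V_1,V_2)$ so $p(x|v_1',v_2')=p(x|u,v_1,v_2)$; hence the joint law factors as $p(u')p(v_1',v_2'|u')p(x|v_1',v_2')p(y_1,y_2|x)$, i.e., the Markov chain $U'\rightarrow V_1'V_2'\rightarrow X\rightarrow Y_1Y_2$ holds. Second, that each of the five bounds is numerically invariant under the substitution; this reduces to the elementary identities $I(U';Y_j)=I(U;Y_j)$, $I(V_j',U';Y_j)=I(V_j,U;Y_j)$, $I(V_j';Y_j|U')=I(V_j;Y_j|U)$, and $I(X;Y_j|V_k'U')=I(X;Y_j|V_kU)$ for $j,k\in\{1,2\}$, each holding because conditioning on $U'=U$ makes the extra copy of $U$ carried inside $V_j'$ redundant. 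Consequently the right-hand sides of (\ref{eq:lk1})-(\ref{eq:lk5}) are unchanged, so $(R_0,R_1,R_2)$ lies in the region defined over $\Qmat_2$ and hence in $\Rmat_{LK}'$. Combining the two inclusions yields $\Rmat_{LK}=\Rmat_{LK}'$.

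I do not expect a genuine obstacle: the substance is the same two-line construction as in Lemma~\ref{lem:ck}, and the work is confined to checking membership $Z'\in\Qmat_2$ and re-deriving the four mutual-information identities above. The one point worth a brief remark is the slight asymmetry in the statement — $\Rmat_{LK}$ is written as a plain union while $\Rmat_{LK}'$ is written as a convex closure — which is immaterial, since $\Rmat_{LK}$ is an outer bound to the (convex, closed) capacity region and may be taken convex and closed without loss; thus the equality of the underlying unions established above upgrades to equality of the two regions. The proof is otherwise parallel to that of Lemma~\ref{lem:ck} and may be omitted, exactly as was done for Lemma~\ref{lem:GP}.
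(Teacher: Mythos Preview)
Your proposal is correct and follows exactly the approach the paper intends: the paper does not give a proof for this lemma, relying implicitly on the same substitution $U'=U$, $V_j'=UV_j$ used in Lemma~\ref{lem:ck} (and alluded to for Lemma~\ref{lem:GP}). Your verification of the Markov chain $U'\rightarrow V_1'V_2'\rightarrow X\rightarrow Y_1Y_2$ and of the invariance of each right-hand side in (\ref{eq:lk1})--(\ref{eq:lk5}) is the complete argument, and your remark on the convex-closure wording is a fair observation that does not affect the substance.
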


In \cite[Theorem 2.1]{Nair&ElGamal:07IT}, another outer bound to the
capacity region of the general DMBC was given by Nair and El Gamal, as summarized in
Proposition \ref{prop:BC-NE}. This outer bound was shown to be
strictly tighter than the K\"{o}rner and Marton outer bound
\cite[Theorem 5]{Marton:79IT}.
\begin{proposition} \label{prop:BC-NE}
If $(R_0,R_1,R_2)$ is achievable, then there exists $Z\in \Qmat_3$ and
 \bqa
R_0 &\leq& \min[I(U;Y_1),I(U;Y_2)],\label{eq:NE1}\\
R_0 + R_1 &\leq& I(V_1 U;Y_1),\\
R_0 + R_2 &\leq& I(V_2 U;Y_2),\\
R_0 + R_1 + R_2 &\leq&I(V_2;Y_2|V_1U)+ I(V_1U;Y_1),\\
R_0 + R_1 + R_2 &\leq&I(V_1;Y_1|V_2U)+ I(V_2U;Y_2).\label{eq:NE5}
\eqa 
We denote by $\Rmat_{NE}$ this new outer bound, i.e., $\Rmat_{NE}$ is the
union of nen-negative rate triples $(R_0, R_1,R_2)$ satisfying (\ref{eq:NE1})-(\ref{eq:NE5}) over $Z\in \Qmat_3$.
\end{proposition}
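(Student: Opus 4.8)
The plan is to prove this as a standard weak converse: starting from any achievable triple $(R_0,R_1,R_2)$, I would invoke Fano's inequality, single-letterize the resulting multi-letter mutual informations with the Csisz\'ar sum identity, and choose auxiliary random variables so that the induced single-letter joint law lies in $\Qmat_3$. Throughout I take $W_0,W_1,W_2$ independent and uniform, so $H(W_0,W_1,W_2)=n(R_0+R_1+R_2)$, and write $\epsilon_n\to 0$ for the Fano terms. Since decoder~1 recovers $(W_0,W_1)$ and decoder~2 recovers $(W_0,W_2)$ with vanishing error, Fano gives $H(W_0,W_1|Y_1^n)\le n\epsilon_n$ and $H(W_0,W_2|Y_2^n)\le n\epsilon_n$.

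The crux is the auxiliary-variable identification, which must deliver the independence $V_1\perp V_2$ that distinguishes $\Qmat_3$ from $\Qmat_1$. First I would set, for each time index $i$,
\[
U_i=(W_0,Y_1^{i-1},Y_{2,i+1}^n),\qquad V_{1i}=W_1,\qquad V_{2i}=W_2 .
\]
Because $W_1$ and $W_2$ are independent, $V_{1i}\perp V_{2i}$, while $U_i$ is permitted to depend on both through $p(u|v_1,v_2)$, and memorylessness yields $U_iV_{1i}V_{2i}\to X_i\to Y_{1i}Y_{2i}$; hence the joint law factors exactly as $\Qmat_3$ requires. The three ``single-receiver'' bounds are then routine: $nR_0=H(W_0)\le I(W_0;Y_1^n)+n\epsilon_n=\sum_i I(W_0;Y_{1i}|Y_1^{i-1})+n\epsilon_n\le\sum_i I(U_i;Y_{1i})+n\epsilon_n$, and symmetrically with $Y_2$, giving (\ref{eq:NE1}) after a time-sharing step; likewise $n(R_0+R_1)\le I(W_0,W_1;Y_1^n)+n\epsilon_n\le\sum_i I(U_iV_{1i};Y_{1i})+n\epsilon_n$, since inserting $Y_{2,i+1}^n$ into the first argument only enlarges the mutual information, and similarly for $R_0+R_2$.

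The main work is the sum-rate bound. I would start from
\[
n(R_0+R_1+R_2)=H(W_0,W_1)+H(W_2)\le I(W_0,W_1;Y_1^n)+I(W_2;Y_2^n|W_0,W_1)+2n\epsilon_n,
\]
where the second step uses $H(W_2|W_0,W_1,Y_2^n)\le H(W_0,W_2|Y_2^n)\le n\epsilon_n$. Expanding each term by the chain rule over the time index and inserting $Y_{2,i+1}^n$ into the first sum and $Y_1^{i-1}$ into the second produces two cross terms, $\sum_i I(Y_{2,i+1}^n;Y_{1i}|W_0,W_1,Y_1^{i-1})$ and $\sum_i I(Y_1^{i-1};Y_{2i}|W_0,W_1,Y_{2,i+1}^n)$, which are equal by the Csisz\'ar sum identity (conditioned on $(W_0,W_1)$) and therefore cancel, leaving
\[
n(R_0+R_1+R_2)\le\sum_i [\,I(U_iV_{1i};Y_{1i})+I(V_{2i};Y_{2i}|U_iV_{1i})\,]-\sum_i I(Y_1^{i-1};Y_{2i}|W_0,W_1,W_2,Y_{2,i+1}^n)+2n\epsilon_n .
\]
The remaining sum is nonnegative and may be discarded, yielding the fourth inequality of the proposition; exchanging the roles of the two receivers gives (\ref{eq:NE5}). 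Finally I would introduce a uniform time-sharing variable $Q$, set $U=(Q,U_Q)$, $V_1=V_{1Q}$, $V_2=V_{2Q}$, and check that $V_1\perp V_2$ and the Markov chain survive, so the limiting $(U,V_1,V_2,X,Y_1,Y_2)\in\Qmat_3$ and all five inequalities hold.

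The one genuinely delicate point, I expect, is arranging the identification so that the Csisz\'ar sum identity produces an \emph{exact} cancellation while the leftover term has the right sign: the extra conditioning on $W_2$ in the discarded term $I(Y_1^{i-1};Y_{2i}|W_0,W_1,W_2,Y_{2,i+1}^n)$ is precisely what blocks a direct cancellation and instead leaves nonnegative slack. It is the choice $V_{2i}=W_2$ (rather than folding $U_i$ into $V_2$) that simultaneously secures $V_1\perp V_2$ and makes $I(V_{2i};Y_{2i}|U_iV_{1i})=I(W_2;Y_{2i}|W_0,W_1,Y_1^{i-1},Y_{2,i+1}^n)$ emerge cleanly. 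Confirming the memorylessness argument that legitimizes $(U_i,V_{1i},V_{2i})\to X_i\to(Y_{1i},Y_{2i})$, and that the $\Qmat_3$ structure is preserved after time sharing, are the remaining items needing care.
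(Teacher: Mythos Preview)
Proposition~\ref{prop:BC-NE} is quoted in the paper as a cited result from Nair and El~Gamal and is not proved there, so there is no in-paper proof of this exact statement to compare against. Your argument is correct and, moreover, coincides with the paper's converse for its own (tighter) bound $\Rmat_{O3}$ in Appendix~\ref{sec:proofoftheorem2}: the paper uses precisely your identification $U=(J,W_0,Y_1^{J-1},\tilde Y_2^{J+1})$, $V_1=W_1$, $V_2=W_2$, the same Csisz\'ar-sum cancellations ($\Sigma_a=\Sigma_a^*$), and the same time-sharing step to land in $\Qmat_3$.
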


The most recent outer bound to the capacity region for DMBC was proposed by
Liang, Kramer, and Shamai (Shitz) \cite{Liang-etal:08ITW}:
\begin{proposition} \label{thm:NJresult}
If $(R_0,R_1,R_2)$ is achievable, then there exist random variables
$(W_0,W_1,W_2,V_1,V_2,X,Y_1,Y_2)$ whose joint distribution factors as
\bqa
p(w_0)p(w_1)p(w_2)p(v_1,v_2|w_0,w_1,w_2)p(x|v_1,v_2,w_0,w_1,w_2)p(y_1,y_2|x)
\eqa
such that,
\bqa
0 &\leq& R_0 \leq \min[I(W_0;Y_1|V_1),I(W_0;Y_2|V_2)]\\
\label{eq.NJthem2} R_1 &\leq& I(W_1;Y_1|V_1)\\
\label{eq.NJthem3} R_2 &\leq& I(W_2;Y_2|V_2)\\
\label{eq.NJthem4}R_0 + R_1 &\leq& \min[I(W_0W_1;Y_1|V_1),I(W_1;Y_1|W_0V_1V_2)+I(W_0V_1;Y_2|V_2)]\\
\label{eq.NJthem5}R_0 + R_2 &\leq& \min[I(W_0W_2;Y_2|V_2),I(W_2;Y_2|W_0V_1V_2)+I(W_0V_2;Y_1|V_1)]\\
\label{eq.NJthem6}R_0 + R_1 + R_2 &\leq& I(W_1;Y_1|W_0W_2V_1V_2)+I(W_0W_2V_1;Y_2|V_2)\\
\label{eq.NJthem7}R_0 + R_1 + R_2 &\leq& I(W_2;Y_2|W_0W_1V_1V_2)+I(W_0W_1V_2;Y_1|V_1)\\
\label{eq.NJthem8}R_0 + R_1 + R_2 &\leq& I(W_1;Y_1|W_0W_2V_1V_2)+I(W_2;Y_2|W_0V_1V_2)+I(W_0V_1V_2;Y_1)\\
\label{eq.NJthem9}R_0 + R_1 + R_2 &\leq&
I(W_2;Y_2|W_0W_1V_1V_2)+I(W_1;Y_1|W_0V_1V_2)+I(W_0V_1V_2;Y_2), \eqa where $X$ is a
deterministic function of $(W_0,W_1,W_2,V_1,V_2)$, and $W_0,W_1,W_2$
are uniformly distributed.
\end{proposition}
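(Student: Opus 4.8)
\noindent\emph{Proof outline.} The statement is a converse, so the plan is the standard single-letterization of an arbitrary good code. Fix an achievable triple $(R_0,R_1,R_2)$ and a sequence of block-length-$n$ codes with $P_{e,1}^{(n)},P_{e,2}^{(n)}\to 0$. The messages $W_0,W_1,W_2$ are independent and uniform on their sets, and, representing the stochastic encoder functionally, we may write $X^n=\phi(W_0,W_1,W_2,\Theta)$ for a random seed $\Theta$ independent of $(W_0,W_1,W_2)$ and of the channel noise. Fano's inequality at the two decoders gives $H(W_0W_1\,|\,Y_1^n)\le n\delta_n$ and $H(W_0W_2\,|\,Y_2^n)\le n\delta_n$ with $\delta_n\to 0$, hence also $H(W_0\,|\,Y_1^n),H(W_1\,|\,Y_1^n)\le n\delta_n$ and $H(W_0\,|\,Y_2^n),H(W_2\,|\,Y_2^n)\le n\delta_n$.

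Next I would choose the auxiliary random variables so that the factorization in the statement holds. Let $Q$ be uniform on $\{1,\dots,n\}$ and independent of everything, put $V_{1,i}=(Y_1^{i-1},\Theta)$ and $V_{2,i}=(Y_{2,i+1}^n,\Theta)$, and set
\[
V_1=(Q,V_{1,Q}),\qquad V_2=(Q,V_{2,Q}),
\]
\[
X=X_Q,\qquad Y_1=Y_{1,Q},\qquad Y_2=Y_{2,Q},
\]
keeping $W_0,W_1,W_2$ as they are. Then the claimed product form $p(w_0)p(w_1)p(w_2)p(v_1,v_2|w_0,w_1,w_2)p(x|v_1,v_2,w_0,w_1,w_2)p(y_1,y_2|x)$ holds: the $W$'s are independent and uniform; $X=X_Q$ is a deterministic function of $(W_0,W_1,W_2,V_1)$, which is the reason $\Theta$ sits inside the $V$'s and is harmless since $\Theta$ is independent of the channel noise; and the Markov chain $(W_0W_1W_2V_1V_2)\to X\to(Y_1,Y_2)$ follows from the memoryless property of the channel, because given $Q=i$ and $X_i$ the remaining conditioning variables involve only $\Theta$ and inputs and noise in coordinates other than $i$.

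The inequalities are then obtained coordinate by coordinate, always expanding $I(\,\cdot\,;Y_1^n)$ in the forward order (so the $i$-th term conditions on $Y_1^{i-1}$, a part of $V_{1,i}$) and $I(\,\cdot\,;Y_2^n)$ in the reverse order (so the $i$-th term conditions on $Y_{2,i+1}^n$, a part of $V_{2,i}$), all conditioned on $\Theta$. The single-message and ``uncrossed'' pair bounds are then immediate: for instance $nR_1=H(W_1\,|\,\Theta)\le I(W_1;Y_1^n\,|\,\Theta)+n\delta_n=\sum_i I(W_1;Y_{1,i}\,|\,Y_1^{i-1},\Theta)+n\delta_n=nI(W_1;Y_1\,|\,V_1)+n\delta_n$, and likewise $R_2\le I(W_2;Y_2|V_2)$, $R_0\le\min[I(W_0;Y_1|V_1),I(W_0;Y_2|V_2)]$, $R_0+R_1\le I(W_0W_1;Y_1|V_1)$, $R_0+R_2\le I(W_0W_2;Y_2|V_2)$, all up to $n\delta_n$. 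The remaining bounds --- the second branches of (\ref{eq.NJthem4})--(\ref{eq.NJthem5}) and the four sum-rate bounds (\ref{eq.NJthem6})--(\ref{eq.NJthem9}) --- are the substantive part: one starts from decompositions such as $n(R_0+R_1+R_2)\le I(W_0W_1;Y_1^n)+I(W_2;Y_2^n\,|\,W_0W_1)+n\delta_n$ (valid because $W_2\perp(W_0,W_1)$, so conditioning cannot decrease $I(W_2;Y_2^n\,|\,\cdot)$), expands each multi-letter term by the chain rule, and then uses the Csisz\'{a}r sum identity $\sum_i I(Y_1^{i-1};Y_{2,i}\,|\,Y_{2,i+1}^n,W,\Theta)=\sum_i I(Y_{2,i+1}^n;Y_{1,i}\,|\,Y_1^{i-1},W,\Theta)$ to convert ``past of $Y_1$'' terms into ``future of $Y_2$'' terms and vice versa. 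After regrouping, the surviving single-letter quantities are precisely those in (\ref{eq.NJthem4})--(\ref{eq.NJthem9}), with the leftover unconditioned pieces $I(W_0V_1V_2;Y_1)$ and $I(W_0V_1V_2;Y_2)$ showing up in (\ref{eq.NJthem8})--(\ref{eq.NJthem9}); absorbing $Q$ into $V_1,V_2$ (already done) and letting $n\to\infty$ removes the $\delta_n$-terms.

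I expect the main obstacle to be the combined bookkeeping rather than any single step: the chain-rule expansions and the repeated Csisz\'{a}r-sum substitutions must be organized so that they collapse to exactly these nine inequalities rather than to a weaker set, and a single assignment of $(V_1,V_2)$ must serve all of them simultaneously. A secondary point to verify is the consistency of the product factorization with $V_1,V_2$ depending on channel outputs --- this works only because the per-letter chain $(\cdot)\to X_i\to(Y_{1,i},Y_{2,i})$ is preserved, while determinism of $X$ and uniformity of the $W$'s are secured by placing the seed $\Theta$ inside the auxiliaries and by the standard message conventions, respectively. The argument runs parallel to the converses behind $\Rmat_{LK}$ and $\Rmat_{NE}$, the finer split of the $Y_1$- and $Y_2$-terms being what produces the additional tightness.
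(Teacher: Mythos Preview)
The paper does not contain its own proof of this proposition: it is quoted verbatim as a result of Liang, Kramer, and Shamai (Shitz) from \cite{Liang-etal:08ITW}, and no argument is supplied in the present paper. Consequently there is nothing here to compare your attempt against.

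That said, your outline is the standard converse argument that one would expect to find in \cite{Liang-etal:08ITW}: Fano at both decoders, forward chain-rule expansion of $I(\cdot;Y_1^n)$ and backward expansion of $I(\cdot;Y_2^n)$, repeated Csisz\'{a}r sum identities to shuffle past-of-$Y_1$ and future-of-$Y_2$ terms, and the identification $V_{1,i}=Y_1^{i-1}$, $V_{2,i}=Y_{2,i+1}^n$ together with a time-sharing variable. Placing the encoder seed $\Theta$ inside both auxiliaries is the clean way to secure the ``$X$ deterministic'' clause without disturbing the per-letter Markov chain, and your justification of the factorization is correct. The one place to be careful is the unconditioned terms $I(W_0V_1V_2;Y_1)$ and $I(W_0V_1V_2;Y_2)$ in the last two sum-rate bounds: when you write these single-letter quantities, $Q$ is part of $V_1$ and $V_2$, so the resulting expression is $\frac{1}{n}\sum_i I(W_0Y_1^{i-1}Y_{2,i+1}^n\Theta;Y_{1,i})$ rather than an average of conditional informations, and you should check that the multi-letter decomposition you start from really collapses to this (it does, but the bookkeeping is where mistakes creep in). Otherwise the plan is sound.
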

We refer to this new outer bound as $\Rmat_{LKS}$.

\section{An Achievable Rate Equivocation Region \label{sec:achievable}}
Our proposed achievable rate equivocation region for DMBC-2CM is given in
Theorem \ref{thm:achievable}. The coding scheme combines binning, superposition
coding, and rate splitting. For the rate constraints,
the binning approach in \cite{ElGamal&vanderMeulen:81IT} is
supplemented with superposition coding to accommodate the common message.
An additional binning is introduced for confidentiality of private messages. 
We note that this double binning technique has been used by various authors for communication
involving confidential messages (see, e.g., \cite{Chen&HanVinck:06ISIT,Liu-etal:08IT}).

Different from that of \cite{Liu-etal:08IT}, 
we make explicit use of rate splitting for the two private messages
in order to boost the rates $R_1$ and $R_2$. We note that this rate splitting was 
implicitly used in \cite{Csiszar&Korner:78IT} (specifically, proof of Lemma 3 
in \cite{Csiszar&Korner:78IT}). To be precise, we split the private message
$W_1\in\{1,\cdot\cdot\cdot, 2^{nR_1}\}$ into 
$W_{11}\in\{1,\cdot\cdot\cdot, 2^{nR_{11}}\}$ and
$W_{10}\in\{1,\cdot\cdot\cdot, 2^{nR_{10}}\}$, and
$W_2\in\{1,\cdot\cdot\cdot, 2^{nR_2}\}$  into
$W_{22}\in\{1,\cdot\cdot\cdot, 2^{nR_{22}}\}$ and
$W_{20}\in\{1,\cdot\cdot\cdot, 2^{nR_{20}}\}$, respectively.
$W_{11}$ and $W_{22}$ are only to be decoded by intended receivers while
$W_{10}$ and $W_{20}$ are to be decoded by both receivers. Notice that
this rate splitting is typically used in interference channels to
achieve a larger rate region as it enables interference
cancellation at the receivers. 
It is clear that this rate splitting is prohibited if perfect secrecy is required
as in \cite{Liu-etal:08IT}.
Now, we combine $(W_{10}, W_{20}, W_0)$ together into a single auxiliary variable
$U$. The messages $W_{11}$ and $W_{22}$ are represented by auxiliary variables
$V_1$ and $V_2$ respectively.

The achievable rate equivocation for a DMBC-2CM is formally stated below.
\begin{theorem}\label{thm:achievable}
Let $\Rmat_I$ be the union of all non-negative 
rate quintuple $(R_1, R_2, R_0, R_{e1},R_{e2})$ satisfying
\bqa
R_{e1}&\leq& R_1\label{eq:BCCM_rate(1)}\\
R_{e2}&\leq& R_2\label{eq:BCCM_rate(2)}\\
R_0&\leq&
\min[I(U;Y_1), I(U;Y_2)]\label{eq:BCCM_rate(3)}\\
R_1+R_0&\leq& I(V_1;Y_1|U)+\min[I(U;Y_1),I(U;Y_2)]\label{eq:BCCM_rate(4)}\\
R_2+R_0&\leq&
I(V_2;Y_2|U)+\min[I(U;Y_1),I(U;Y_2)]\label{eq:BCCM_rate(5)}\\
R_1+R_2+R_0&\leq&
I(V_1;Y_1|U)+I(V_2;Y_2|U)-I(V_1;V_2|U)
+\min[I(U;Y_1),I(U;Y_2)]\label{eq:BCCM_rate(6)}\\
R_{e1}&\leq& I(V_1;Y_1|U)-I(V_1;Y_2V_2|U)\label{eq:BCCM_rate(7)}\\
R_{e2}&\leq&I(V_2;Y_2|U)-I(V_2;Y_1V_1|U)\label{eq:BCCM_rate(8)}
\eqa 
over all $(U,V_1,V_2,X,Y_1, Y_2)\in \Qmat_2$.
Then $\Rmat_I$ is an achievable rate region for the DMBC-2CM.
\end{theorem}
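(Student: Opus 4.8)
The plan is to establish achievability by a random coding argument that combines superposition coding, Marton-style binning for the pair $(V_1,V_2)$, and an additional layer of binning to guarantee the equivocation constraints \eqref{eq:BCCM_rate(7)}--\eqref{eq:BCCM_rate(8)}. First I would fix $Z=(U,V_1,V_2,X,Y_1,Y_2)\in\Qmat_2$, so that $U\to V_1V_2\to X\to Y_1Y_2$. Following the rate-splitting discussion preceding the theorem, write $R_1=R_{10}+R_{11}$ and $R_2=R_{20}+R_{22}$, and carry the common part $(W_0,W_{10},W_{20})$ on the cloud center $U$ and the private parts $W_{11},W_{22}$ on the satellite codewords $V_1,V_2$. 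For the codebook: generate $2^{n(R_0+R_{10}+R_{20})}$ i.i.d.\ $\ubf$ codewords according to $p(u)$; for each $\ubf$, generate $2^{n(R_{11}+\tilde R_{11}+L_1)}$ codewords $\vbf_1$ according to $\prod p(v_{1i}|u_i)$ and $2^{n(R_{22}+\tilde R_{22}+L_2)}$ codewords $\vbf_2$ according to $\prod p(v_{2i}|u_i)$. Here $\tilde R_{11},\tilde R_{22}$ are the usual Marton binning rates needed so that a jointly typical pair $(\vbf_1,\vbf_2)$ exists in the designated product bin, and $L_1,L_2$ are the extra randomization rates devoted to confusing the eavesdropping receiver. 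The stochastic encoder, given $(w_0,w_1,w_2)$, splits the messages, picks the $\vbf_1$-bin indexed by $(w_{11},\ell_1)$ and the $\vbf_2$-bin indexed by $(w_{22},\ell_2)$ with $\ell_1,\ell_2$ uniform, searches for a jointly typical $(\ubf,\vbf_1,\vbf_2)$ within those bins, and transmits $\xbf\sim\prod p(x_i|v_{1i},v_{2i})$.

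Next I would handle reliability. The Marton covering lemma guarantees that such a jointly typical triple exists with high probability provided
\[
\tilde R_{11}+\tilde R_{22}\ \ge\ I(V_1;V_2|U)
\]
(with each of $\tilde R_{11},\tilde R_{22}\ge 0$), after choosing the total satellite-codebook sizes appropriately. Decoder~1 jointly-typicality decodes $(\ubf,\vbf_1)$ and recovers $(w_0,w_{10},w_{20},w_{11})$, hence $(w_0,w_1)$; decoder~2 symmetrically recovers $(w_0,w_2)$. Standard error analysis (packing lemma) gives vanishing error probability as long as the total rates on each branch are below the corresponding mutual information: bounding $R_{11}+\tilde R_{11}+L_1$, the cloud rate $R_0+R_{10}+R_{20}$, and their sums against $I(V_1;Y_1|U)$, $I(U;Y_1)$, $I(U;Y_2)$ (and symmetrically for user~2). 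Collecting these inequalities and then performing Fourier--Motzkin elimination of the auxiliary rates $R_{10},R_{20},R_{11},R_{22},\tilde R_{11},\tilde R_{22},L_1,L_2$ should yield exactly \eqref{eq:BCCM_rate(3)}--\eqref{eq:BCCM_rate(6)}; the min's over $I(U;Y_1),I(U;Y_2)$ arise because the cloud center must be decoded by both receivers, and the $-I(V_1;V_2|U)$ term in \eqref{eq:BCCM_rate(6)} is the residue of the Marton constraint.

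Finally, the equivocation analysis, which is the main obstacle. I need $\tfrac1n H(W_1|\Ybf_2)\ge R_{e1}$. Conditioning on $U$ (which decoder~2 decodes and so is "for free"), I would lower-bound $H(W_{11}|\Ybf_2,U)$. The idea is the standard secrecy argument: if the randomization rate $L_1$ is chosen so that receiver~2, who already may know $\vbf_2$, can in fact decode the pair $(W_{11},\ell_1)$ from $\Ybf_2$ whenever
\[
R_{11}+L_1\ \le\ I(V_1;Y_2|V_2U),
\]
then a list-decoding / Fano-type argument gives $H(W_{11},\ell_1\mid \Ybf_2 V_2 U)$ small, while $H(W_{11},\ell_1\mid \vbf_1)=0$ and $H(\ell_1\mid W_{11})=nL_1$; combining these yields $H(W_{11}\mid \Ybf_2 U)\gtrsim n(R_{11}+L_1) - nI(V_1;Y_2|V_2U) \approx nR_{11}$ once $L_1$ is pushed up to $I(V_1;Y_2|V_2U)$ (minus the small slack used for covering, i.e. $L_1 = I(V_1;Y_2|V_2U) - \tilde R_{11}$ roughly). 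Since the satellite codebook size for $V_1$ is $R_{11}+\tilde R_{11}+L_1 \le I(V_1;Y_1|U)$ from reliability, substituting $L_1+\tilde R_{11}=I(V_1;Y_2|V_2U)$ gives $R_{e1}=R_{11}\le I(V_1;Y_1|U)-I(V_1;Y_2|V_2U)$, which — since we are in $\Qmat_2$ so that $U\to V_1V_2\to X\to Y_1Y_2$ and $I(V_1;Y_2V_2|U)=I(V_1;V_2|U)+I(V_1;Y_2|V_2U)$ — needs to be reconciled with the stated bound $R_{e1}\le I(V_1;Y_1|U)-I(V_1;Y_2V_2|U)$; the extra $I(V_1;V_2|U)$ is exactly the Marton binning cost $\tilde R_{11}$ already spent, so the bookkeeping closes. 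The delicate points are (i) verifying that the joint typicality of $(\vbf_1,\vbf_2,\ybf_2)$ used in the secrecy decoding step is compatible with the Markov structure of $\Qmat_2$, (ii) making the randomization rates $L_1,L_2,\tilde R_{11},\tilde R_{22}$ simultaneously satisfy the covering constraint, the two reliability constraints, and the two secrecy constraints — this is where a careful Fourier--Motzkin bookkeeping, and possibly a symmetrization / time-sharing argument for the two-sided coupling $I(V_1;V_2|U)$, is required — and (iii) the standard technical steps (removing the jointly-atypical error event, averaging over the random codebook to extract a good deterministic one that is simultaneously reliable and secure, and the continuity argument to pass from $\ge R_{ei}-\epsilon$ to the closure). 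I expect step (ii), the simultaneous satisfaction of all rate constraints on the auxiliary randomization rates and the resulting elimination, to be the crux; once that is set up, \eqref{eq:BCCM_rate(1)}--\eqref{eq:BCCM_rate(2)} are immediate from $H(W_i|\Ybf_j)\le H(W_i)=nR_i$ and \eqref{eq:BCCM_rate(7)}--\eqref{eq:BCCM_rate(8)} drop out of the equivocation bound.
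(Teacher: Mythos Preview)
Your overall architecture---rate splitting, superposition on $U$, Marton coding for $(V_1,V_2)$, plus extra randomization for secrecy---matches the paper's scheme, and the reliability analysis leading to \eqref{eq:BCCM_rate(3)}--\eqref{eq:BCCM_rate(6)} via Fourier--Motzkin is correct in outline.

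The equivocation step, however, has a genuine error. You assert that receiver~2 ``can in fact decode the pair $(W_{11},\ell_1)$'' and that this makes $H(W_{11},\ell_1\mid \Ybf_2\Vbf_2\Ubf)$ small. But if receiver~2 could decode $W_{11}$, the equivocation would vanish rather than equal $R_{11}$. The correct device (and the one the paper uses) is the reverse conditioning: one shows that receiver~2, \emph{given} $W_{11}$ (together with $\Vbf_2,\Ubf$), can reliably recover $\Vbf_1$; Fano then yields $H(\Vbf_1\mid W_{11}\Ybf_2\Vbf_2\Ubf)\le n\epsilon_n$. Combining this with $H(\Vbf_1\mid\Ubf)\approx nI(V_1;Y_1|U)$ and $I(\Vbf_1;\Ybf_2\Vbf_2\mid\Ubf)\lesssim nI(V_1;Y_2V_2|U)$ gives $H(W_{11}\mid\Ybf_2)\gtrsim n[I(V_1;Y_1|U)-I(V_1;Y_2V_2|U)]$.

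There is a related bookkeeping slip. Because the $\vbf_1$ codewords are generated $\sim\prod p(v_{1t}\mid u_t)$ \emph{independently of $V_2$ given $U$}, and the virtual decoder at receiver~2 has access to both $\Ybf_2$ and $\Vbf_2$, the decodability threshold for $\Vbf_1$ is $I(V_1;Y_2V_2|U)$, not $I(V_1;Y_2|V_2U)$. Thus the total non-message rate on the $V_1$ branch (your $L_1+\tilde R_{11}$) must be set to $I(V_1;Y_2V_2|U)$. With your choice $L_1+\tilde R_{11}\approx I(V_1;Y_2|V_2U)$ you only get $R_{11}\le I(V_1;Y_1|U)-I(V_1;Y_2|V_2U)$, which is strictly weaker than the target \eqref{eq:BCCM_rate(7)}; the claim that ``the extra $I(V_1;V_2|U)$ is exactly $\tilde R_{11}$'' does not rescue this, since $\tilde R_{11}$ is already inside the codebook-size constraint $R_{11}+\tilde R_{11}+L_1\le I(V_1;Y_1|U)$. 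The paper avoids this pitfall by hard-wiring a three-layer bin structure with sizes $L_{11}=I(V_1;Y_1|U)-I(V_1;Y_2V_2|U)$, $L_{12}=I(V_1;Y_2|V_2U)$, $L_3=I(V_1;V_2|U)$, proving the corner point $R_{e1}=L_{11}$ directly, and then invoking the convexity argument of Csisz\'ar--K\"orner (their Lemmas~5--6) for the full region, which also sidesteps the simultaneous-constraint difficulty you flag in~(ii).
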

\begin{proof}
See Appendix \ref{sec:proofoftheorem1}.
\end{proof}

\emph{Remark 1:} The region $\Rmat_I$ remains the same if we replace $\Qmat_2$ with $\Qmat_1$. Formally,
\begin{proposition}
Define $\Rmat_I'$ to be the union of all non-negative 
rate quintuple $(R_1, R_2, R_0, R_{e1},R_{e2})$ satisfying (\ref{eq:BCCM_rate(1)})-(\ref{eq:BCCM_rate(8)}) over $Z\in \Qmat_1$, then
\beq
\Rmat_I=\Rmat_I'
\eeq
\end{proposition}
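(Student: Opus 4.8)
The plan is to follow the same template as the proof of Lemma~\ref{lem:ck}. The inclusion $\Rmat_I \subseteq \Rmat_I'$ is immediate from $\Qmat_2 \subseteq \Qmat_1$, so the whole content is the reverse inclusion $\Rmat_I' \subseteq \Rmat_I$: every rate quintuple that can be supported by a tuple in $\Qmat_1$ can also be supported by a tuple in $\Qmat_2$.

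First I would fix an arbitrary $(R_1,R_2,R_0,R_{e1},R_{e2}) \in \Rmat_I'$ together with a witnessing $Z = (U,V_1,V_2,X,Y_1,Y_2) \in \Qmat_1$ for which (\ref{eq:BCCM_rate(1)})--(\ref{eq:BCCM_rate(8)}) hold. Then I would define a new tuple by $U' = U$, $V_1' = (U,V_1)$, $V_2' = (U,V_2)$, leaving $X, Y_1, Y_2$ untouched, and check that $Z' = (U',V_1',V_2',X,Y_1,Y_2) \in \Qmat_2$. This is straightforward: $U'$ is a deterministic function of $(V_1',V_2')$, and the pair $(V_1',V_2')$ determines $(U,V_1,V_2)$, so the original Markov structure $UV_1V_2 \to X \to Y_1Y_2$ of $\Qmat_1$ upgrades to $U' \to V_1'V_2' \to X \to Y_1Y_2$, which is exactly the defining chain of $\Qmat_2$.

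The remaining step is to verify that the right-hand sides of (\ref{eq:BCCM_rate(1)})--(\ref{eq:BCCM_rate(8)}) are unchanged when evaluated at $Z'$, so that the same quintuple satisfies the inequalities with the new tuple and hence lies in $\Rmat_I$. Inequalities (\ref{eq:BCCM_rate(1)})--(\ref{eq:BCCM_rate(3)}) involve only $U$ or no auxiliary variable and $U' = U$, so they are literally unchanged. For the remaining five, the computations all reduce to chain-rule identities such as $I(V_1';Y_1|U') = I(UV_1;Y_1|U) = I(V_1;Y_1|U)$, $I(V_1';V_2'|U') = I(UV_1;UV_2|U) = I(V_1;V_2|U)$, and $I(V_1';Y_2V_2'|U') = I(UV_1;Y_2UV_2|U) = I(V_1;Y_2V_2|U)$: conditioning on $U' = U$ simply deletes the extra copies of $U$ that were appended to $V_1$ and $V_2$. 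Feeding these identities into (\ref{eq:BCCM_rate(4)})--(\ref{eq:BCCM_rate(8)}) shows every bound is preserved verbatim, which establishes $\Rmat_I' \subseteq \Rmat_I$ and hence equality.

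I do not expect a real obstacle; this is essentially a bookkeeping argument. The one place that warrants care is the equivocation constraints (\ref{eq:BCCM_rate(7)})--(\ref{eq:BCCM_rate(8)}), where one must make sure that the auxiliary variable $V_2$ appearing inside $I(V_1;Y_2V_2|U)$ is correctly replaced by $V_2' = (U,V_2)$ and that the resulting conditional mutual information is genuinely invariant --- a one-line check with the chain rule, since the appended copy of $U$ is annihilated by the conditioning. Because all eight right-hand sides turn out invariant under the substitution, no convexification or additional argument is required and the two regions coincide.
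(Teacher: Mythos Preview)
Your proposal is correct and follows exactly the same approach as the paper's proof: both set $U'=U$, $V_1'=(U,V_1)$, $V_2'=(U,V_2)$ and observe that the resulting tuple lies in $\Qmat_2$ while leaving all eight inequalities unchanged. Your version simply spells out in more detail the chain-rule identities that the paper leaves to the reader.
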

\begin{proof} The fact that $\Rmat_I\subseteq \Rmat_I'$ follows trivially from $\Qmat_2\subseteq \Qmat_1$. 

We now show $\Rmat_I'\subseteq \Rmat_I$. Assume
$(R_1, R_2, R_0, R_{e1},R_{e2})\in \Rmat_I'$, i.e., 
there exists $(U,V_1,V_2,X,Y_1,Y_2) \in \Qmat_1$ such that 
$(R_1, R_2, R_0, R_{e1},R_{e2})$ satisfies (\ref{eq:BCCM_rate(1)})-(\ref{eq:BCCM_rate(8)}). The proof is completed by definining
$U'=U$, $V_1'=UV_1$, and $V_2'=UV_2$ and observe that the same 
$(R_1, R_2, R_0, R_{e1},R_{e2})$ satisfies (\ref{eq:BCCM_rate(1)})-(\ref{eq:BCCM_rate(8)})
for $(U',V_1',V_2',X,Y_1,Y_2)\in \Qmat_2$.
\end{proof}

This achievable rate equivocation region unifies many existing results which we enumerate below.
\subsection{Csisz\'{a}r and K\"{o}rner's region}
In \cite{Csiszar&Korner:78IT}, Csisz\'{a}r and K\"{o}rner
characterized the rate equivocation region for broadcast channels with
a single confidential message and a common message. 

By setting $R_2=0$ and $R_{e2}=0$ in Theorem \ref{thm:achievable}, 
it is easy to see $\Rmat_I$ reduces to Csisz\'{a}r and
K\"{o}rner's capacity region $\Rmat_{CK}$ described in Proposition
\ref{prop:Csiszar&Korner}.

\subsection{Liu et al's region}
In \cite{Liu-etal:08IT}, Liu {\em et al} proposed an achievable rate region
for broadcast channel with confidential messages where there are
two private message and no common message. In addition, the private messages are 
to be perfectly protected from the unintended receivers.

By setting $R_1=R_{e1}$, $R_2=R_{e2}$ and $R_0=0$ in Theorem
\ref{thm:achievable}, one can easily check that $\Rmat_I$
reduces to Liu {\em et al}'s achievable rate region $R_{LMSY-I}$ described in
Proposition \ref{prop:Liu_i}.


\subsection{Gel'fand and Pinsker's region}
In \cite{Gelfand&Pinsker:80PIT}, Gel'fand and Pinkser generalized
Marton's result by proposing an achievable rate region for
broadcast channels with common message. If we remove the secrecy
constraints in our model by setting $R_{e1}=0$ and $R_{e2}=0$ in
Theorem \ref{thm:achievable}, we obtain an achievable rate region
for the general DMBC, denoted by
$\hat{\Rmat}$, with the exact expressions in
(\ref{eq:GP1})-(\ref{eq:GP4}) with 
$U\rightarrow (V_1,V_2)\rightarrow X \rightarrow (Y_1,Y_2)$.
From Proposition \ref{prop:Gelfand&Pinsker} and Lemma \ref{lem:GP},
$\hat{\Rmat}=\Rmat_{GP}$.

\emph{Remark 2:} The proofs in \cite{Marton:79IT,Gelfand&Pinsker:80PIT} both
use a corner point approach. A binning approach was used in \cite{ElGamal&vanderMeulen:81IT}
to prove a weakened version of \cite[Theorem 2]{Marton:79IT}. The proof
introduced in the present paper, by stripping out all confidentiality
constraints, provides a new way to prove the general achievable rate region
of DMBC \cite[Theorem 1]{Gelfand&Pinsker:80PIT} \cite[Theorem 2]{Marton:79IT} 
along the line of \cite{ElGamal&vanderMeulen:81IT}.

\section{Outer bounds}
Define $\Rmat_{O1}$ to be the union, over all $Z\in \Qmat_1$, 
of non-negative rate quintuple $(R_0,R_1,R_2,R_{e1},R_{e2})$ 
satisfying
\bqa
\label{eq:ob1} R_{e1} &\leq& R_1\\
\label{eq:ob2} R_{e2} &\leq& R_2\\
\label{eq:ob3} R_0 &\leq &\min[I(U;Y_1),I(U;Y_2)]\\
\label{eq:ob4}R_0 + R_1 &\leq& I(V_1;Y_1|U)+\min[I(U;Y_1),I(U;Y_2)]\\
\label{eq:ob5}R_0 + R_2 &\leq& I(V_2;Y_2|U)+\min[I(U;Y_1),I(U;Y_2)]\\
\label{eq:ob6}R_0 + R_1 + R_2 &\leq& I(V_2;Y_2|V_1U)+ I(V_1;Y_1|U)+\min[I(U;Y_1),I(U;Y_2)]\\
\label{eq:ob7}R_0 + R_1 + R_2 &\leq& I(V_1;Y_1|V_2U)+
I(V_2;Y_2|U)+\min[I(U;Y_1),I(U;Y_2)]\\
\label{eq:ob8}R_{e1} &\leq& \min[I(V_1;Y_1|U)-I(V_1;Y_2|U), I(V_1;Y_1|V_2U)-I(V_1;Y_2|V_2U)]\\
\label{eq:ob9}R_{e2} &\leq& \min[I(V_2;Y_2|U)-I(V_2;Y_1|U), I(V_2;Y_2|V_1U)-I(V_2;Y_1|V_1U)]. 
\eqa
Similarly, define $\Rmat_{O2}$ and $\Rmat_{O3}$ in exactly the same
fashion except with $\Qmat_1$ replaced by $\Qmat_2$ and $\Qmat_3$,
respectively. We have
\begin{theorem} \label{thm:outerbound}
$\Rmat_{O1}$, $\Rmat_{O2}$, and $\Rmat_{O3}$ are all outer bounds
to the rate equivocation region of the DMBC-2CM.
\end{theorem}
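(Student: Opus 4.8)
The plan is a single-letter converse. Fix an achievable quintuple and a sequence of codes with $P_{e,1}^{(n)},P_{e,2}^{(n)}\to 0$. Since $\varphi_1$ recovers $(W_0,W_1)$ and $\varphi_2$ recovers $(W_0,W_2)$, Fano's inequality gives $H(W_0,W_1|\Ybf_1)\le n\epsilon_n$ and $H(W_0,W_2|\Ybf_2)\le n\epsilon_n$ with $\epsilon_n\to 0$; thus each of $W_0,W_1$ is almost a function of $\Ybf_1$ and each of $W_0,W_2$ almost a function of $\Ybf_2$. I would introduce, for $i=1,\dots,n$, an auxiliary $U_i=(W_0,Y_1^{i-1},Y_{2,i+1}^{n})$ together with $V_{1i},V_{2i}$ obtained by appending $W_1$, resp.\ $W_2$, to suitable sub-blocks of the past/future outputs, plus a time-sharing variable $Q$ uniform on $\{1,\dots,n\}$ and independent of everything; then put $U=(Q,U_Q)$, $V_1=(Q,V_{1Q})$, $V_2=(Q,V_{2Q})$, $X=X_Q$, $Y_j=Y_{jQ}$. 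Memorylessness of the channel forces $(U,V_1,V_2)\to X\to(Y_1,Y_2)$, so the single-letter law lies in $\Qmat_1$; moreover the relabelling $U'=U$, $V_1'=UV_1$, $V_2'=UV_2$ of Remark~1 maps $\Qmat_1$ into $\Qmat_2$ and leaves each of (\ref{eq:ob1})--(\ref{eq:ob9}) unchanged, so $\Rmat_{O1}=\Rmat_{O2}$ and it is enough to establish $\Rmat_{O1}$ and, separately, $\Rmat_{O3}$.

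The nine inequalities fall into three groups. Constraints (\ref{eq:ob1})--(\ref{eq:ob2}) are immediate, since $nR_{e1}\le H(W_1|\Ybf_2)\le H(W_1)\le nR_1+o(n)$ and symmetrically. Constraints (\ref{eq:ob3})--(\ref{eq:ob7}) are the Marton / K\"{o}rner--Marton / Liang--Kramer type rate bounds \cite{Marton:79IT,Liang&Kramer:07IT}: (\ref{eq:ob3}) because $W_0$ is decoded by both receivers, so $nR_0\le\min[I(W_0;\Ybf_1),I(W_0;\Ybf_2)]+n\epsilon_n$, which single-letterizes --- after enlarging the conditioning to $U_i$ on each side --- to $n\min[I(U;Y_1),I(U;Y_2)]$; (\ref{eq:ob4})--(\ref{eq:ob5}) because receiver $j$ decodes both $W_0$ and $W_j$; and the sum bounds (\ref{eq:ob6})--(\ref{eq:ob7}) by bounding $H(W_0,W_1,W_2)$ and invoking the Csisz\'{a}r sum identity to exchange the ``$Y_1^{i-1}\leftrightarrow Y_{2,i+1}^{n}$'' cross terms, as in the converse for $\Rmat_{LK}$.

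Constraints (\ref{eq:ob8})--(\ref{eq:ob9}) are the secrecy ingredient and follow the equivocation computation of Csisz\'{a}r and K\"{o}rner \cite{Csiszar&Korner:78IT}. For (\ref{eq:ob8}): using that $W_1$ is almost decodable from $\Ybf_1$ and $W_0$ from $\Ybf_2$, one gets $nR_{e1}\le H(W_1|\Ybf_2)\le I(W_1;\Ybf_1|W_0)-I(W_1;\Ybf_2|W_0)+n\epsilon_n$; conditioning in addition on $W_2$ (decoded at receiver $2$) gives $nR_{e1}\le I(W_1;\Ybf_1|W_0W_2)-I(W_1;\Ybf_2|W_0W_2)+n\epsilon_n$. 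The Csisz\'{a}r--K\"{o}rner identity single-letterizes the first difference to $\sum_i[I(V_{1i};Y_{1i}|U_i)-I(V_{1i};Y_{2i}|U_i)]=n(I(V_1;Y_1|U)-I(V_1;Y_2|U))$ and the second to the same expression with $V_{2i}U_i$ in place of $U_i$; the minimum of the two is (\ref{eq:ob8}), and (\ref{eq:ob9}) is symmetric. For $\Rmat_{O3}$ the whole argument is rerun, replacing the identification above by the one of Nair and El Gamal \cite{Nair&ElGamal:07IT}, which is arranged so that $V_1$ and $V_2$ become independent, thereby placing the single-letter law in $\Qmat_3$.

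I expect the main obstacle to be producing a single identification $(U,V_1,V_2)$ that discharges all nine inequalities at once. In particular, recovering the sharp ``$\min[I(U;Y_1),I(U;Y_2)]$'' in (\ref{eq:ob4})--(\ref{eq:ob7}) --- which makes $\Rmat_{O1}$ strictly stronger than the one-sided bounds inside $\Rmat_{LK}$ --- requires pitting the common-message entropy against the \emph{weaker} receiver while pitting the private-message entropy against the \emph{intended} one, and then cancelling the leftover Csisz\'{a}r-sum terms so that no rate is lost; this is exactly where the new outer bound departs from existing DMBC converses. A secondary difficulty is the $\Qmat_3$ version: under the Nair--El Gamal identification $V_{1i},V_{2i}$ no longer contain $U_i$, so the equivocation bounds (\ref{eq:ob8})--(\ref{eq:ob9}) must be re-derived with the conditioning rearranged, which is the most delicate part of the bookkeeping.
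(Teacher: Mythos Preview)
Your proposal is correct and follows the paper's proof essentially verbatim: the same identification $U_i=(W_0,Y_1^{i-1},\tilde Y_2^{i+1})$ with a time-sharing index, Fano plus the Csisz\'{a}r sum identity to extract the $\min[I(U;Y_1),I(U;Y_2)]$ term in the rate bounds, and the two equivocation chains (conditioning on $W_0$ alone and on $W_0W_2$) single-letterized to the two arguments of the $\min$ in (\ref{eq:ob8}). The one place you overestimate the work is $\Rmat_{O3}$: the paper does \emph{not} rerun a Nair--El~Gamal style argument but simply reuses the identical single-letter inequalities and sets $V_1\triangleq W_1$, $V_2\triangleq W_2$ (the independent message variables) instead of $V_j\triangleq W_jU$ --- since every bound was already expressed through $I(W_j;\,\cdot\,|U_i\cdots)$, nothing has to be re-derived.
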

\begin{proof} The proof that $\Rmat_{O2}$ and $\Rmat_{O3}$ are outer bounds
is given in Appendix \ref{sec:proofoftheorem2}. That $\Rmat_{O1}$ is an
outer bound follows directly from Proposition \ref{prop:outerbound}.
\end{proof}

\begin{proposition} \label{prop:outerbound}
\beq
\Rmat_{O3}\subseteq\Rmat_{O1}=\Rmat_{O2}. \label{eq:outerbound} 
\eeq
\end{proposition}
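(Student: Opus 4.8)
The plan is to get the two containments $\Rmat_{O2}\subseteq\Rmat_{O1}$ and $\Rmat_{O3}\subseteq\Rmat_{O1}$ for free from the inclusions $\Qmat_2\subseteq\Qmat_1$ and $\Qmat_3\subseteq\Qmat_1$, and then to prove the only substantive part, $\Rmat_{O1}\subseteq\Rmat_{O2}$, by the same auxiliary-variable relabeling used in the proof of Lemma \ref{lem:ck}. Since the nine constraints (\ref{eq:ob1})--(\ref{eq:ob9}) defining $\Rmat_{O1}$, $\Rmat_{O2}$ and $\Rmat_{O3}$ are identical and differ only in the class of admissible joint laws, any $Z$ contributing to the union that defines $\Rmat_{O2}$ or $\Rmat_{O3}$ also contributes to the union that defines $\Rmat_{O1}$; this immediately yields $\Rmat_{O2}\subseteq\Rmat_{O1}$ and $\Rmat_{O3}\subseteq\Rmat_{O1}$.

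For $\Rmat_{O1}\subseteq\Rmat_{O2}$, I would take an arbitrary quintuple in $\Rmat_{O1}$ with a witness $Z=(U,V_1,V_2,X,Y_1,Y_2)\in\Qmat_1$ satisfying (\ref{eq:ob1})--(\ref{eq:ob9}), and define $Z'=(U',V_1',V_2',X,Y_1,Y_2)$ by $U'=U$, $V_1'=(U,V_1)$, $V_2'=(U,V_2)$, leaving $X,Y_1,Y_2$ untouched. The first thing to check is that $Z'\in\Qmat_2$: $U'$ is a deterministic function of $(V_1',V_2')$, so $U'\to V_1'V_2'\to X\to Y_1Y_2$ holds with the first link degenerate; and because $(V_1',V_2')$ determines $(U,V_1,V_2)$ while $p(y_1,y_2\,|\,x,u,v_1,v_2)=p(y_1,y_2\,|\,x)$ for $Z\in\Qmat_1$, the joint law of $Z'$ factors as $p(u')p(v_1',v_2'\,|\,u')p(x\,|\,v_1',v_2')p(y_1,y_2\,|\,x)$, which is exactly the $\Qmat_2$ form. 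The second thing to check is that the substitution leaves every mutual information appearing in (\ref{eq:ob1})--(\ref{eq:ob9}) unchanged: since $U'=U$ is already in the conditioning of every relevant term, $I(V_1';Y_1|U')=I(V_1;Y_1|U)$, $I(V_1';Y_1|V_2'U')=I(V_1;Y_1|V_2U)$, $I(V_1';Y_2|U')=I(V_1;Y_2|U)$, $I(V_1';Y_2|V_2'U')=I(V_1;Y_2|V_2U)$, $I(V_2';Y_2|V_1'U')=I(V_2;Y_2|V_1U)$, and so on for every term, while $I(U';Y_j)=I(U;Y_j)$; the bounds $R_{e1}\le R_1$ and $R_{e2}\le R_2$ contain no auxiliary variables. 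Hence the same quintuple satisfies the same nine inequalities with $Z'\in\Qmat_2$, so it lies in $\Rmat_{O2}$; together with $\Rmat_{O2}\subseteq\Rmat_{O1}$ this proves $\Rmat_{O1}=\Rmat_{O2}$, and hence (\ref{eq:outerbound}).

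I do not anticipate a real obstacle; the delicate point is simply verifying that $Z'$ lands in $\Qmat_2$ --- in particular that $p(x\,|\,v_1',v_2')$ is well-defined, which relies on $V_1'$ and $V_2'$ jointly carrying $U$ --- and, should the outer-bound regions be intended to carry cardinality constraints on $U$, $V_1$, $V_2$, re-establishing those for $Z'$ via the standard support-lemma argument, since $V_1'=(U,V_1)$ and $V_2'=(U,V_2)$ enlarge the alphabets; neither of these affects the rate quintuple.
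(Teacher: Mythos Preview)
Your proposal is correct and is exactly the argument the paper has in mind: the paper skips the proof of Proposition~\ref{prop:outerbound} as ``simple algebra,'' but the same relabeling $U'=U$, $V_1'=UV_1$, $V_2'=UV_2$ that you use is precisely what the paper employs for the analogous Lemma~\ref{lem:ck} and for the proposition in Remark~1. The two easy inclusions from $\Qmat_2,\Qmat_3\subseteq\Qmat_1$ plus this substitution yield (\ref{eq:outerbound}) just as you describe.
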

Proposition \ref{prop:outerbound} can be established by simple algebra whose
proof is skipped. While $\Rmat_{O3}$ subsumes both $\Rmat_{O1}$ and $\Rmat_{O2}$,
the latter expressions are often easier to use in establishing capacity results
or comparing with existing bounds. For example, it is straightforward to show
that $\Rmat_{O2}$ is tight for Csisz\'{a}r and K\"{o}rner's model
\cite{Csiszar&Korner:78IT}, i.e., DMBC with only one confidential message.

Below, we discuss various implications of Theorem \ref{thm:outerbound}.

\subsection{The rate equivocation region of less noisy DMBC-2CM}
For the DMBC defined in Section \ref{sec:model}, channel 1 is said to be 
less noisy than channel 2 \cite{Korner&Marton:77} 
if for every $V\rightarrow X\rightarrow Y_1Y_2$,
\beq
I(V;Y_1)\geq I(V;Y_2). \label{eq:lessnoisy}
\eeq
Furthermore, for every $U\rightarrow V\rightarrow X\rightarrow Y_1Y_2$, the above less
noisy condition also implies
\beq
I(V;Y_1|U)\geq I(V;Y_2|U).\label{eq:lessnoisy1}
\eeq
Using Theorems \ref{thm:achievable} and \ref{thm:outerbound}, we can establish 
the rate equivocation region for less noisy DMBC-2CM as in Theorem \ref{thm:lessnoisy}.
\begin{theorem} \label{thm:lessnoisy}
If channel 1 is less noisy than channel 2, then the rate equivocation
region for this less noisy DMBC-2CM is the set of all non-negative
$(R_0,R_1,R_2,R_{e1},R_{e2})$ satisfying 
\bqa
\label{eq:ln1} R_{e1} &\leq& R_1\\
\label{eq:ln2}R_0 + R_2 &\leq& I(U;Y_2)\\
\label{eq:ln3}R_0 + R_1 + R_2 &\leq& I(V;Y_1|U)+I(U;Y_2)\\
\label{eq:ln4}R_{e1} &\leq& I(V;Y_1|U)-I(V;Y_2|U)\\
\label{eq:ln5}R_{e2} &=& 0, 
\eqa for some $(U,V,X, Y_1,Y_2)$ such that $U\rightarrow V\rightarrow X\rightarrow Y_1Y_2$.
\end{theorem}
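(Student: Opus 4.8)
The plan is to establish Theorem~\ref{thm:lessnoisy} by showing that, for a less noisy DMBC-2CM, the inner bound $\Rmat_I$ of Theorem~\ref{thm:achievable} and (a version of) the outer bound of Theorem~\ref{thm:outerbound} both collapse to the region described by (\ref{eq:ln1})--(\ref{eq:ln5}). Since the two bounds then coincide, this region is exactly the rate equivocation region.

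\textbf{Achievability.} First I would show that the claimed region is contained in $\Rmat_I$. Given $(U,V,X,Y_1,Y_2)$ with $U\to V\to X\to Y_1Y_2$ satisfying (\ref{eq:ln1})--(\ref{eq:ln5}), I set $V_1=V$ and let $V_2$ be a constant (so $R_2$ is carried entirely by $U$, consistent with $R_{e2}=0$). With $V_2$ constant we have $I(V_1;V_2|U)=0$, $I(V_1;Y_2V_2|U)=I(V_1;Y_2|U)$, and the rate constraints (\ref{eq:BCCM_rate(1)})--(\ref{eq:BCCM_rate(8)}) of Theorem~\ref{thm:achievable} simplify. The key simplification is the less noisy condition (\ref{eq:lessnoisy1}): $I(V;Y_1|U)\ge I(V;Y_2|U)\ge 0$, and also $I(U;Y_1)\ge I(U;Y_2)$ (taking $V=U$ in (\ref{eq:lessnoisy})), so every $\min[I(U;Y_1),I(U;Y_2)]$ becomes $I(U;Y_2)$. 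One then checks that (\ref{eq:BCCM_rate(3)}), (\ref{eq:BCCM_rate(5)}), (\ref{eq:BCCM_rate(6)}) all reduce to consequences of (\ref{eq:ln2})--(\ref{eq:ln3}) — in particular (\ref{eq:BCCM_rate(6)}) becomes exactly (\ref{eq:ln3}) — while (\ref{eq:BCCM_rate(4)}) is implied by (\ref{eq:ln3}) since $R_1+R_0\le R_0+R_1+R_2$, (\ref{eq:BCCM_rate(7)}) becomes (\ref{eq:ln4}), and (\ref{eq:BCCM_rate(8)}) with $V_2$ constant gives $R_{e2}\le 0$, matching (\ref{eq:ln5}). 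Hence the region lies in $\Rmat_I$ and is achievable.

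\textbf{Converse.} Next I would show that any achievable quintuple lies in the claimed region, using the outer bound $\Rmat_{O2}$ (equivalently $\Rmat_{O1}$, by Proposition~\ref{prop:outerbound}). Start from $Z\in\Qmat_2$ satisfying (\ref{eq:ob1})--(\ref{eq:ob9}). The less noisy hypothesis applied along the chain $U\to V_1\to X\to Y_1Y_2$ (and also conditioned appropriately) gives $I(V_1;Y_1|U)\ge I(V_1;Y_2|U)$, $I(V_2;Y_1|U)\ge I(V_2;Y_2|U)$, and $I(U;Y_1)\ge I(U;Y_2)$. The bound (\ref{eq:ob9}) then forces $R_{e2}\le I(V_2;Y_2|U)-I(V_2;Y_1|U)\le 0$, giving (\ref{eq:ln5}). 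For the remaining inequalities I would combine (\ref{eq:ob5}) and (\ref{eq:ob7}): (\ref{eq:ob5}) gives $R_0+R_2\le I(V_2;Y_2|U)+I(U;Y_2)\le I(V_2U;Y_2)$, and one must argue this can be absorbed into an $I(U;Y_2)$-type bound (\ref{eq:ln2}) by redefining the auxiliary variable — replacing $U$ by $U'=(U,V_2)$; since $R_{e2}=0$ there is no confidentiality of $W_2$ to preserve, so $V_2$ may be merged into the common auxiliary. After this merging, $V_2$ disappears, $V_1$ becomes the single $V$, and (\ref{eq:ob7}) becomes $R_0+R_1+R_2\le I(V;Y_1|U')+I(U';Y_2)$, i.e.\ (\ref{eq:ln3}), while (\ref{eq:ob8}) yields (\ref{eq:ln4}) and (\ref{eq:ob1}) is (\ref{eq:ln1}). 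The new variables still satisfy $U'\to V\to X\to Y_1Y_2$.

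\textbf{Main obstacle.} I expect the delicate point to be the converse step of merging $V_2$ into $U$ and verifying that \emph{all} the outer-bound inequalities — not just the ones that obviously survive — remain consistent after the substitution $U'=(U,V_2)$, $V=V_1$; in particular one must check that (\ref{eq:ob4}) and (\ref{eq:ob6}) do not impose anything stronger than (\ref{eq:ln3}) once rewritten, and that the less noisy inequalities are applied with the correct conditioning (the hypothesis is stated for chains $V\to X\to Y_1Y_2$, so conditioning on $U$ or on $(U,V_2)$ must be justified, which follows because conditioning on a variable upstream of $X$ preserves the Markov structure). The achievability direction is comparatively routine once $V_2$ is taken constant; the converse requires the auxiliary-variable bookkeeping to be done carefully so that the final region has exactly the form (\ref{eq:ln1})--(\ref{eq:ln5}) with a single pair $U\to V$.
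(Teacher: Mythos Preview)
Your proposal is correct and follows essentially the same approach as the paper: achievability via $V_2=\text{const}$ in Theorem~\ref{thm:achievable}, converse via the outer bound $\Rmat_{O2}$ with the substitution $U'=(U,V_2)$ and $V=V_1$. One small point you (and the paper's sketch) gloss over: with $Z\in\Qmat_2$ the chain $U'\to V_1\to X$ need \emph{not} hold, since $p(x|v_1,v_2)$ can depend on $v_2$; the clean fix is to take $V'=(U',V_1)=(U,V_1,V_2)$, which satisfies $U'\to V'\to X\to Y_1Y_2$ and leaves all the mutual-information expressions in (\ref{eq:ln2})--(\ref{eq:ln4}) unchanged.
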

\begin{proof}
The achievability is established by setting $V_2=const$ in Theorem \ref{thm:achievable} and using Eqs.~(\ref{eq:lessnoisy}) and (\ref{eq:lessnoisy1}). To prove the converse, we
need to show that for any rate quintuple satisfying Eqs.~(\ref{eq:ob1})-(\ref{eq:ob9}) in Theorem \ref{thm:outerbound}, we can find $(U',V',X, Y_1,Y_2)$ such that $U'\rightarrow V'\rightarrow X\rightarrow Y_1Y_2$ and (\ref{eq:ln1})-(\ref{eq:ln5}) are satisfied. This can be accomplished using simple algebra and by defining $U'=UV_2$ and $V'=V_1$ where $(U,V_1,V_2,X,Y_1,Y_2)\in
\Qmat_2$ are the variables used in Theorem \ref{thm:outerbound}.
\end{proof} 

\emph{Remark 3:} The fact that $R_{e2}=0$ is a direct consequence of the less noisy
assumption: receiver 1 can always decode anything that receiver 2 can decode.

\subsection{The rate equivocation region of semi-deterministic DMBC-2CM}
Theorem \ref{thm:outerbound} also allows us to establish the rate
equivocation region of 
the semi-deterministic DMBC-2CM.
WLOG, let channel $1$ be deterministic.
\begin{theorem} \label{thm:semi}
If $p(y_1|x)$ is a $(0,1)$ matrix, then the rate equivocation
region for this DMBC-2CM, denoted by $\Rmat_{sd}$, is the set of all non-negative
$(R_0,R_1,R_2,R_{e1},R_{e2})$ satisfying \bqa
\label{eq:sd1} R_{e1} &\leq& R_1\\
\label{eq:sd2} R_{e2} &\leq& R_2\\
\label{eq:sd3} R_0 &\leq& \min[I(U;Y_1),I(U;Y_2)]\\
\label{eq:sd4}R_0 + R_1 &\leq& H(Y_1|U)+\min[I(U;Y_1),I(U;Y_2)]\\
\label{eq:sd5}R_0 + R_2 &\leq& I(V_2;Y_2|U)+\min[I(U;Y_1),I(U;Y_2)]\\
\label{eq:sd6}R_0 + R_1 + R_2 &\leq& H(Y_1|V_2 U)+I(V_2;Y_2|U)+\min[I(U;Y_1),I(U;Y_2)]\\
\label{eq:sd7}R_{e1} &\leq& H(Y_1|Y_2 V_2 U)\\
\label{eq:sd8}R_{e2} &\leq& I(V_2;Y_2|U)-I(V_2;Y_1|U), \eqa for
some $(U,Y_1,V_2,X, Y_1,Y_2) \in \Qmat_2$.
\end{theorem}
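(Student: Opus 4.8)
The plan is to sandwich $\Rmat_{sd}$ between the inner bound of Theorem~\ref{thm:achievable} and the outer bound $\Rmat_{O2}$ of Theorem~\ref{thm:outerbound}, establishing $\Rmat_{sd}\subseteq\Rmat_I$ and $\Rmat_{O2}\subseteq\Rmat_{sd}$; since $\Rmat_I$ is achievable and $\Rmat_{O2}$ is an outer bound, the sandwich ``achievable $\subseteq\Rmat_{O2}\subseteq\Rmat_{sd}\subseteq\Rmat_I\subseteq$ achievable'' identifies the rate equivocation region as $\Rmat_{sd}$. The deterministic hypothesis on channel~$1$, i.e.\ that $Y_1$ is a function of $X$, is used only in the first inclusion; the second holds for every DMBC-2CM.

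For achievability I would start from a quintuple in $\Rmat_{sd}$ with its auxiliary pair $(U,V_2)$ and simply adjoin $V_1:=Y_1$. Since $p(y_1|x)$ is a $(0,1)$ matrix, $Y_1$ is a deterministic function of $X$, hence $Y_1\perp Y_2\mid X$ and the enlarged tuple $(U,Y_1,V_2,X,Y_1,Y_2)$ respects the Markov condition of $\Qmat_1$; by the $\Qmat_1$ characterization of $\Rmat_I$ noted after Theorem~\ref{thm:achievable}, it is a legitimate choice there. Substituting $V_1=Y_1$ into (\ref{eq:BCCM_rate(1)})--(\ref{eq:BCCM_rate(8)}) and using $I(Y_1;Y_1|U)=H(Y_1|U)$ together with the chain-rule identities $H(Y_1|U)-I(Y_1;V_2|U)=H(Y_1|V_2U)$, $H(Y_1|U)-I(Y_1;Y_2V_2|U)=H(Y_1|Y_2V_2U)$ and $I(V_2;Y_1V_1|U)=I(V_2;Y_1|U)$, the eight inequalities become verbatim (\ref{eq:sd1})--(\ref{eq:sd8}). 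Hence every point of $\Rmat_{sd}$ lies in $\Rmat_I$.

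For the converse, Theorem~\ref{thm:outerbound} places every achievable quintuple in $\Rmat_{O2}$, so it is enough to show $\Rmat_{O2}\subseteq\Rmat_{sd}$. Take a quintuple witnessed by $Z=(U,V_1,V_2,X,Y_1,Y_2)\in\Qmat_2$ obeying (\ref{eq:ob1})--(\ref{eq:ob9}); I claim the pair $(U,V_2)$ already witnesses membership in $\Rmat_{sd}$. Indeed (\ref{eq:ob1}), (\ref{eq:ob2}), (\ref{eq:ob3}), (\ref{eq:ob5}) are precisely (\ref{eq:sd1}), (\ref{eq:sd2}), (\ref{eq:sd3}), (\ref{eq:sd5}); the bound $I(V_1;Y_1|U)\le H(Y_1|U)$ turns (\ref{eq:ob4}) into (\ref{eq:sd4}); the bound $I(V_1;Y_1|V_2U)\le H(Y_1|V_2U)$ turns (\ref{eq:ob7}) into (\ref{eq:sd6}); keeping only the first branch of the minimum in (\ref{eq:ob9}) gives (\ref{eq:sd8}); and keeping the second branch of the minimum in (\ref{eq:ob8}), followed by $I(V_1;Y_1|V_2U)-I(V_1;Y_2|V_2U)\le I(V_1;Y_1|Y_2V_2U)\le H(Y_1|Y_2V_2U)$, gives (\ref{eq:sd7}). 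This establishes $\Rmat_{O2}\subseteq\Rmat_{sd}$ and completes the argument.

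I do not expect a serious obstacle. The only point needing care is verifying that the substitution $V_1=Y_1$ yields an admissible auxiliary tuple for Theorem~\ref{thm:achievable} and that the retained pair $(U,V_2)$ meets the factorization demanded in the definition of $\Rmat_{sd}$ --- this is exactly where determinism of channel~$1$ is indispensable, for without $Y_1=y_1(X)$ the substitution is meaningless and the inner and outer bounds need not coincide. The remaining work is bookkeeping: in each of the minima (\ref{eq:ob8})--(\ref{eq:ob9}), selecting the branch that lets the entropy/mutual-information identities close up, and checking the handful of single-letter inequalities above.
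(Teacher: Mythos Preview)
Your proposal is correct and follows essentially the same route as the paper: achievability by the substitution $V_1=Y_1$ in Theorem~\ref{thm:achievable}, and the converse by showing $\Rmat_{O2}\subseteq\Rmat_{sd}$ via the very same entropy bounds $I(V_1;Y_1|U)\le H(Y_1|U)$, $I(V_1;Y_1|V_2U)\le H(Y_1|V_2U)$, and $I(V_1;Y_1|V_2U)-I(V_1;Y_2|V_2U)\le I(V_1;Y_1|Y_2V_2U)\le H(Y_1|Y_2V_2U)$ that the paper uses. Your observation that the converse inequalities themselves require no determinism (only the achievability does) is an accurate refinement the paper leaves implicit.
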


\begin{proof} The direct part of this theorem follows trivially
from Theorem \ref{thm:achievable} by setting $V_1=Y_1$.

The proof is therefore complete by showing $\Rmat_{SD-O2} \subseteq
\Rmat_{sd}$, where $\Rmat_{SD-O2}$ is the outer bound $\Rmat_{O2}$ specializing to the
semi-deterministic DMBC-2CM. That is, for any $Z\in \Qmat_2$ and
 $(R_0,R_1,R_2,R_{e1},R_{e2})$ satisfying (\ref{eq:ob1})-(\ref{eq:ob9}), we
need to show that $(R_0,R_1,R_2,R_{e1},R_{e2})$ also satisfies 
(\ref{eq:sd1})-(\ref{eq:sd8}). We note that 
Eqs.~(\ref{eq:sd1})-(\ref{eq:sd3}), (\ref{eq:sd5}), and (\ref{eq:sd8}) can be 
trivially established. That the sum-rate bound Eq.~(\ref{eq:sd4}) is satisfied
follows easily from the fact 
\beq
 H(Y_1|U)\geq I(V_1;Y_1|U).
\eeq 
The sum-rate bound for $R_0+R_1+R_2$ in Eq.~(\ref{eq:ob6}) and (\ref{eq:ob7}) can be
re-written as 
\bqa 
R_0 + R_1 + R_2 &\leq& \min[I(V_2;Y_2|V_1U)+ I(V_1;Y_1|U), I(V_1;Y_1|V_2U)+
I(V_2;Y_2|U)]\\
&& +\min[I(U;Y_1),I(U;Y_2)]. 
\eqa 
Thus (\ref{eq:sd6}) is satisfied since
\beq 
H(Y_1|V_2, U)+I(V_2;Y_2|U) \geq I(V_1;Y_1|V_2U)+ I(V_2;Y_2|U).
\eeq 
For Eq.~(\ref{eq:sd7}), we only need to show (cf. (\ref{eq:ob8}))
\beq
H(Y_1|Y_2 V_2U) \geq I(V_1;Y_1|V_2U)-I(V_1;Y_2|V_2U).
\eeq 
We have 
\bqa
H(Y_1|Y_2 V_2 U) &\geq & I(V_1;Y_1|Y_2 V_2 U) \\ 
&=& I(V_1;Y_1 Y_2|V_2 U)-I(V_1;Y_2|V_2U)\\ 
&\geq&I(V_1;Y_1|V_2U)-I(V_1;Y_2|V_2U). 
\eqa 
The proof of Theorem \ref{thm:semi} is therefore complete.
\end{proof}

Similarly, the rate equivocation region of deterministic
DMBC-2CM can be established as follows.

\begin{proposition} \label{thm:deter}
If  $p(y_1|x)$ and $p(y_2|x)$ are both $(0,1)$ matrices, then the rate
equivocation region for this deterministic DMBC-2CM is the set of all
$(R_0,R_1,R_2,R_{e1},R_{e2})$ satisfying  
\bqa
\label{eq:them25} 0 &\leq& R_{e1} \leq R_1\\
\label{eq.them26} 0 &\leq& R_{e2} \leq R_2\\
\label{eq.them21} 0 &\leq& R_0 \leq \min[I(U;Y_1),I(U;Y_2)]\\
\label{eq.them22}R_0 + R_1 &\leq& H(Y_1|U)+\min[I(U;Y_1),I(U;Y_2)]\\
\label{eq.them23}R_0 + R_2 &\leq& I(Y_2|U)+\min[I(U;Y_1),I(U;Y_2)]\\
\label{eq.them24}R_0 + R_1 + R_2 &\leq& H(Y_1Y_2|U)+\min[I(U;Y_1),I(U;Y_2)]\\
\label{eq.them27}R_{e1} &\leq& H(Y_1|Y_2 U)\\
\label{eq.them28}R_{e2} &\leq& H(Y_2|Y_1 U), \eqa for some
$(U,Y_1,Y_2, X, Y_1,Y_2) \in \Qmat_2$.
\end{proposition}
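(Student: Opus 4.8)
The plan is to follow the two-part template of Theorem~\ref{thm:semi}, treating the deterministic DMBC-2CM as the special case in which $Y_2$, in addition to $Y_1$, is a deterministic function of $X$. Achievability will be read off from the inner bound $\Rmat_I$ of Theorem~\ref{thm:achievable} by taking $V_1=Y_1$ and $V_2=Y_2$, and the converse from the outer bound $\Rmat_{O2}$ (equivalently $\Rmat_{O1}$, by Proposition~\ref{prop:outerbound}) of Theorem~\ref{thm:outerbound} specialized to this channel. Since the bounds (\ref{eq:them25})--(\ref{eq.them28}) involve only $U$, $Y_1$, $Y_2$, the region is effectively parametrized by a joint law $p(u)p(x|u)$ with $Y_1,Y_2$ the channel outputs; the choice $V_1=Y_1$, $V_2=Y_2$ is always admissible within $\Qmat_1$ because $Y_1,Y_2$ are functions of $X$, and since $\Rmat_I=\Rmat_I'$ one may work in $\Qmat_1$ for the achievability without loss. (The term written $I(Y_2|U)$ in (\ref{eq.them23}) should read $H(Y_2|U)$.)

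For the direct part I would substitute $V_1=Y_1$, $V_2=Y_2$ into (\ref{eq:BCCM_rate(1)})--(\ref{eq:BCCM_rate(8)}) and simplify, using $I(V_1;Y_1|U)=H(Y_1|U)$, $I(V_2;Y_2|U)=H(Y_2|U)$, and $I(V_1;V_2|U)=I(V_1;Y_2V_2|U)=I(Y_1;Y_2|U)$ together with the symmetric identities, so that $I(V_1;Y_1|U)+I(V_2;Y_2|U)-I(V_1;V_2|U)=H(Y_1Y_2|U)$, $I(V_1;Y_1|U)-I(V_1;Y_2V_2|U)=H(Y_1|Y_2U)$, and $I(V_2;Y_2|U)-I(V_2;Y_1V_1|U)=H(Y_2|Y_1U)$. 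These collapses turn the inner bound into exactly (\ref{eq:them25})--(\ref{eq.them28}), so every such quintuple lies in $\Rmat_I'=\Rmat_I$ and is achievable.

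For the converse I would take any $Z=(U,V_1,V_2,X,Y_1,Y_2)\in\Qmat_2$ and any quintuple satisfying (\ref{eq:ob1})--(\ref{eq:ob9}), and show it meets (\ref{eq:them25})--(\ref{eq.them28}) with the same $U$. Bounds (\ref{eq:them25}), (\ref{eq.them26}), (\ref{eq.them21}) are (\ref{eq:ob1})--(\ref{eq:ob3}) verbatim; (\ref{eq.them22}), (\ref{eq.them23}) follow from (\ref{eq:ob4}), (\ref{eq:ob5}) together with $I(V_1;Y_1|U)\le H(Y_1|U)$ and $I(V_2;Y_2|U)\le H(Y_2|U)$; and (\ref{eq.them27}) follows from (\ref{eq:ob8}) since $I(V_1;Y_1|U)-I(V_1;Y_2|U)\le I(V_1;Y_1|Y_2U)\le H(Y_1|Y_2U)$, with (\ref{eq.them28}) obtained symmetrically from (\ref{eq:ob9}). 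The only step that genuinely exploits the deterministic hypothesis — and where the Markov structure $V_1V_2\rightarrow X\rightarrow Y_1Y_2$ available from $\Qmat_2\subseteq\Qmat_1$ is actually used — is the triple-rate bound (\ref{eq.them24}): starting from (\ref{eq:ob6}) and using monotonicity of mutual information, the chain rule, data processing, and $H(Y_1Y_2|X)=0$, one has $I(V_1;Y_1|U)+I(V_2;Y_2|V_1U)\le I(V_1;Y_1Y_2|U)+I(V_2;Y_1Y_2|V_1U)=I(V_1V_2;Y_1Y_2|U)\le I(X;Y_1Y_2|U)=H(Y_1Y_2|U)$, and adding $\min[I(U;Y_1),I(U;Y_2)]$ gives (\ref{eq.them24}). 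I expect this sum-rate inequality to be the only non-routine point of the proof; everything else is bookkeeping parallel to Theorem~\ref{thm:semi}, including the mild $\Qmat_1$-versus-$\Qmat_2$ labelling issue noted above.
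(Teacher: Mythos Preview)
Your proposal is correct and matches the approach the paper implicitly intends: the paper states the proposition with only the remark ``Similarly,'' pointing back to the proof of Theorem~\ref{thm:semi}, and your argument---achievability by substituting $V_1=Y_1$, $V_2=Y_2$ into Theorem~\ref{thm:achievable} (using $\Rmat_I=\Rmat_I'$) and the converse by relaxing each inequality of $\Rmat_{O2}$ with the same $U$---is exactly that specialization. Your observation that $I(Y_2|U)$ in (\ref{eq.them23}) should read $H(Y_2|U)$ is correct, and note that the data-processing step in your sum-rate argument can be shortened to $I(V_1V_2;Y_1Y_2|U)\le H(Y_1Y_2|U)$ directly.
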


\subsection{Outer bound for DMBC-2CM with perfect secrecy}
By setting $R_0=0$, $R_{e1}=R_1$ and $R_{e2}=R_2$ in Theorem
\ref{thm:outerbound}, we obtain outer bounds for DMBC-2CM with perfect secrecy,
denoted respectively by $\Rmat_{PS-O1}$, $\Rmat_{PS-O2}$, and $\Rmat_{PS-O3}$ 
for $Z\in \Qmat_1$, $Z\in \Qmat_2$, and $Z\in \Qmat_3$. Clearly,
\beq
\Rmat_{PS-O1}=\Rmat_{PS-O2}\supseteq \Rmat_{PS-O3}
\eeq
In addition, from Proposition \ref{prop:Liu_o}, we have
\beq
\Rmat_{PS-O2}=\Rmat_{LMSY-O}.
\eeq
i.e., $\Rmat_{PS-O2}$ coincides with Liu {\em et al}'s outer bound in Proposition \ref{prop:Liu_o}.
Finally, all these outer bounds are tight for the semi-deterministic DMBC-2CM with perfect secrecy.

\subsection{New outer bounds for the general DMBC}
Specializing Theorem \ref{thm:outerbound} to the general DMBC, i.e, setting
$R_{e1}=R_{e2}=0$, we obtain the
following outer bounds for the general DMBC.

\begin{theorem} \label{thm:BC}
For any $Z \in \Qmat_1$, let $S_{BC}(Z)$ be the set of all
$(R_0,R_1,R_2)$ of non-negative numbers satisfying \bqa
\label{eq:BC1} R_0 &\leq& \min[I(U;Y_1),I(U;Y_2)]\\
\label{eq:BC2}R_0 + R_1 &\leq& I(V_1;Y_1|U)+\min[I(U;Y_1),I(U;Y_2)]\\
\label{eq:BC3}R_0 + R_2 &\leq& I(V_2;Y_2|U)+\min[I(U;Y_1),I(U;Y_2)]\\
\label{eq:BC4}R_0 + R_1 + R_2 &\leq& I(V_2;Y_2|V_1U)+ I(V_1;Y_1|U)+\min[I(U;Y_1),I(U;Y_2)]\\
\label{eq:BC5}R_0 + R_1 + R_2 &\leq& I(V_1;Y_1|V_2U)+I(V_2;Y_2|U)+\min[I(U;Y_1),I(U;Y_2)].\eqa 
Then 
\beq
\Rmat_{BC-O1} = \bigcup_{Z\in \Qmat_1}S_{BC}(Z)
\eeq 
constitutes an outer bound to the capacity region for the DMBC.
\end{theorem}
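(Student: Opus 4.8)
The plan is to derive Theorem~\ref{thm:BC} as an immediate corollary of Theorem~\ref{thm:outerbound}, since the general DMBC is nothing but the DMBC-2CM of Section~\ref{sec:model} with the equivocation requirements removed. Concretely, I would first observe that a rate triple $(R_0,R_1,R_2)$ is achievable for the DMBC if and only if the quintuple $(R_0,R_1,R_2,0,0)$ is achievable for the DMBC-2CM: the equivocation constraints $\lim_{n\to\infty}\frac1n H(W_1|\Ybf_2)\geq 0$ and $\lim_{n\to\infty}\frac1n H(W_2|\Ybf_1)\geq 0$ are vacuous because entropy is nonnegative, so imposing $R_{e1}=R_{e2}=0$ adds nothing to the DMBC achievability conditions $P_{e,1}^{(n)}\to 0$, $P_{e,2}^{(n)}\to 0$.

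Next I would invoke Theorem~\ref{thm:outerbound} for the quintuple $(R_0,R_1,R_2,0,0)$: there exists $Z\in\Qmat_1$ satisfying (\ref{eq:ob1})--(\ref{eq:ob9}). Setting $R_{e1}=R_{e2}=0$, inequalities (\ref{eq:ob1}) and (\ref{eq:ob2}) collapse to the nonnegativity of $R_1$ and $R_2$, which is assumed; (\ref{eq:ob8}) and (\ref{eq:ob9}) collapse to $0\leq\min[I(V_1;Y_1|U)-I(V_1;Y_2|U),\,I(V_1;Y_1|V_2U)-I(V_1;Y_2|V_2U)]$ and its $V_2$-counterpart, which are simply conditions that the chosen $Z$ happens to satisfy; and the remaining five inequalities (\ref{eq:ob3})--(\ref{eq:ob7}) are, term for term, exactly (\ref{eq:BC1})--(\ref{eq:BC5}). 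Hence $(R_0,R_1,R_2)\in S_{BC}(Z)\subseteq\bigcup_{Z\in\Qmat_1}S_{BC}(Z)=\Rmat_{BC-O1}$, which proves the claim.

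The only point worth a remark is that in Theorem~\ref{thm:BC} the union defining $\Rmat_{BC-O1}$ ranges over all of $\Qmat_1$, \emph{without} imposing the nonnegativity conditions on the mutual-information differences that the converse actually produces. Dropping those conditions only enlarges the union, and a superset of an outer bound is still an outer bound, so the conclusion is unaffected; this relaxed form is stated deliberately because it is more convenient for the subsequent comparisons with $\Rmat_{LK}$, $\Rmat_{NE}$, and $\Rmat_{LKS}$. I do not expect any genuine obstacle in this proof: all of the real work—Fano's inequality and the identification of the auxiliary random variables $U$, $V_1$, $V_2$—is done once and for all in the converse of Theorem~\ref{thm:outerbound} (Appendix~\ref{sec:proofoftheorem2}), and Theorem~\ref{thm:BC} merely reads off the specialization.
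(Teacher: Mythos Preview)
Your proposal is correct and matches the paper's approach exactly: the paper presents Theorem~\ref{thm:BC} as the specialization of Theorem~\ref{thm:outerbound} obtained by setting $R_{e1}=R_{e2}=0$, with no separate argument given. Your added remark that dropping the residual constraints from (\ref{eq:ob8})--(\ref{eq:ob9}) only enlarges the union (hence preserves the outer-bound property) is a useful clarification that the paper leaves implicit.
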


One can establish in a similar fashion two other outer bounds for the general 
DMBC, denoted by $\Rmat_{BC-O2}$ and $\Rmat_{BC-O3}$, by replacing $\Qmat_1$ 
in Theorem \ref{thm:BC} with $\Qmat_2$ and $\Qmat_3$, respectively. Similar to
Proposition \ref{prop:outerbound}, we have
\beq
\Rmat_{BC-O3}\subseteq\Rmat_{BC-O1}=\Rmat_{BC-O2}. 
\eeq

\emph{Remark 4:} It is interesting to observe that the
inequalities of our outer bound $\Rmat_{BC}$ are all identical to
those of the existing inner bound
 \cite{Gelfand&Pinsker:80PIT}, described in
Proposition \ref{prop:Gelfand&Pinsker}, except for the bound on
$R_0+R_1+R_2$, for which there is a gap of
\beq
\gamma = \min[I(V_1;V_2|Y_1,U), I(V_1;V_2|Y_2,U)]. 
\eeq

\emph{Remark 5:} It is easy to show that $\Rmat_{BC-O2}$ subsumes the
outer bound in \cite[Theorem 6]{Liang&Kramer:07IT} since
\bqa
I(V_1;Y_1|V_2U) &\leq& I(X;Y_1|V_2U), \\
I(V_2;Y_2|V_1U) &\leq& I(X;Y_2|V_1U). 
\eqa

\emph{Remark 6:} The new outer bound $\Rmat_{BC-O3}$ is also a subset of 
the outer bound proposed in \cite[Theorem 2.1]{Nair&ElGamal:07IT}, as described in
Proposition \ref{prop:BC-NE}. More precisely, we have
\begin{proposition} \label{prop:ne}
\label{prop_obbc} $\Rmat_{BC-O3}  \subseteq \Rmat_{NE}$, where
the equality holds when 1) $R_0=0$; or 2) $R_1=0$; or 3) $R_2=0$.
\end{proposition}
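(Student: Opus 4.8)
\emph{Step 1 (the inclusion).} Fix any $Z=(U,V_1,V_2,X,Y_1,Y_2)\in\Qmat_3$, let $S_{BC}(Z)$ be the region it determines through the inequalities of Theorem~\ref{thm:BC} (with $\Qmat_1$ replaced by $\Qmat_3$), and let $S_{NE}(Z)$ be the region it determines through the inequalities of Proposition~\ref{prop:BC-NE}. The plan is to compare the two systems inequality by inequality: the bounds on $R_0$ are literally identical, and each of the four remaining $\Rmat_{NE}$ inequalities is obtained from its $\Rmat_{BC-O3}$ counterpart by enlarging the right-hand side, since $I(V_iU;Y_i)=I(V_i;Y_i|U)+I(U;Y_i)\ge I(V_i;Y_i|U)+\min[I(U;Y_1),I(U;Y_2)]$, and the same identity handles the two sum-rate bounds. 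Hence $S_{BC}(Z)\subseteq S_{NE}(Z)$ for every $Z$, and taking the union over $Z\in\Qmat_3$ yields $\Rmat_{BC-O3}\subseteq\Rmat_{NE}$. Thus $\Rmat_{BC-O3}$ is the tighter bound, and on each of the three faces it remains only to prove $\Rmat_{NE}\subseteq\Rmat_{BC-O3}$ there.

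\emph{Step 2 (the faces $R_2=0$ and, symmetrically, $R_1=0$).} Let $(R_0,R_1,0)$ be realized by some $Z\in\Qmat_3$. Define $Z'$ by $U'=(V_2,U)$, $V_1'=V_1$, and $V_2'$ a constant; using the factorization of $Z$ together with $V_1\perp V_2$ one checks $Z'\in\Qmat_3$, with the same induced law on $(X,Y_1,Y_2)$. For $Z'$ one has $I(V_1';Y_1|U')=I(V_1;Y_1|V_2U)$, $I(V_2';Y_2|U')=0$ and $I(U';Y_j)=I(V_2U;Y_j)$, so all five inequalities defining $S_{BC}(Z')$ collapse to $R_0\le\min[I(V_2U;Y_1),I(V_2U;Y_2)]$ and $R_0+R_1\le I(V_1;Y_1|V_2U)+\min[I(V_2U;Y_1),I(V_2U;Y_2)]$. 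The first holds because $I(V_2U;Y_j)\ge I(U;Y_j)$ and $R_0\le\min[I(U;Y_1),I(U;Y_2)]$ for $Z$. For the second I would distinguish the two cases for the minimum: if it equals $I(V_2U;Y_2)$, apply the sum-rate bound $R_0+R_1\le I(V_1;Y_1|V_2U)+I(V_2U;Y_2)$ of Proposition~\ref{prop:BC-NE} taken with $R_2=0$; if it equals $I(V_2U;Y_1)$, use instead $I(V_1;Y_1|V_2U)+I(V_2U;Y_1)=I(V_1V_2U;Y_1)\ge I(V_1U;Y_1)\ge R_0+R_1$. In either case $(R_0,R_1,0)\in S_{BC}(Z')\subseteq\Rmat_{BC-O3}$. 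The face $R_1=0$ is handled identically after interchanging the indices $1$ and $2$.

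\emph{Step 3 (the face $R_0=0$).} This is the step I expect to be the main obstacle. The discrepancy to absorb is again that $\Rmat_{NE}$ may use $I(U;Y_1)$ where $\Rmat_{BC-O3}$ gets only $\min[I(U;Y_1),I(U;Y_2)]$; but since $R_0=0$ the common auxiliary carries no message and may be reshaped. Assuming WLOG $I(U;Y_1)\ge I(U;Y_2)$, the plan is to construct $Z'\in\Qmat_3$ in which the ``excess'' informativeness of the common auxiliary about $Y_1$ is moved into $V_1'$, so that $\min[I(U';Y_1),I(U';Y_2)]$ already provides the slack $\Rmat_{NE}$ extracted from $I(U;Y_1)$, and then to check the four sum/rate inequalities of Theorem~\ref{thm:BC} for $Z'$. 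The delicate part --- where most of the effort will go --- is carrying out this transfer while \emph{preserving $V_1'\perp V_2'$}, which $\Qmat_3$ demands: unlike in Step 2 one cannot simply absorb all of $U$ into a single private auxiliary, because $U$ is in general correlated with both $V_1$ and $V_2$, so the construction must isolate precisely the portion of $U$ that can be attached to $V_1'$ without destroying independence (intuitively the part of $U$ conditionally independent of $V_2$ given $V_1$) and verify that the residual common auxiliary remains sufficiently informative about both outputs. Combined with Step 1, this establishes the equality on $\{R_0=0\}$ and completes the proof.
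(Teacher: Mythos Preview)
Your Step~1 is correct and matches the paper. Your Step~2 for the face $R_2=0$ is also correct and is in fact tidier than the paper's own argument: the paper first reduces $\Rmat_{BC-O3}$ at $R_2=0$ to the degraded-message-set form $R_0\le\min[I(U;Y_1),I(U;Y_2)]$, $R_0+R_1\le I(V_1;Y_1|U)+\min[I(U;Y_1),I(U;Y_2)]$, and then splits into three cases according to the ordering of $I(U;Y_1),I(U;Y_2)$ and of $I(V_2U;Y_1),I(V_2U;Y_2)$. In its third case it invokes the \emph{existence} of a function $g$ with $I(Ug(V_2);Y_1)=I(Ug(V_2);Y_2)$ and sets $U'=Ug(V_2)$. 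Your single substitution $U'=(V_2,U)$, $V_2'=\mathrm{const}$, followed by a two-subcase split on which term attains $\min[I(V_2U;Y_1),I(V_2U;Y_2)]$, covers all three of the paper's cases uniformly and avoids that non-constructive step.

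The genuine gap is Step~3. You flag $R_0=0$ as ``the main obstacle'' and leave only a plan; your difficulty assessment is exactly inverted relative to the paper, which dismisses $R_0=0$ in one line (``can also be verified by direct substitution'') and spends all of its effort on $R_2=0$. The paper never says what ``direct substitution'' means, so it offers no guidance on your worry about preserving $V_1'\perp V_2'$ while shifting part of $U$ into a private auxiliary. Either there is a short argument you are missing---in which case your elaborate transfer construction is unnecessary---or the paper's one-liner is itself a hand-wave. In either event, your proposal as written has not closed the $R_0=0$ face, and the paper's text will not fill it in for you; before investing in the intricate independence-preserving construction you sketch, it would be worth testing whether a simpler choice (e.g., keeping $Z$ but exploiting that one of the two sum-rate gaps always vanishes, then handling the remaining individual-rate gap by a second auxiliary choice or time-sharing) already suffices.
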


\begin{proof}
See Appendix \ref{sec:proofofprop12}.
\end{proof}

\emph{Remark 7:} Note that the conditions in Proposition \ref{prop:ne} are only sufficient 
conditions, i.e., there may be other instances when the two bounds are equivalent.
It is also possible that $\Rmat_{BC-O3}=\Rmat_{NE}$ though we have not been
successful in proving (or disapproving) it. 

\emph{Remark 8:} One can easily verify that the outer bound proposed in 
\cite{Liang-etal:08ITW}, $\Rmat_{LKS}$ in Proposition \ref{thm:NJresult}, 
subsumes all the above outer bounds. To summarize, we have
\beq
\Rmat_{LKS}\subseteq \Rmat_{BC-O3} \subseteq \left\{\begin{array}{c}
\Rmat_{LK}\\
\Rmat_{NE}
\end{array}\right.
\eeq
It remains unknown if any of the above the subset relations can be strict or not.

The fact that $\Rmat_{LKS}$ subsumes existing outer bounds can be attributed to
the way auxiliary random variables are defined in \cite{Liang-etal:08ITW}. By 
further splitting auxiliary random variables and isolating those corresponding
to the message variables, one can keep the terms in the rate upper bounds which
are otherwise dropped if only three auxiliary variables are used as in Theorem
\ref{thm:outerbound} or \cite{Nair&ElGamal:07IT}. Finally, we remark
that the approach in \cite{Liang-etal:08ITW} can be adopted to
the problem involving secrecy constraint in a straightforward manner 
to obtain a new outer bound to
the rate equivocation region for DMBC-2CM. 
\section{conclusion}

We proposed inner and outer bounds for the rate equivocation region
of discrete memoryless broadcast channels with two confidential messages (DMBC-2CM).
The proposed inner bound combines superposition, rate splitting, and double
binning and unifies existing known results for broadcast channels with
or without confidential messages. These include 
Csisz\'{a}r and K\"{o}rner's capacity rate region for
broadcast channel with single private message
\cite{Csiszar&Korner:78IT}, Liu {\em et al}'s rate region for broadcast
channel with perfect secrecy \cite{Liu-etal:08IT}, Marton and Gel'fand-Pinsker's 
achievable rate region for general broadcast channels
\cite{Marton:79IT,Gelfand&Pinsker:80PIT}.
The proposed outer bounds also generalize several existing results.  
In addition, the proposed inner and outer bounds settle the rate equivocation
region of less noisy, deterministic, and semi-deterministic DMBC-2CM. In the
absence of the equivocation constraints, the proposed outer bounds reduce
to outer bounds for the general broadcast channel. General subset relations with
other known outer bounds were established.

\section{Acknowledgment}

The authors would like to thank Dr. Gerhard Kramer for bringing to our
attention reference \cite{Liang-etal:08ITW} and for many helpful discussions.

\appendices

\section{Proof for Theorem \ref{thm:achievable} \label{sec:proofoftheorem1}}

We prove that if $(R_0,R_1,R_2,R_{e1},R_{e2})$ is achievable, then it must 
satisfy Eqs.~(\ref{eq:BCCM_rate(1)})-(\ref{eq:BCCM_rate(8)}) in Theorem
\ref{thm:achievable} for some $(U,V_1,V_2,X,Y_1,Y_2)\in \Qmat_2$.
We first prove the case when 
\bqa 
R_1&\geq& R_{e1}=I(V_1;Y_1|U)-I(V_1;Y_2V_2|U)\geq 0, \label{eq:case1}\\
R_2&\geq& R_{e2}= I(V_2;Y_2|U)-I(V_2;Y_1V_1|U)\geq 0. \label{eq:case2}
\eqa
Rate splitting, as described in Section \ref{sec:achievable}
gives rise to the following five message variables:
\bqn
W_0 &\in& \left\{1,2,\cdots,2^{nR_0}\right\}\\
W_{10} &\in& \left\{1,2,\cdots,2^{nR_{10}}\right\}\\
W_{11} &\in& \left\{1,2,\cdots,2^{nR_{11}}\right\}\\
W_{20} &\in& \left\{1,2,\cdots,2^{nR_{20}}\right\}\\
W_{22} &\in& \left\{1,2,\cdots,2^{nR_{22}}\right\}
\eqn
where $R_{10}+R_{11}=R_1$ and $R_{20}+R_{22}=R_2$. 
We remark here that (\ref{eq:case1}) and
(\ref{eq:case2}) combined with the rate splitting and the fact that $W_{10}$ and $W_{20}$ 
are decoded by both receivers ensures that, 
\bqa 
R_{11}&\geq& R_{e1}=I(V_1;Y_1|U)-I(V_1;Y_2V_2|U)\geq 0, \label{eq:r11}\\
R_{22}&\geq& R_{e2}= I(V_2;Y_2|U)-I(V_2;Y_1V_1|U)\geq 0. \label{eq:r22}
\eqa

\textbf{Auxiliary Codebook Generation:} Fix $p(u)$, $p(v_1|u)$, $p(v_2|u)$
and $p(x|v_1,v_2)$. For arbitrary $\epsilon_1>0$, Define \bqa 
L_{11}&=&I(V_1;Y_1|U)-I(V_1;Y_2V_2|U),\\
L_{12}&=&I(V_1;Y_2|V_2U),\\
L_{21}&=&I(V_2;Y_2|U)-I(V_2;Y_1V_1|U),\\
L_{22}&=&I(V_2;Y_1|V_1U),\\
L_{3}&=&I(V_1;V_2|U)-\epsilon_1.
\eqa 
Note that 
\bqa
L_{11}+L_{12}+L_3&=&I(V_1;Y_1|U)-\epsilon_1,\\
L_{21}+L_{22}+L_3&=&I(V_2;Y_2|U)-\epsilon_1.
\eqa

\bi
\item Generate $2^{n(R_{10}+R_{20}+R_0)}$
independent and identically distributed (i.i.d.) codewords
$\ubf(k)$, with
$k\in \{1,\cdots,2^{n(R_{10}+R_{20}+R_0)}\}$, according to $\prod_{t=1}^n p(u_t)$. 
\item For each codeword $\ubf(k)$, generate
$2^{n(L_{11}+L_{12}+L_3)}$ i.i.d. codewords $\vbf_1(i, i',
i'')$, with $i\in \{1, \cdots, 2^{nL_{11}}\}$, $i'\in \{1, \cdots,
2^{nL_{12}}\}$ and $i''\in \{1, \cdots, 2^{nL_3}\}$, according to $\prod_{t=1}^n p(v_{1t}|u_t)$. The indexing allows an alternative interpretation using
binning. We randomly place the generated $\vbf_1$ vectors 
into $2^{nL_{11}}$ bins indexed by $i$; for the
codewords in each bin, randomly place them into $2^{nL_{12}}$
sub-bins indexed by $i'$; thus $i''$ is the index for the codeword
in each sub-bin. 
\item Similarly, for each codeword $\ubf$, generate
$2^{n(L_{21}+L_{22}+L_3)}$ i.i.d. codewords $\vbf_2(j, j',
j'')$ according to $\prod_{t=1}^n p(v_{2t}|u_t)$, where $j\in
\{1, \cdots, 2^{nL_{21}}\}$, $j'\in \{1, \cdots,
2^{nL_{22}}\}$ and $j''\in \{1, \cdots, 2^{nL_3}\}$.
\ei

\textbf{Encoding:} Encoding involves the mapping of message
indices to channel input, which is facilitated by the auxiliary
codewords generated above.

To send message $(w_{10}, w_{20}, w_{0})$, we first calculate the
corresponding message index $k$ and choose the corresponding codeword
$\ubf(k)$. Given this $\ubf(k)$, we have
$2^{n(L_{11}+L_{12}+L_3)}$ codewords of $\vbf_1(i,i',i'')$ to
choose from for message $w_{11}$. Evenly map $2^{nR_{11}}$
messages $w_{11}$ to $2^{nL_{11}}$ bins, then, given
(\ref{eq:r11}), each bin corresponds
to at least one message $w_{11}$. Thus, given
$w_{11}$, the bin index $i$ can be decided.

\be
\item If $R_{11}\leq L_{11}+L_{12}$, each bin corresponds to
$2^{R_{11}-L_{11}}$ messages $w_{11}$. Evenly place the
$2^{nL_{12}}$ sub-bins into $2^{R_{11}-L_{11}}$ cells. Given
$w_{11}$, we can find the corresponding cell, and randomly choose
a sub-bin from that cell, thus the sub-bin index $i'$ can be
decided. The codeword
$\vbf_1(i,i',i'')$ will be chosen from that sub-bin.

\item If $L_{11}+L_{12}< R_{11}\leq L_{11}+L_{12}+L_3$, then each sub-bin is mapped to at
least one message $w_{11}$, so $i'$ is decided given $w_{11}$. In
each sub-bin, there are $2^{R_{11}-L_{11}-L_{12}}$ messages.
Evenly place those $2^{nL_3}$ codewords $\vbf_1$ into
$2^{R_{11}-L_{11}-L_{12}}$ cells. Given $w_{11}$, we can find the
corresponding cell. The codeword 
$\vbf_1(i,i',i'')$ will be chosen from that cell.
\ee

Given $w_{22}$, the selection of $\vbf_{j,j',j''}$ is carried in exactly 
the same manner. From the given sub-bins or cells, the encoder chooses
the codeword pair $(\vbf_1(i,i',i''), \vbf_2(j,j',j''))$ that satisfies
\bqa 
(\vbf_1(i,i',i''), \vbf_2(j,j',j''),
\ubf(k))\in A_{\epsilon}^{(n)}(V_1, V_2, U),\label{event:V_1,
V_2, W jointly typical} 
\eqa 
where $A_{\epsilon}^{(n)}(\cdot)$
denotes the jointly typical set. If there are more than one such
pair, randomly choose one; if there is no such pair, an error is
declared. 

Given $\vbf_1$ and $\vbf_2$, we generate the channel 
input $\xbf$ according to i.i.d. $p(x|v_1,v_2)$, i.e., $\xbf\sim \prod_{i=1}^n
p(x_i|v_{1i},v_{2i})$ where $v_{1i}$ and $v_{2i}$ are respectively the $i$th
element of the vectors $\vbf_1$ and $\vbf_2$.

\textbf{Decoding:} Receiver $Y_1$ looks for $\ubf(k)$ such that
\beq
 (\ubf(k), \ybf_1)\in A_{\epsilon}^{(n)}(U,
Y_1).\label{event: W, Y_1 jointly typical} 
\eeq 
If such $\ubf(k)$ exists and is unique, set $\hat{k}=k$; otherwise,
declare an error. Upon decoding $k$, receiver $Y_1$ looks for sequences
$\vbf_1(i,i',i'')$ such that 
\bqa 
(\vbf_1(i,i',i''), \ubf(k),\ybf_1)\in A_{\epsilon}^{(n)}(V_1, U, Y_1).\label{event: W, V_1,
Y_1 jointly typical} 
\eqa
If such $\vbf_1(i,i',i'')$ exists and is
unique, set $\hat{i}=i$, $\hat{i}'=i'$ and $\hat{i}''=i''$;
otherwise, declare an error. From the values of $\hat{k}$,
$\hat{i}$, $\hat{i}'$ and $\hat{i}''$, the decoder can calculate
the message index $\hat{w}_{0}$, $\hat{w}_{10}$ and
$\hat{w}_{11}$. The decoding for receiver $Y_2$ is symmetric.

\textbf{Analysis of Error Probability:} We only consider
$P_{e,1}^{(n)}$ since $P_{e,2}^{(n)}$ can be analyzed
symmetrically. WLOG, we assume the
transmitted codeword indices are $k=i=i'=i''=1$. If an error is
declared, one or more of the following events
occur.
\beq\begin{array}{ll} A_1: \mbox{There is no pair
$(\vbf_1,\vbf_2)$ such that (\ref{event:V_1, V_2, W jointly
typical}) holds}.\\
A_2: \mbox{$\ubf(1,1)$ does not satisfy (\ref{event: W, Y_1 jointly typical})}.\\
A_3: \mbox{$\ubf(k,k')$ satisfies (\ref{event: W, Y_1 jointly
typical}), where
$(k,k')\neq (1,1)$}.\\
A_4: \mbox{$\vbf_1(1,1,1)$ does not satisfy (\ref{event: W, V_1,
Y_1 jointly typical})}.\\
A_5: \mbox{$\vbf_1(i,i',i'')$ satisfies (\ref{event: W, V_1, Y_1
jointly typical}), where $(i,i',i'')\neq (1,1,1)$}.
\end{array}\eeq

The fact that  $Pr\{A_2\}\leq \epsilon$ and $Pr\{A_4\}\leq \epsilon$
for sufficiently large $n$ follows directly from the asymptotic 
equipartition property. We now examine error events $A_1,A_3,A_5$.

Let $E(v_1,v_2,u)$ denote the event (\ref{event:V_1, V_2, W
jointly typical}). Then
\bqa
Pr\{E(v_1,v_2,u)\}&=&\sum_{(\ubf,\vbf_1,\vbf_2)\in A_{\epsilon}^{(n)}}p(\ubf)p(\vbf_1|\ubf)p(\vbf_2|\ubf)\\
&\geq&|A_{\epsilon}^{(n)}|2^{-n(H(U)+\epsilon)}2^{-n(H(V_1|U)+\epsilon)}2^{-n(H(V_2|U)+\epsilon)}\\
&\geq& 2^{-n(H(U)+H(V_1|U)+H(V_2|U)-H(UV_1V_2)+4\epsilon)}\\
&\geq& 2^{-n(I(V_1;V_2|U)+4\epsilon)}
\eqa
So, 
\bqa 
Pr\{A_1\} &\leq & \prod_{(\vbf_1,\vbf_2|k)}(1-Pr\{E(v_1,v_2,u)\})\\
&\leq &\prod_{(\vbf_1,\vbf_2|k)}(1-2^{-n(I(V_1;V_2|U)+4\epsilon)})
\eqa
From \cite{ElGamal&vanderMeulen:81IT,Kramer:book}, it is clear that if
\beq
I(V_1;Y_1|U)-\epsilon_1-R_{11}+I(V_2;Y_2|U)-\epsilon_2-R_{22}\geq I(V_1;V_2|U)
\label{eq:binning}
\eeq
$Pr\{A_1\}\leq \epsilon$.

For $A_3$, we have, from the decoding rule, $Pr\{A_3\}\leq \epsilon$ if 
\beq 
R_0+R_{10}+R_{20}\leq I(U;Y_1).
\label{derive_rate (1)} 
\eeq

For $A_5$, we first note that for $(i,i',i'')\neq (1,1,1)$,
\beq
P\{\vbf_1(i,i',i''), \ubf(k),\ybf_1)\in A_{\epsilon}^{(n)}(V_1, U, Y_1)\}
\leq 2^{-n(I(V_1;Y_1|U)-4\epsilon)}
\eeq
Given that the total number of codewords for $\vbf_1$ is $L_{11}+L_{12}+L_3=I(V_1;Y_1|U)-\epsilon_1$, it is easy to show that
if 
\beq
R_{11}\leq I(V_1;Y_1|U)-\epsilon_1\label{derive_rate (2)} 
\eeq
then $P\{A_5\}<\epsilon$ for $n$ sufficiently large.

Since
\bqa
P_{e1}^{(n)}\leq Pr\left\{\bigcup_{i=1}^{5}A_i\right\}\leq
\sum_{i=1}^5 Pr\{A_i\},
\eqa 
$P_{e1}^{(n)}\leq 5\epsilon$ when
(\ref{eq:BCCM_rate(6)}), (\ref{derive_rate (1)}) and
(\ref{derive_rate (2)}) hold.
 
Symmetrically, for $P_{e,2}^{(n)}\leq 5\epsilon$ as $n$ is sufficiently large, we
need (\ref{eq:binning}), (\ref{derive_rate (1)}) and 
\bqa
R_0+R_{10}+R_{20}&\leq& I(U;Y_2) \label{derive_rate(3)}\\
R_{22}&\leq& I(V_2;Y_2|U)-\epsilon_1 \label{derive_rate (4)}
\eqa
Apply Fourier-Motzkin elimination on (\ref{eq:binning}),
(\ref{derive_rate (1)}), (\ref{derive_rate (2)}), (\ref{derive_rate(3)}) and
(\ref{derive_rate (4)}) with the definition $R_1=R_{11}+R_{10}$
and $R_2=R_{22}+R_{20}$, we get
(\ref{eq:BCCM_rate(3)})-(\ref{eq:BCCM_rate(6)}).

\textbf{Equivocation:} Now, we prove the bound on equivocation rate
(\ref{eq:BCCM_rate(7)}).  Eq. (\ref{eq:BCCM_rate(8)}) follows
by symmetry.
\bqa
H(W_1|\Ybf_2)&\geq& H(W_1|\Ybf_2,\Vbf_2,\Ubf)\label{eq:equivocation 1}\\
&=& H(W_{11},W_{10}|\Ybf_2,\Vbf_2,\Ubf)\\
&\stackrel{(a)}{=}&H(W_{11}|\Ybf_2,\Vbf_2,\Ubf)\label{eq:equivocation 2}\\
&=&H(W_{11}, \Ybf_2|\Vbf_2, \Ubf)- H(\Ybf_2|\Vbf_2,
\Ubf)\label{eq:equivocation 3}\\\nn
&=&H(W_{11}, \Vbf_1, \Ybf_2|\Vbf_2, \Ubf)- H(\Ybf_2|\Vbf_2, \Ubf)
-H(\Vbf_1|\Ybf_2, \Vbf_2, \Ubf, W_{11})\label{eq:equivocation 4}\\\nn &=&H(W_{11},\Vbf_1|\Vbf_2,\Ubf)+
H(\Ybf_2|\Vbf_1,\Vbf_2,\Ubf,W_{11})\\
&&- H(\Ybf_2|\Vbf_2, \Ubf)-H(\Vbf_1|\Ybf_2, \Vbf_2, \Ubf,
W_{11})\label{eq:equivocation 5}\\\nn
&\stackrel{(b)}{=}&H(\Vbf_1|\Vbf_2,\Ubf)-I(\Vbf_1;\Ybf_2|\Vbf_2, \Ubf)-H(\Vbf_1|\Ybf_2, \Vbf_2, \Ubf, W_{11})\label{eq:equivocation 6}\\
&=& H(\Vbf_1|\Ubf)-I(\Vbf_1;\Vbf_2|\Ubf)-I(\Vbf_1;\Ybf_2|\Vbf_2, \Ubf)
-H(\Vbf_1|\Ybf_2, \Vbf_2, \Ubf, W_{11})\label{eq:equivocation 8}
 \eqa
 where (a) follows from the fact that given $\Ubf$, $W_{10}$ is uniquely
determined, and (b) follows from the fact that given $\Vbf_1$, $W_{11}$ is uniquely
determined. 

Consider the first term in (\ref{eq:equivocation 8}), the codeword generation ensures that
\bqa
H(\Vbf_1|\Ubf)=\log
2^{n(L_{11}+L_{12}+L_3)}=nI(V_1;Y_1|U)-n\epsilon_1.\label{eq:equivocation
9}
\eqa
For the second and third terms in (\ref{eq:equivocation 8}), using the same approach as 
that in \cite[Lemma 3]{Liu-etal:08IT}, we obtain
\bqa I(\Vbf_1;\Vbf_2|\Ubf)&\leq& n I(V_1;V_2|U)+
n\epsilon_2^{'}\label{eq:equivocation
10}\\
I(\Vbf_1;\Ybf_2|\Vbf_2, \Ubf)&\leq& n
I(V_1;Y_2|V_2U)+n\epsilon_3^{'}\label{eq:equivocation 11}
\eqa
Now, we consider the last term of (\ref{eq:equivocation 8}). We
first prove that, given $\Vbf_2$, $\Ubf$ and $W_{11}$, the
probability of error for $\Ybf_2$ to decode $\Vbf_1$ satisfies $P_e\leq
\epsilon$ for $n$ sufficiently large. $\Ybf_2$ looks for
$\vbf_1$ such that 
\bqa (\vbf_1, \vbf_2, \ubf, \ybf_2)\in
A_{\epsilon}^{(n)}(V_1,V_2,U,Y_2).\label{event:Y2 decode V1 (2)} \eqa
Since $R_{11}\geq L_{11}$, and given the knowledge of
$W_{11}$, the total number of possible codewords of $\vbf_1$ is
\beq
N_1\leq 2^{n(L_{12}+L_3)}=2^{n(I(V_1;V_2Y_2|U)-\epsilon_1)}. \label{eq:N1}
\eeq
Now define $E(v_1,v_2,u,y_2)$ the event in (\ref{event:Y2 decode V1 (2)}). 
We have
\bqa
Pr\{E(v_1,v_2,u,y_2)\}&=&\sum_{(\ubf,\vbf_1,\vbf_2,\ybf_2)\in
A_{\epsilon}^{(n)}}p(\ubf)p(\vbf_1|\ubf)p(\vbf_2,\ybf_2|\ubf)\label{eq:joint typical 1}\\
&\leq&
|A_{\epsilon}^{(n)}|2^{-n(H(U)-\epsilon)}2^{-n(H(V_1|U)-\epsilon)}2^{-n(H(V_2Y_2|U)-\epsilon)}\\
&\leq& 2^{-n(H(U)+H(V_1|U)+H(V_2Y_2|U)-H(UV_1V_2Y_2)-4\epsilon)}\\
&\leq&
2^{-n(I(V_1;V_2Y_2|U)-4\epsilon)}\label{eq:joint typical 4} 
\eqa 
Now, the probability of error
for $\Ybf_2$ to decode $\Vbf_1$ is 
\bqa 
P_e&\leq& \epsilon+N_1\cdot
2^{-n(I(V_1;V_2Y_2|U)-4\epsilon)} \\
&\leq & \epsilon +2^{-n(\epsilon_1-4\epsilon)} \\
&\leq & 2\epsilon
\eqa 
where the first $\epsilon$ accounts for the error that the true $\Vbf_1$ is 
not jointly typical with $\Vbf_2,\Ubf,\Ybf_2$ while the second term accounts
for the error when a different $\Vbf_1$ is jointly typical with $\Vbf_2,\Ubf,\Ybf_2$.
By Fano's inequality \cite{Cover&Thomas:book}, we get 
\bqa
H(\Vbf_1|\Ybf_2, \Vbf_2, \Ubf, W_{11})\leq n\epsilon_n^{'}
\label{eq:equivocation 12}.\eqa Combine (\ref{eq:equivocation 9}),
(\ref{eq:equivocation 10}), (\ref{eq:equivocation 11}) and
(\ref{eq:equivocation 12}), we have the bound
(\ref{eq:BCCM_rate(7)}).

The above proof is only for the case when (\ref{eq:case1}) and (\ref{eq:case2})
are satisfied. By using the same convexity
argument as in Lemma 5 and Lemma 6 in \cite{Csiszar&Korner:78IT},
we can easily show that the region
(\ref{eq:BCCM_rate(1)})-(\ref{eq:BCCM_rate(8)}) is also achievable. This
completes the proof for Theorem \ref{thm:achievable}.\\\\

\section{Proof of the outer bounds in Theorem \ref{thm:outerbound}}
\label{sec:proofoftheorem2}

We only prove $\Rmat_{O2}$ and $\Rmat_{O3}$ are outer bounds
in this section. The proof of Theorem \ref{thm:outerbound} is 
complete by the fact that $\Rmat_{O1}=\Rmat_{O2}$ (cf. Proposition \ref{prop:outerbound}).

We first define the following notations/quantities. All vectors involved 
are assumed to be length $n$.
\bqa
X^i&\stackrel{\triangle}{=}& (X_1,\cdots,X_i); \\
\tilde{X}^i&\stackrel{\triangle}{=}& (X_i,\cdots,X_n); \\
\Sigma_1 &=&\sum_{i=1}^nI(\tilde{Y}_2^{i+1};Y_{1i}|Y_1^{i-1}W_0);\label{eq:S1}\\
\Sigma^*_1 &=&\sum_{i=1}^nI(Y_1^{i-1};Y_{2i}|\tilde{Y}_2^{i+1}W_0);\label{eq:S1*}
\eqa 
and $(\Sigma_2, \Sigma^*_2)$, $(\Sigma_3, \Sigma^*_3)$, 
$(\Sigma_4,\Sigma^*_4)$ are analogously defined by replacing $W_0$ with
$W_0W_1$, $W_0W_2$ and $W_0W_1W_2$ in Eqs.~(\ref{eq:S1}) and (\ref{eq:S1*}), respectively.
In exactly the same fashion as in \cite[Lemma 7]{Csiszar&Korner:78IT}, one can establish,
for $a=1,2,3,4$,
\beq
\Sigma_a=\Sigma^*_a. \label{eq:csiszar}
\eeq 

We begin by Fano's Lemma, \bqn
H(W_0,W_1|Y^n_1) &\leq &n{\epsilon_n},\\
H(W_0,W_2|Y^n_2) &\leq& n{\epsilon_n}.
\eqn
where $\epsilon_n\rightarrow 0$ as $n
\rightarrow \infty$. Eqs.~(\ref{eq:ob1}) and (\ref{eq:ob2}) follow
trivially from
\bqa
0\leq H(W_1|Y_2^n)\leq H(W_1), \\
0\leq H(W_2|Y_1^n)\leq H(W_2).
\eqa
Next we check bound for $R_0$.
\bqa
\label{eq.checkconverse1}
nR_0= H(W_0) &=&
I(W_0;Y_1^n)+H(W_0|Y_1^n)\nn\\ 
&\leq & \sum_{i=1}^n I(W_0;Y_{1i}|Y_1^{i-1})+n\epsilon_n\\
&= & \sum_{i=1}^n (I(W_0Y_1^{i-1};Y_{1i})-I(Y_1^{i-1};Y_{1i}))+n\epsilon_n\\
&\leq & \sum_{i=1}^n (I(W_0Y_1^{i-1}\tilde{Y}_2^{i+1};Y_{1i})-I(\tilde{Y}_2^{i+1};Y_{1i}|Y_1^{i-1}W_0))
+n\epsilon_n\\
&= & \sum_{i=1}^n I(W_0Y_1^{i-1}\tilde{Y}_2^{i+1};Y_{1i})-\Sigma_1+n\epsilon_n \label{eq:r0}\\
&\leq & \sum_{i=1}^n I(W_0Y_1^{i-1}\tilde{Y}_2^{i+1};Y_{1i})+n\epsilon_n \label{eq:r00}\\
\eqa
Similarly,
\bqa
nR_0&\leq & \sum_{i=1}^n I(W_0Y_1^{i-1}\tilde{Y}_2^{i+1};Y_{2i})-\Sigma_1^*+n\epsilon_n
\label{eq:r0Y2}\\
&\leq & \sum_{i=1}^n I(W_0Y_1^{i-1}\tilde{Y}_2^{i+1};Y_{2i})+n\epsilon_n
\label{eq:r0Y2a}
\eqa
Therefore 
\beq
nR_0\leq \min \left[\sum_{i=1}^n I(W_0Y_1^{i-1}\tilde{Y}_2^{i+1};Y_{1i}),
 \sum_{i=1}^n I(W_0Y_1^{i-1}\tilde{Y}_2^{i+1};Y_{2i})\right] +n\epsilon_n. \label{eq:r0bound}
\eeq
Consider the sum rate bound for $R_0+R_1$.
\bqa
n(R_0+R_1) = H(W_0,W_1)&=&  H(W_0)+H(W_1|W_0)\\
&=&H(W_0)+I(W_1;Y_1^n|W_0)+H(W_1|Y_1^nW_0)\\
&\leq &H(W_0)+I(W_1;Y_1^n|W_0)+n\epsilon_n \label{eq:r01}
\eqa
where
\bqa
\lefteqn{I(W_1;Y_1^n|W_0)}\\
 &=&\sum_{i=1}^nI(W_1;Y_{1i}|Y_1^{i-1}W_0)\label{eq:I01}\\
&=&\sum_{i=1}^n(I(W_1\tilde{Y}_2^{i+1};Y_{1i}|Y_1^{i-1}W_0)
-I(\tilde{Y}_2^{i+1};Y_{1i}|Y_1^{i-1}W_0W_1))\\
&=&\sum_{i=1}^n(I(W_1;Y_{1i}|Y_1^{i-1}\tilde{Y}_2^{i+1}W_0)+
I(\tilde{Y}_2^{i+1};Y_{1i}|Y_1^{i-1}W_0)-I(\tilde{Y}_2^{i+1};Y_{1i}|Y_1^{i-1}W_0W_1))\\
&=&\sum_{i=1}^nI(W_1;Y_{1i}|Y_1^{i-1}\tilde{Y}_2^{i+1}W_0) +\Sigma_1-\Sigma_2. \label{eq:r01a}
\eqa
Combine (\ref{eq:r0}), (\ref{eq:r01}), and (\ref{eq:r01a}), we have
\beq
n(R_0+R_1)\leq \sum_{i=1}^n I(W_0Y_1^{i-1}\tilde{Y}_2^{i+1};Y_{1i})
+\sum_{i=1}^nI(W_1;Y_{1i}|Y_1^{i-1}\tilde{Y}_2^{i+1}W_0)-\Sigma_2+2n\epsilon_n.
\label{eq:r0r1a}
\eeq
On the other hand, combining (\ref{eq:r0Y2}), (\ref{eq:r01}), (\ref{eq:r01a}), and 
(\ref{eq:csiszar}) yields
\beq
n(R_0+R_1)\leq \sum_{i=1}^n I(W_0Y_1^{i-1}\tilde{Y}_2^{i+1};Y_{2i})
+\sum_{i=1}^nI(W_1;Y_{1i}|Y_1^{i-1}\tilde{Y}_2^{i+1}W_0)-\Sigma_2+2n\epsilon_n.
\label{eq:r0r1b}
\eeq
Thus,
\bqa
n(R_0+R_1)&\leq&\min\left[\sum_{i=1}^n I(W_0Y_1^{i-1}\tilde{Y}_2^{i+1};Y_{1i}),
\sum_{i=1}^n I(W_0Y_1^{i-1}\tilde{Y}_2^{i+1};Y_{2i})\right]\nn\\
&&+\sum_{i=1}^nI(W_1;Y_{1i}|Y_1^{i-1}\tilde{Y}_2^{i+1}W_0)-\Sigma_2+2n\epsilon_n
\label{eq:r0r1bound1}\\
&\leq &\min\left[\sum_{i=1}^n I(W_0Y_1^{i-1}\tilde{Y}_2^{i+1};Y_{1i}),
\sum_{i=1}^n I(W_0Y_1^{i-1}\tilde{Y}_2^{i+1};Y_{2i})\right]\nn\\
&&+\sum_{i=1}^nI(W_1;Y_{1i}|Y_1^{i-1}\tilde{Y}_2^{i+1}W_0)+2n\epsilon_n
\label{eq:r0r1bound}
\eqa
In an analogous fashion, we can get
\bqa
n(R_0+R_2)&\leq & \min\left[\sum_{i=1}^n I(W_0Y_1^{i-1}\tilde{Y}_2^{i+1};Y_{1i}),
\sum_{i=1}^n I(W_0Y_1^{i-1}\tilde{Y}_2^{i+1};Y_{2i})\right]\nn\\
&&+\sum_{i=1}^nI(W_1;Y_{2i}|Y_1^{i-1}\tilde{Y}_2^{i+1}W_0)-\Sigma_3+2n\epsilon_n
\label{eq:r0r2bound1}\\
&\leq & \min\left[\sum_{i=1}^n I(W_0Y_1^{i-1}\tilde{Y}_2^{i+1};Y_{1i}),
\sum_{i=1}^n I(W_0Y_1^{i-1}\tilde{Y}_2^{i+1};Y_{2i})\right]\nn\\
&&+\sum_{i=1}^nI(W_1;Y_{2i}|Y_1^{i-1}\tilde{Y}_2^{i+1}W_0)+2n\epsilon_n
\label{eq:r0r2bound}
\eqa
Consider the sum rate bound for $R_0+R_1+R_2$.
\bqa
n(R_0+R_1+R_2)& =& H(W_0,W_1)+H(W_2|W_1W_0)\\
&=&H(W_0,W_1)+I(W_2;Y_2^n|W_1,W_0)+H(W_2|Y_2^nW_0W_1) \\
&\leq&H(W_0,W_1)+I(W_2;Y_2^n|W_1,W_0)+n\epsilon_n,\label{eq:r012a}\\
 n(R_0+R_1+R_2) &= &H(W_0,W_2)+H(W_1|W_2W_0)\\
&=&H(W_0,W_2)+I(W_1;Y_1^n|W_2,W_0)+H(W_1|Y_1^nW_0W_2) \\
&\leq &H(W_0,W_2)+I(W_1;Y_1^n|W_2,W_0)+n\epsilon_n.\label{eq:r012b}
\eqa
Following similar procedure as in (\ref{eq:I01})-(\ref{eq:r01a}), we can obtain
\bqa
I(W_2;Y_2^n|W_1,W_0)&=& \sum_{i=1}^nI(W_2;Y_{2i}|Y_1^{i-1}\tilde{Y}_2^{i+1}W_0W_1) +
\Sigma^*_2-\Sigma^*_4. \label{eq:I22}\\
I(W_1;Y_1^n|W_2,W_0) &=&\sum_{i=1}^nI(W_1;Y_{1i}|Y_1^{i-1}\tilde{Y}_2^{i+1}W_0W_2) +
\Sigma_3-\Sigma_4, \label{eq:I11}
\eqa
Combine (\ref{eq:r0r1bound1}), (\ref{eq:r012a}), (\ref{eq:I22}), and (\ref{eq:csiszar}), we get 
\bqa
 n(R_0+R_1+R_2)\!\!&\!\leq\!&\! \min\left[\sum_{i=1}^n I(W_0Y_1^{i-1}\tilde{Y}_2^{i+1};Y_{1i}),
\sum_{i=1}^n I(W_0Y_1^{i-1}\tilde{Y}_2^{i+1};Y_{2i})\right]\nn\\
&&\!\!+\sum_{i=1}^nI(W_1;Y_{1i}|Y_1^{i-1}\tilde{Y}_2^{i+1}W_0)+ 
\sum_{i=1}^nI(W_2;Y_{2i}|Y_1^{i-1}\tilde{Y}_2^{i+1}W_0W_1) 
+3n\epsilon_n. \label{eq:r012bound1}
\eqa
Alternatively, combining (\ref{eq:r0r2bound1}), (\ref{eq:r012b}), (\ref{eq:I11}), and (\ref{eq:csiszar}) yields
\bqa
 n(R_0+R_1+R_2)\!\!&\!\leq\!&\! \min\left[\sum_{i=1}^n I(W_0Y_1^{i-1}\tilde{Y}_2^{i+1};Y_{1i}),
\sum_{i=1}^n I(W_0Y_1^{i-1}\tilde{Y}_2^{i+1};Y_{2i})\right]\nn\\
&&\!\!+\sum_{i=1}^nI(W_2;Y_{2i}|Y_1^{i-1}\tilde{Y}_2^{i+1}W_0) 
+\sum_{i=1}^nI(W_1;Y_{2i}|Y_1^{i-1}\tilde{Y}_2^{i+1}W_0W_2)
+3n\epsilon_n. \label{eq:r012bound2}
\eqa

We now consider the equivocation rate bound.
\bqa  
R_{e1}&\leq&H(W_1|Y_2^n)\\
&=& H(W_1|Y_2^nW_0) + I(W_1;W_0|Y_2^n)\\
&\leq &H(W_1|W_0)-I(W_1;Y_2^n|W_0)+H(W_0|Y_2^n)\\
&=&I(W_1;Y_1^n|W_0)-I(W_1;Y_2^n|W_0)+H(W_1|Y_1^nW_0)+H(W_0|Y_2^n)\\
&\leq&I(W_1;Y_1^n|W_0)-I(W_1;Y_2^n|W_0)+2n\epsilon_n, \label{eq:re1a}\\
 R_{e1}&\leq& H(W_1|Y_2^n)\\
&=& H(W_1|Y_2^nW_0W_2) + I(W_1;W_0W_2|Y_2^n)\\
&\leq &H(W_1|W_0W_2)-I(W_1;Y_2^n|W_0W_2)+H(W_0W_2|Y_2^n)\\
&=&I(W_1;Y_1^n|W_0W_2)-I(W_1;Y_2^n|W_0W_2)+H(W_1|Y_1^nW_0W_2)+H(W_0W_2|Y_2^n)\\
&\leq&I(W_1;Y_1^n|W_0W_2)-I(W_1;Y_2^n|W_0W_2)+2n\epsilon_n.\label{eq:re1b}.
\eqa
Of the terms involved in (\ref{eq:re1a}) and (\ref{eq:re1b}), only
$I(W_1;Y_2^n|W_0)$ and $I(W_1;Y_2^n|W_0W_2)$ have yet to be determined. Similar
to (\ref{eq:I01})-(\ref{eq:r01a}), we can get
\bqa
I(W_1;Y_2^n|W_0)&=& \sum_{i=1}^nI(W_1;Y_{2i}|Y_1^{i-1}\tilde{Y}_2^{i+1}W_0) +
\Sigma^*_1-\Sigma^*_2,\\
I(W_1;Y_2^n|W_0W_2)&= &\sum_{i=1}^nI(W_1;Y_{2i}|Y_1^{i-1}\tilde{Y}_2^{i+1}W_0W_2) +
\Sigma^*_3-\Sigma^*_4. 
\eqa
Therefore we get
\bqa
R_{e1}&\leq & \sum_{i=1}^nI(W_1;Y_{1i}|Y_1^{i-1}\tilde{Y}_2^{i+1}W_0)
-\sum_{i=1}^nI(W_1;Y_{2i}|Y_1^{i-1}\tilde{Y}_2^{i+1}W_0)+2n\epsilon_n,\label{eq:r1ebound1} \\
R_{e1}&\leq & \sum_{i=1}^nI(W_1;Y_{1i}|Y_1^{i-1}\tilde{Y}_2^{i+1}W_0W_2)
-\sum_{i=1}^nI(W_1;Y_{2i}|Y_1^{i-1}\tilde{Y}_2^{i+1}W_0W_2)+2n\epsilon_n. \label{eq:r1ebound2}
\eqa
Bounds on $R_{e2}$ are analogously obtained:
\bqa
R_{e2}&\leq & \sum_{i=1}^nI(W_2;Y_{2i}|Y_1^{i-1}\tilde{Y}_2^{i+1}W_0)
-\sum_{i=1}^nI(W_2;Y_{1i}|Y_1^{i-1}\tilde{Y}_2^{i+1}W_0)+2n\epsilon_n,\label{eq:r2ebound1} \\
R_{e2}&\leq & \sum_{i=1}^nI(W_2;Y_{2i}|Y_1^{i-1}\tilde{Y}_2^{i+1}W_0W_1)
-\sum_{i=1}^nI(W_2;Y_{1i}|Y_1^{i-1}\tilde{Y}_2^{i+1}W_0W_1)+2n\epsilon_n. \label{eq:r2ebound2}
\eqa

Let us introduce a random variable $J$,
independent of $W_0W_1W_2X^nY_1^nY_2^n$, uniformly distributed
over $\{1, \cdots, n\}$. Set
\[
\begin{array}{lll} U\triangleq
W_0Y_1^{J-1}\tilde{Y_2}^{J+1}J, & V_1\triangleq W_1U, &V_2\triangleq
W_2U,  \\ X\triangleq X_J, & Y_1\triangleq Y_{1J}, &
Y_2\triangleq Y_{2J}.
\end{array}
\]
Substituting these definitions into
Eqs.~(\ref{eq:r0bound}), (\ref{eq:r0r1bound}), (\ref{eq:r0r2bound}), 
(\ref{eq:r012bound1}, (\ref{eq:r012bound2}),
and (\ref{eq:r1ebound1})-(\ref{eq:r2ebound2}), we obtain, through
standard information theoretic argument, the desired bounds as in
Eqs.~(\ref{eq:ob1})-(\ref{eq:ob9}). The memoryless property of
the channel guarantees $U\rightarrow V_1V_2 \rightarrow X
\rightarrow Y_1Y_2$. This completes the proof.

To prove $\Rmat_{O3}$ is also an outer bound, we follow exactly the
same procedure except that auxiliary random variables are defined
differently. Specifically,
\[
\begin{array}{lll} U\triangleq W_0Y_1^{J-1}\tilde{Y}_2^{J+1}J, &
V_1\triangleq W_1, &V_2\triangleq W_2. \end{array}\]

\section{Proof of Proposition \ref{prop_obbc}}
\label{sec:proofofprop12}

 By simple algebra, one can show $\Rmat_{BC-O3}\subseteq \Rmat_{NE}$.
The fact that $\Rmat_{BC-O3} = \Rmat_{NE}$ when $R_0=0$ can also be verified
by direct substitution.

We now prove the equivalence under $R_2=0$, and the case for $R_1=0$ can
be established by index swapping. With $R_2=0$,
Eqs.~(\ref{eq:BC1})-(\ref{eq:BC5}) of $\Rmat_{BC-O3}$ can be easily shown
to be equivalent to
\bqa
\label{eq:them61}R_0 &\leq& \min[I(U;Y_1),I(U;Y_2)],\\
\label{eq:them62}R_0 + R_1 &\leq& I(V_1;Y_1|U)+\min[I(U;Y_1),I(U;Y_2)],
\eqa 
We note this is precisely the capacity region for DMBC with degraded message set 
\cite[Corollary 5]{Csiszar&Korner:78IT}.

With $R_2=0$, $\Rmat_{NE}$ in Proposition \ref{prop:BC-NE} reduces to 
\bqa
R_0 &\leq& \min[I(U;Y_1),I(U;Y_2)], \label{eq:NE-1}\\
R_0 + R_1 &\leq& I(V_1 U;Y_1),\label{eq:NE-2}\\
R_0 + R_1 &\leq& I(V_1;Y_1|V_2U)+I(UV_2;Y_2). \label{eq:NE-3} 
\eqa

Apparently $\Rmat_{BC-O3}  \subseteq \Rmat_{NE}$, and it remains
 to check $\Rmat_{NE}  \subseteq \Rmat_{BC-O3}$. 
Assume $(R_0,R_1)\in \Rmat_{NE}$ and $(U,V_1,V_2,X,Y_1,Y_2)\in \Qmat_3$
are the variables such that Eqs.~(\ref{eq:NE-1})-(\ref{eq:NE-3}) are satisfied. Consider 
three cases for analysis. 
\be 
\item $I(U;Y_1) \leq I(U;Y_2)$. The proof of $(R_0,R_1)\in \Rmat_{BC-03}$ is trivial. 
\item $I(U;Y_1) \geq I(U;Y_2)$ and $I(V_2, U;Y_1)\geq I(V_2, U; Y_2)$. 

Define $V'_1=V_1, U'= UV_2$. From (\ref{eq:NE-1}), 
\bqa
R_0 &\leq& \min [I(U;Y_1),I(U;Y_2)] \\
& \leq &  \min[I(UV_2; Y_1),I(UV_2;Y_2)]\\
&=& \min [I(U';Y_1),I(U';Y_2)]
\eqa
From (\ref{eq:NE-3}), 
\bqa
R_0 + R_1 &\leq& I(V_1;Y_1|U V_2)+
I(UV_2; Y_2)\\
&=&I(V'_1;Y_1|U')+I(U';Y_2)
\eqa 
Thus $(R_0,R_2)$ also satisfies (\ref{eq:them61}) and (\ref{eq:them62}) for
$U'V_1'\rightarrow X\rightarrow Y_1Y_2$.
\item $I(U;Y_1) \geq I(U;Y_2)$ and $I(V_2, U;Y_1)\leq I(V_2, U; Y_2)$. 

For this case, we can always find a function $g(\cdot)$ such that 
\beq 
I(Ug(V_2);Y_1) = I(Ug(V_2);Y_2). 
\eeq
Define $V'_1=V_1, U'= Ug(V_2)$ and we can verify that $(R_0,R_1)$ satisfies 
(\ref{eq:them61}) and (\ref{eq:them62}) for $U'V_1'\rightarrow X\rightarrow Y_1Y_2$. 
\ee 
The above argument completes the proof of Proposition \ref{prop_obbc}. 


\bibliographystyle{\HOME/paper/tex/IEEEbib}
\bibliography{\HOME/paper/Reference/Journal,\HOME/paper/Reference/Conf,\HOME/paper/Reference/Misc,\HOME/paper/Reference/Book}

  \end{document}